\newcommand{\ed}{\color{black}}
\newcommand{\old}[1]{\ifx\showold\undefined\else\color{gray}\ \sout{#1}\ \ed\fi}
\newcommand{\R}{{\mathbb R}}
\newcommand{\N}{{\mathbb N}}
\newcommand{\Z}{{\mathbb Z}}
\renewcommand{\C}{{\mathbb C}}
\newcommand{\cC}{\mathcal{C}}
\newcommand{\cF}{\mathcal{F}}
\newcommand{\cG}{\mathcal{G}}
\newcommand{\cH}{\mathcal{H}}
\newcommand{\cM}{\mathcal{M}}
\renewcommand{\vec}[1]{{\mathbf{#1}}}
\newcommand{\pol}{\mathtt{Pol}}
\newcommand{\class}{{\cC}}
\newcommand{\im}{\operatorname{Im}}
\newcommand{\re}{\operatorname{Re}}
\newcommand{\out}{{\operatorname{out}}}
\newcommand{\inn}{{\operatorname{in}}}
\newcommand{\id}{\operatorname{id}}
\newcommand{\sk}[1]{\left\langle #1\right\rangle}
\newcommand{\causal}{\operatorname{Causal}}
\newcommand{\past}{\operatorname{Past}}
\newcommand{\konv}[1]{\stackrel{#1}{\longrightarrow}}
\newcommand{\supp}{\operatorname{supp}}
\newcommand{\selfmaps}{\righttoleftarrow}
\newcommand{\CSigma}{{\class_\Sigma}}
\newcommand{\CA}{{\cC_A}}
\newcommand{\HSigma}{\cH_\Sigma}
\newcommand{\HA}{\cH_A}
\newcommand{\Vmax}{{V_{\operatorname{max}}}}
\newcommand{\GammaMax}{{\Gamma_{\operatorname{max}}}}
\newcommand{\cD }{C_D}
\newtheorem{fact}{Fact}[section]
\newtheorem{theorem}[fact]{Theorem}
\newtheorem{corollary}[fact]{Corollary}
\newtheorem{definition}[fact]{Definition}
\newtheorem{lemma}[fact]{Lemma}
\newtheorem{remark}[fact]{Remark}
\begin{document}

\title{\textsc{External Field QED on Cauchy Surfaces\\ for varying Electromagnetic Fields}}

\author{D.-A. Deckert and F. Merkl\\
    \small Mathematisches Institut der Ludwig-Maximilians-Universit\"at M\"unchen\\
    \small Theresienstr. 39, 80333 M\"unchen, Germany\\
    \small deckert@math.lmu.de, merkl@math.lmu.de
}

\maketitle

\begin{abstract}
    The Shale-Stinespring Theorem (1965) together with Ruijsenaar's criterion
    (1977) provide a necessary and sufficient condition for the implementability
    of the evolution of external field quantum electrodynamics between
    constant-time hyperplanes on standard Fock space.  The assertion states that
    an implementation is possible if and only if the spatial components of the
    external electromagnetic four-vector potential $A_\mu$ are zero. We
    generalize this result to smooth, space-like Cauchy surfaces and, for
    general $A_\mu$, show how the second-quantized Dirac evolution can always be
    implemented as a map between varying Fock spaces.  Furthermore, we give
    equivalence classes of polarizations, including an explicit representative,
    that give rise to those admissible Fock spaces.  We prove that the
    polarization classes only depend on the tangential components of $A_\mu$
    w.r.t.\ the particular Cauchy surface, and show that they behave naturally
    under Lorentz and gauge transformations.
\end{abstract}

\section{Introduction and Setup}

We consider the external field model of quantum electrodynamics (QED) or
no-photon QED which describes a Dirac sea of electrons evolving subject to a
prescribed external electromagnetic four-vector 
potential $A_\mu$. To infer the evolution operator
of this model one attempts to implement the one-particle Dirac evolution 
\begin{align}
    \label{eq:dirac-equation}
    (i\slashed\partial-\slashed A) \psi=m\psi
\end{align}
in second-quantized form. Here, $m>0$ denotes the mass of the electron; the
elementary charge of the electron $e$ (having a negative sign in the case of an
electron) is already absorbed in $A$; units are chosen such that $\hbar=1$
and $c=1$. The employed
relativistic notation is introduced with all other notations in
Section~\ref{sec:preliminaries}.  For sake of simplicity we will restrict us to
smooth and compactly supported $A_\mu$, i.e.,
\begin{align}
    \label{def-A}
    A=(A_{\mu})_{\mu=0,1,2,3}=(A_0,\vec A)
    \in C^{\infty}_{c}(\mathbb
    R^{4},\mathbb R^{4}),
\end{align} 
although this condition is unnecessarily strong. 

It is well-known \cite{Shale1965,Ruijsenaars1977a} that, on standard Fock space
and for equal-time hyperplanes, a second quantization of the one particle Dirac
evolution \eqref{eq:dirac-equation} is possible if and only if $\vec A=0$, i.e.,
the spatial components of the external field vanish -- a condition that appears
strange in view of gauge invariance. In physics the ill-definedness of the
evolution operator and its generator for general vector potentials $A$ is
usually ignored at first which later manifests itself in the appearance of
infinities in informal perturbation series. Those infinities have to be taken
out by hand or, as for example in the case of the vacuum expectation value of
the charge current, absorbed in the coefficient of the electron charge.
Nevertheless, since the sole interaction arises only from a prescribed
four-vector field one may rather expect that it should be possible to control
the time evolution non-perturbatively.  One way to construct a well-defined
second-quantized time evolution operator, as sketched in \cite{Fierz1979}, is to
implement it between time-varying Fock spaces.  Such constructions have been
carried out, e.g., in \cite{Langmann1996,Mickelsson1998,Deckert2010a}. 
While the
idea of changing Fock spaces might be unfamiliar as seen from the
non-relativistic setting, in a relativistic formulation it is to be expected. A
Lorentz boost for instance may tilt an equal-time hyperplane to a space-like
space-like hyperplane $\Sigma$, which requires a change from standard Hilbert
space $L^2(\R^3,\C^4)$ to one attached to $\Sigma$, and likewise, for the
corresponding Fock spaces. 

In this work we extend the existing constructions in
\cite{Langmann1996,Mickelsson1998,Deckert2010a}, which deal exclusively with
equal-time hyperplanes, by implementing the second-quantized Dirac evolution
from one Cauchy surface to another.  The
resulting formulation of external field QED has several advantages:
1) Its Lorentz and gauge covariance can be made explicit; 2) as it treats
the initial value problem for general Cauchy surfaces it allows to study
the evolution in the form of local deformations of Cauchy surfaces in the spirit
of Tomonaga and Schwinger, e.g., \cite{Tomonaga1946,Schwinger1951a}; 3) it gives
a geometric and more general version of the implementability condition $\vec
A=0$ that was found in the special case of equal-time hyperplanes.

Before presenting our main results in Section~\ref{sec:main} we outline the
construction of the evolution operator for general space-like Cauchy surfaces.
Given a Cauchy surface $\Sigma$ in Minkowski space-time (see
Definition~\ref{def:cauchy-surface} below), the states of the Dirac sea on
$\Sigma$ are represented by vectors in a conveniently chosen Fock space, here,
denoted by the symbol $\cF(V,\HSigma)$. In this notation $\HSigma$ is the
Hilbert space of $\C^4$-valued, square integrable functions on $\Sigma$ (see
Definition~\ref{def:HSigma} below) and $V\in\pol(\HSigma)$ is one of its
polarizations:
\begin{definition}
    Let $\pol(\HSigma)$ denote the set of all closed, linear subspaces
    $V\subset \HSigma$ such that $V$ and $V^\perp$ are both infinite
    dimensional. Any $V\in \pol(\HSigma)$ is called a \emph{polarization} of
    $\HSigma$.  For $V\in \pol(\HSigma)$, let $P_\Sigma^V:\HSigma\to V$
    denote the orthogonal projection of $\HSigma$ onto $V$.
\end{definition}
The Fock space corresponding to polarization $V$ on Cauchy surface $\Sigma$ is
then defined by
\begin{align}
    \label{eq:Fock-space}
    \cF(V,\HSigma) 
    :=
    \bigoplus_{c\in\Z} \cF_c(V,\HSigma),
    \qquad
    \cF_c(V,\HSigma) 
    :=
    \bigoplus_{\substack{n,m\in\N_0\\c=m-n}} (V^\perp)^{\wedge n} \otimes \overline{V}^{\wedge m},
\end{align}
where $\bigoplus$ denotes the Hilbert space direct sum, $\wedge$ the
antisymmetric tensor product of Hilbert spaces, and $\overline{V}$ denotes the
conjugate complex vector space of $V$, which coincides with $V$ as a set and
has the same vector space operations as
$V$ with the exception of the scalar multiplication, which is redefined by
$(z,\psi)\mapsto z^* \psi$ for $z\in\C$, $\psi\in V$.

Each polarization $V$ splits the Hilbert space $\HSigma$ into a direct sum, i.e., 
$\HSigma=V^\perp\oplus V$. The so-called standard polarizations
$\HSigma^+$ and $\HSigma^-$ are
determined by the orthogonal projectors $P_\Sigma^+$ and $P_\Sigma^-$ 
onto the free positive and negative energy Dirac solutions, respectively, restricted to
$\Sigma$:
\begin{align}
    \HSigma^+ := P_\Sigma^+ \HSigma = (1-P_\Sigma^-)\HSigma,
    \qquad
    \HSigma^- := P_\Sigma^- \HSigma.
\end{align}
Loosely speaking, in terms of Dirac's hole theory, the polarization
$V\in\pol(\HSigma)$ indicates the ``sea level'' of the Dirac sea, and
electron wave functions in $V^\perp$ and $V$ are considered to be
``above'' and ``below'' sea level, respectively. However, it has to be stressed
that the mathematical structure of the external field problem in QED does not
seem to discriminate between particular choices of polarizations $V$.  Hence,
unless an additional physical condition is delivered, the $V$-dependent labels
``electron'' and ``positron'' are somewhat arbitrary, and $V$ should rather be
regarded as a choice of coordinate system w.r.t.\ which the states of the Dirac
sea are represented. To describe pair-creation on the other hand it is necessary
to have a distinguished $V$, and the common (and seemingly most natural) ad hoc
choice in situations when the external field vanishes is $V=\HSigma^-$.
Nevertheless, it is conceivable that only a yet to be found full version of QED,
including the interaction with the photon field, may distinguish particular
polarizations $V$ in general situations.

Given two Cauchy surfaces $\Sigma,\Sigma'$ and two polarizations
$V\in\pol(\HSigma)$ and
$W\in\pol(\cH_{\Sigma'})$ a sensible lift of the one-particle Dirac evolution
$U_{\Sigma'\Sigma}^A:\HSigma\to\cH_{\Sigma'}$ (see Definition~\ref{def:UA})
should be given by a unitary operator
$\widetilde U_{\Sigma'\Sigma}^A:\cF(V,\HSigma)\to\cF(W,\cH_{\Sigma'})$ that fulfills
\begin{align} 
    \widetilde U_{\Sigma'\Sigma}^A \, \psi_{V,\Sigma}(f) \, (\widetilde
    U_{\Sigma'\Sigma}^A)^{-1} =
    \psi_{W,\Sigma'}(U_{\Sigma'\Sigma}^Af),
    \qquad
    \forall\,f\in\HSigma.
    \label{eq:lift-condition}
\end{align}
Here, $\psi_{V,\Sigma}$ denotes the Dirac field operator corresponding to Fock space
$\cF(V,\Sigma)$, i.e.,
\begin{align}
    \psi_{V,\Sigma}(f):=b_\Sigma(P_\Sigma^{V^\perp}f) +
    d_\Sigma^*(P_\Sigma^V f),
    \qquad
    \forall \, f\in \HSigma.
\end{align}
Here, $b_\Sigma, d^*_\Sigma$ denote the annihilation and creation operators on
the $V^\perp$ and $\overline V$ sectors of $\cF_c(V,\HSigma)$, respectively.
Note that $P_\Sigma^V:\cH\to\overline V$ is \emph{anti-linear}; thus,
$\psi_{V,\Sigma}(f)$ is anti-linear 
in its argument $f$.  The condition under which
such a lift $\widetilde U_{\Sigma'\Sigma}^A$ exists can be inferred from a straight-forward application of Shale
and Stinespring's well-known theorem \cite{Shale1965}:
\begin{theorem}[Shale-Stinespring]
    \label{thm:Shale-Stinespring}
    The following statements are equivalent:
    \begin{enumerate}
        \item There is a unitary operator $\widetilde
            U_{\Sigma\Sigma'}^A:\cF(V,\HSigma)\to\cF(W,\cH_{\Sigma'})$ which fulfills
            \eqref{eq:lift-condition}.
        \item The off-diagonals $P_{\Sigma'}^{ {W}^\perp}
            U_{\Sigma'\Sigma}^A P_\Sigma^V$ and
            $P_{\Sigma'}^W U_{\Sigma'\Sigma}^A
            P_\Sigma^{V^\perp}$ are Hilbert-Schmidt operators.
    \end{enumerate}
\end{theorem}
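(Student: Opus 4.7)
The plan is to reduce the two-Fock-space statement to the classical single-Fock-space Shale-Stinespring theorem by transporting the target Fock space onto the source via a polarization-preserving identification.

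Since $V,V^\perp,W,W^\perp$ are all separable infinite-dimensional Hilbert spaces, one can choose unitaries $W^\perp\to V^\perp$ and $W\to V$ and assemble them into a unitary $J:\cH_{\Sigma'}\to\HSigma$ with $J[W]=V$ and $J[W^\perp]=V^\perp$. Because $J$ respects the polarization splitting, it admits a canonical second quantization $\Gamma(J):\cF(W,\cH_{\Sigma'})\to\cF(V,\HSigma)$, defined by taking antisymmetric tensor powers sector by sector in \eqref{eq:Fock-space}, using the antilinear structure on $\overline W$. One verifies sector by sector on the Fock space that $\Gamma(J)$ is unitary and satisfies the intertwining relation $\Gamma(J)\,\psi_{W,\Sigma'}(g)\,\Gamma(J)^{-1}=\psi_{V,\Sigma}(Jg)$ for all $g\in\cH_{\Sigma'}$. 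Consequently, unitary lifts $\widetilde U_{\Sigma'\Sigma}^A$ satisfying \eqref{eq:lift-condition} correspond bijectively, via post-composition with $\Gamma(J)$, to Bogoliubov implementers on the single Fock space $\cF(V,\HSigma)$ of the one-particle unitary $T:=JU_{\Sigma'\Sigma}^A:\HSigma\to\HSigma$.

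The classical Shale-Stinespring theorem on this single Fock space then yields such an implementer of $T$ if and only if $P_\Sigma^{V^\perp}TP_\Sigma^V$ and $P_\Sigma^V TP_\Sigma^{V^\perp}$ are Hilbert-Schmidt. From $J[W]=V$ and $J[W^\perp]=V^\perp$ one obtains the intertwining relations $P_\Sigma^{V^\perp}J=JP_{\Sigma'}^{W^\perp}$ and $P_\Sigma^V J=JP_{\Sigma'}^W$, so these off-diagonals equal $J\,P_{\Sigma'}^{W^\perp}U_{\Sigma'\Sigma}^A P_\Sigma^V$ and $J\,P_{\Sigma'}^W U_{\Sigma'\Sigma}^A P_\Sigma^{V^\perp}$; since composition with the unitary $J$ leaves the Hilbert-Schmidt ideal invariant, this is exactly the criterion in the theorem. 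The only step that requires genuine bookkeeping rather than mere invocation of the classical result, and therefore the main potential source of error, is the intertwining identity for $\Gamma(J)$: one has to consistently track the antilinear convention on $\overline V$ and $\overline W$ so that creation operators on $\overline W$ map correctly to creation operators on $\overline V$ and annihilation operators on $W^\perp$ to annihilation operators on $V^\perp$. Once this is checked, the theorem is an immediate transcription of the classical Shale-Stinespring result.
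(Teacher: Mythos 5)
Your proposal is correct and is precisely the ``straight-forward application'' of the classical single-Fock-space Shale--Stinespring theorem that the paper alludes to without spelling out. The reduction via a polarization-preserving unitary $J:\cH_{\Sigma'}\to\HSigma$ with $J[W]=V$, $J[W^\perp]=V^\perp$, its sector-wise second quantization $\Gamma(J)$, and the resulting transport of the implementability question to the one-particle unitary $T=JU_{\Sigma'\Sigma}^A$ on a single Fock space is the standard argument; the bookkeeping you flag (that $J|_W$ induces a \emph{linear} map $\overline W\to\overline V$ because the anti-linear twist on conjugate spaces occurs on both sides, and that $JP_{\Sigma'}^W=P_\Sigma^V J$ as set maps since $J$ is unitary and preserves the splitting) does check out, and the off-diagonal Hilbert--Schmidt conditions are correctly seen to be invariant under composition with the unitary $J$.
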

Note that the phase of the lift is not fixed by condition
\eqref{eq:lift-condition}. Even worse, as indicated earlier, 
depending on the external field $A$ this condition is not always satisfied; see
\cite{Ruijsenaars1977a}. On the other hand, the choices made for the
polarizations $V$ and $W$ were completely arbitrary. We shall see next that
adapting these choices
carefully will however yield an evolution of the Dirac sea in the corresponding Fock space
representations.

There is a trivial but not so useful choice. Pick a $\Sigma_\inn$ in the
remote past of the support of $A$ fulfilling
\begin{align}
    \label{eq:Sigma_inn}
    \Sigma_\inn \text{ is a Cauchy surface such that }
    \supp A\cap\Sigma_\inn=\emptyset.
\end{align}
Then the choices $V=U_{\Sigma\Sigma_\inn}^A\cH^-_{\Sigma_\inn}$ and
$W=U_{\Sigma'\Sigma_\inn}^A\cH^-_{\Sigma_\inn}$ fulfill (b) of Theorem~\ref{thm:Shale-Stinespring} as
the off-diagonals are zero. The drawback of these choices is that the
resulting lift depends on the whole history of $A$ between $\Sigma_\inn$ and
$\Sigma,\Sigma'$. Moreover, such $V$ and $W$ are rather implicit.
But statement (b) in Theorem~\ref{thm:Shale-Stinespring} also allows to
differ from the projectors $P_\Sigma^V$ and $P_{\Sigma'}^W$ by a Hilbert-Schmidt
operator.  Hence, it lies near to characterize polarizations according to the
following classes:
\begin{definition}[Physical Polarization Classes]
    \label{def:pol-classes}
    For a Cauchy surface $\Sigma$ we define
    \begin{align}
        \label{eq:def-CA}
        \class_\Sigma(A)
        :=
        \left[ U_{\Sigma\Sigma_\inn}^A \cH_{\Sigma_\inn}^-
        \right]_{\approx},
    \end{align}
    where for $V, V'\in \pol(\HSigma)$, $V\approx V'$ means
    that $P_\Sigma^V-P_\Sigma^{V'}\in I_2(\HSigma)$, i.e., is a Hilbert-Schmidt
    operator $\HSigma\selfmaps$.
\end{definition}
The equivalence relation $\approx$ can be refined to give another equivalence
relation $\approx_0$ describing polarization classes of equal charge; c.f. 
\cite{Deckert2010a} and Remark~\ref{rem:approx_0}.  As a simple corollary of Theorem~\ref{thm:Shale-Stinespring}
one gets:
\begin{corollary}[Dirac Sea Evolution]
    \label{thm:evolution-lift}
    Let $\Sigma,\Sigma'$ be Cauchy surfaces. Then any choice
    $V\in \class_\Sigma(A)$ and $W\in \class_{\Sigma'}(A)$ implies
    condition (b) of
    Theorem~\ref{thm:Shale-Stinespring} and therefore
    the existence of a lift
    $\widetilde U_{\Sigma'\Sigma}^A:\cF(V,\HSigma)\to\cF(W,\cH_{\Sigma'})$ 
    obeying \eqref{eq:lift-condition}.
\end{corollary}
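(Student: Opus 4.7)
The plan is to verify condition (b) of Theorem~\ref{thm:Shale-Stinespring} by reducing to the distinguished reference polarizations $V_0 := U_{\Sigma\Sigma_\inn}^A \cH_{\Sigma_\inn}^-$ and $W_0 := U_{\Sigma'\Sigma_\inn}^A \cH_{\Sigma_\inn}^-$ from \eqref{eq:Sigma_inn}---for which the off-diagonals vanish outright---and then to extend the result to arbitrary $V\in\class_\Sigma(A)$ and $W\in\class_{\Sigma'}(A)$ by exploiting that $I_2$ is a two-sided ideal.

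First I would invoke the cocycle identity $U_{\Sigma'\Sigma_\inn}^A = U_{\Sigma'\Sigma}^A \, U_{\Sigma\Sigma_\inn}^A$, which is either built into Definition~\ref{def:UA} or follows from uniqueness of solutions to the Cauchy problem for \eqref{eq:dirac-equation}. Applying it to $\cH_{\Sigma_\inn}^-$ yields $U_{\Sigma'\Sigma}^A V_0 = W_0$, and by unitarity of $U_{\Sigma'\Sigma}^A$ also $U_{\Sigma'\Sigma}^A V_0^\perp = W_0^\perp$; consequently
\begin{align*}
    P_{\Sigma'}^{W_0^\perp} \, U_{\Sigma'\Sigma}^A \, P_\Sigma^{V_0} = 0,
    \qquad
    P_{\Sigma'}^{W_0} \, U_{\Sigma'\Sigma}^A \, P_\Sigma^{V_0^\perp} = 0,
\end{align*}
so condition (b) holds trivially for the reference pair $(V_0,W_0)$. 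Now, for arbitrary $V\approx V_0$ and $W\approx W_0$, set $\Delta_V := P_\Sigma^V - P_\Sigma^{V_0}$ and $\Delta_W := P_{\Sigma'}^W - P_{\Sigma'}^{W_0}$; both lie in $I_2$ by Definition~\ref{def:pol-classes}, as does $P_{\Sigma'}^{W^\perp} - P_{\Sigma'}^{W_0^\perp} = -\Delta_W$. A direct add-and-subtract manipulation yields
\begin{align*}
    P_{\Sigma'}^{W^\perp} \, U_{\Sigma'\Sigma}^A \, P_\Sigma^V
    = P_{\Sigma'}^{W_0^\perp} \, U_{\Sigma'\Sigma}^A \, P_\Sigma^{V_0}
    \;-\; \Delta_W \, U_{\Sigma'\Sigma}^A \, P_\Sigma^V
    \;+\; P_{\Sigma'}^{W_0^\perp} \, U_{\Sigma'\Sigma}^A \, \Delta_V,
\end{align*}
whose first summand vanishes by the previous step and whose remaining two summands belong to $I_2$ because $I_2$ is a two-sided ideal in the bounded operators, $U_{\Sigma'\Sigma}^A$ is unitary, and the projections are contractive. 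An analogous decomposition handles the other off-diagonal $P_{\Sigma'}^W \, U_{\Sigma'\Sigma}^A \, P_\Sigma^{V^\perp}$, whereupon Theorem~\ref{thm:Shale-Stinespring} delivers the desired lift $\widetilde U_{\Sigma'\Sigma}^A$.

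The only genuinely non-trivial input is the cocycle property of $U^A$ on varying smooth space-like Cauchy surfaces, which rests on well-posedness and uniqueness of the Dirac Cauchy problem on such surfaces; once this is granted (via Definition~\ref{def:UA}), the remainder is an elementary computation inside the Hilbert-Schmidt ideal.
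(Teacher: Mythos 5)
Your proof is correct and follows exactly the route the paper intends (the paper explicitly points out that for the ``interpolating'' pair $V_0,W_0$ the off-diagonals vanish, and that the class $\approx$ only changes the projectors by Hilbert-Schmidt operators, so the corollary follows by the two-sided-ideal property of $I_2$). The cocycle identity $U_{\Sigma'\Sigma}^A U_{\Sigma\Sigma_\inn}^A = U_{\Sigma'\Sigma_\inn}^A$ is indeed immediate from Definition~\ref{def:UA}, and your add-and-subtract decomposition is the standard argument; nothing is missing.
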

Consequently, any choice $V\in \class_\Sigma(A)$ and $W\in \class_{\Sigma'}(A)$
gives rise to a lift of the one-particle Dirac evolution between the
corresponding $\cF(V,\HSigma)$ and $\cF(W,\cH_{\Sigma'})$ that is unique up to a
phase. The crucial questions are: 1) On which 
properties of $A$ and $\Sigma$ do these polarization classes
depend? 2) How do they behave under Lorentz and gauge transforms? 3) Is there an
explicit representative for each class? These question will be answered by our
main results given in the next section. The next important question is about the
unidentified phase. Although transition probabilities are independent
of this phase, dynamic quantities like the charge current will depend directly on
it.  We briefly discuss this in Section~\ref{sec:outlook} and give an
outlook of what needs to be done to derive the vacuum expectation of the polarization current.

\subsection{Main Results}
\label{sec:main}

The definition~\eqref{eq:def-CA} of the physical polarization classes
involves the one-particle Dirac evolution operator
and is therefore not very useful in finding 
an explicit description of admissible Fock spaces for the
implementation of the second-quantized Dirac evolution.  In our main results
Theorems~\ref{thm:ident-pol-class}-\ref{thm:representative} we give a more
direct identification of the polarization classes classes $\class_\Sigma(A)$ and
state some of their fundamental geometric properties. 

The first one ensures that
the classes $\class_\Sigma(A)$ are independent of the history of $A$, instead
they depend on the tangential components of $A$ on $\Sigma$ only.
\begin{theorem}[Identification of Polarization Classes]
    \label{thm:ident-pol-class}
    Let $\Sigma$ be a Cauchy surface and let $A$ and $\widetilde A$ be two smooth
    and compactly supported
    external fields. Then
    \begin{align}
        \class_\Sigma(A)=\class_\Sigma(\widetilde A)
        \qquad
        \Leftrightarrow
        \qquad
        A|_{T\Sigma}=\widetilde A|_{T\Sigma}
    \end{align}
    where $A|_{T\Sigma}=\widetilde A|_{T\Sigma}$ means that for all $x\in\Sigma$
    and  $y\in T_x\Sigma$ we have $A_\mu(x)y^\mu = \widetilde A_\mu(x)y^\mu$.
\end{theorem}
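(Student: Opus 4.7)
The plan is to reduce the equivalence $\class_\Sigma(A) = \class_\Sigma(\widetilde A)$ to a Shale--Stinespring-type Hilbert--Schmidt commutator condition, then settle the two implications by a gauge reduction plus a Duhamel expansion for $(\Leftarrow)$ and a Cauchy-surface version of Ruijsenaars' obstruction for $(\Rightarrow)$.

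For the sufficient direction $(\Leftarrow)$: pick $\Sigma_\inn$ in the past of $\supp A \cup \supp \widetilde A$, set $V_A := U^A_{\Sigma\Sigma_\inn}\cH^-_{\Sigma_\inn}$ and analogously $V_{\widetilde A}$. Unfolding Definition~\ref{def:pol-classes} and conjugating by $U^{\widetilde A}_{\Sigma_\inn\Sigma}$, the equivalence $\class_\Sigma(A) = \class_\Sigma(\widetilde A)$ is equivalent to
\[ [W,\,P^-_{\Sigma_\inn}]\in I_2(\cH_{\Sigma_\inn}), \qquad W := U^{\widetilde A}_{\Sigma_\inn\Sigma}\,U^A_{\Sigma\Sigma_\inn}. \]
Set $B := A - \widetilde A$, so $B|_{T\Sigma} = 0$ by hypothesis; hence $B|_\Sigma = f\cdot n^*$ for some $f \in C^\infty(\Sigma,\R)$ and conormal 1-form $n^*$. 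Using a collar coordinate $(x,s)$ around $\Sigma$, construct $\chi \in C_c^\infty(\R^4,\R)$ with $\chi|_{\Sigma \cup \Sigma_\inn} \equiv 0$ and $\partial_n \chi|_\Sigma = f$ (take $\chi(x,s) = s\cdot\tilde f(x)\cdot\kappa(s)$ for a smooth extension $\tilde f$ of $f$ and a cutoff $\kappa$). Gauge covariance of the Dirac evolution gives
\[ U^{\widetilde A + d\chi}_{\Sigma\Sigma_\inn} = e^{i\chi|_\Sigma}\,U^{\widetilde A}_{\Sigma\Sigma_\inn}\,e^{-i\chi|_{\Sigma_\inn}} = U^{\widetilde A}_{\Sigma\Sigma_\inn}, \]
so $V_{\widetilde A + d\chi} = V_{\widetilde A}$; replacing $\widetilde A$ by $\widetilde A + d\chi$ we may assume $B$ vanishes as a 1-form along $\Sigma$. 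Now expand $W - I$ via Duhamel along a smooth foliation $\{\Sigma_\tau\}_{\tau\in[0,1]}$ of the region between $\Sigma_\inn$ and $\Sigma$:
\[ W - I = -i\int_0^1 U^{\widetilde A}_{\Sigma_\inn\Sigma_\tau}\,\slashed{B}|_{\Sigma_\tau}\,U^A_{\Sigma_\tau\Sigma_\inn}\,d\tau. \]
The commutator $[W,P^-_{\Sigma_\inn}]$ then reduces to integrated commutators of $P^-_{\Sigma_\tau}$ with smooth multiplication operators $\slashed{B}|_{\Sigma_\tau}$; the pseudodifferential calculus for $P^-_\Sigma$ (order zero, smooth symbol) gives each such commutator as Hilbert--Schmidt, and the vanishing $B|_\Sigma = 0$ provides the endpoint regularity at $\tau = 1$.

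For the necessary direction $(\Rightarrow)$, argue contrapositively: if $A|_{T\Sigma}\neq\widetilde A|_{T\Sigma}$, then $B|_{T\Sigma}$ is a nonzero tangential 1-form that no gauge with $\chi|_\Sigma = 0$ can remove. Running the same Duhamel expansion, the dominant $\tau\to 1$ contribution involves $[P^-_\Sigma,\,\slashed{B}|_{T\Sigma}]$, whose singular values are not square-summable by the Cauchy-surface generalization of the Ruijsenaars obstruction. Therefore $[W, P^-_{\Sigma_\inn}]\notin I_2$ and the classes differ.

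The main obstacle is the pseudodifferential commutator analysis of $P^-_\Sigma$ on an arbitrary smooth spacelike Cauchy surface: one must show that smooth-multiplication commutators with $P^-_\Sigma$ are Hilbert--Schmidt while nontrivial tangential multiplications reproduce Ruijsenaars' non-HS obstruction. This extends \cite{Ruijsenaars1977a} beyond equal-time hyperplanes and constitutes the substantive preparatory work underpinning both directions above.
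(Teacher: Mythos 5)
Your reduction of $\class_\Sigma(A)=\class_\Sigma(\widetilde A)$ to the Hilbert--Schmidt condition on $[W,P^-_{\Sigma_\inn}]$ is correct (conjugating the interpolating projectors is exactly the mechanism behind the paper's key identity \eqref{eq:key-prop}), and the idea of gauging away the normal component of $B=A-\widetilde A$ along $\Sigma$ is sound and parallels the role gauge invariance plays in the paper. But the central analytic claim in your sufficiency argument is false, and the gap is not cosmetic.

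You assert that ``the pseudodifferential calculus for $P^-_\Sigma$ (order zero, smooth symbol) gives each such commutator as Hilbert--Schmidt.'' In $3+1$ dimensions this is precisely wrong: $[P^-_\Sigma, M]$ for a smooth compactly supported multiplier $M$ is a $\Psi\mathrm{DO}$ of order $-1$ on a three--dimensional manifold, which is generically \emph{not} Hilbert--Schmidt (one needs order $<-3/2$). That failure is exactly Ruijsenaars' obstruction, and it is the reason this theorem is nontrivial. Your own necessity argument invokes this non-HS phenomenon, contradicting the blanket HS claim in the sufficiency half. Moreover, the gauge reduction only makes $B$ vanish \emph{on} $\Sigma$, i.e.\ it kills the $\tau=1$ endpoint; for every $0<\tau<1$ the intermediate restriction $B|_{T\Sigma_\tau}$ retains generic tangential components, so the Duhamel integrand is a non-HS operator at a set of full measure. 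An integral of non-HS operators is not HS unless a precise cancellation mechanism is identified, and the proposal offers none. There is also a subtler circularity: in your Duhamel expansion the commutator with $P^-_{\Sigma_\inn}$ produces, after conjugation by $U^A_{\Sigma_\tau\Sigma_\inn}$ or $U^{\widetilde A}_{\Sigma_\tau\Sigma_\inn}$, not $P^-_{\Sigma_\tau}$ but the \emph{interpolated} projector $U^A_{\Sigma_\tau\Sigma_\inn}P^-_{\Sigma_\inn}U^A_{\Sigma_\inn\Sigma_\tau}$; replacing this by $P^-_{\Sigma_\tau}$ up to HS error is essentially the statement you are trying to prove.

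The paper circumvents all of this by introducing the modified projectors $P^\lambda_\Sigma$ with local gauge phase $e^{-i\lambda(x,y)}$ and, crucially, the helper operator $S^A_\Sigma$ with kernel $\frac{1}{8m}\slashed n\slashed E\, r^2\slashed\partial D$. The $\tau$-derivative of the conjugated $P^\lambda_\Sigma$ alone is \emph{not} HS (for the same pseudodifferential reason above); Theorem~\ref{thm:inf} shows that adding $\dot S^A_\Sigma$ cancels the singular part so that the remainder $R(t)$ is uniformly HS. This cancellation is the substantive content of the argument and is precisely what is missing from the proposal. The necessity direction in your plan is likewise not yet an argument: saying the ``dominant $\tau\to1$ contribution'' reproduces the Ruijsenaars obstruction is vague, since $\tau=1$ has measure zero in the Duhamel integral; the paper instead gives an explicit pointwise lower bound on the kernel of $P^\lambda_\Sigma-P^{\widetilde\lambda}_\Sigma$ (via the lower bound on $\|p^-\|$ in the appendix) to conclude non-HS directly. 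So both implications require ingredients your outline does not supply.
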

Ruijsenaar's result, see \cite{Ruijsenaars1977a},
may be viewed as the special case of this theorem pertaining to
$\widetilde A=0$ and, for $t$ fixed, $\Sigma=\Sigma_t=\{x\in\R^4|\;x^0=t\}$ 
being an equal-time hyperplane.  

Furthermore, the polarization classes transform naturally under Lorentz and
gauge transformations:
\begin{theorem}[Lorentz and Gauge Transforms]
    \label{thm:lorentz-gauge}
    Let $V\in\pol(\HSigma)$ be a polarization.
    \begin{enumerate}[(i)]
        \item Consider a \emph{Lorentz transformation}
            given by $L^{(S,\Lambda)}_\Sigma:\HSigma\to\cH_{\Lambda\Sigma}$
            for a spinor transformation matrix $S\in \C^{4\times 4}$
            and an associated proper orthochronous Lorentz transformation
            matrix $\Lambda\in \operatorname{SO}^\uparrow(1,3)$, cf.
            \cite[Section 2.3]{Deckert2014}.
            Then:
            \begin{align}
                V\in\class_\Sigma(A)
                \qquad
                \Leftrightarrow
                \qquad
                L^{(S,\Lambda)}_\Sigma V\in
                \class_{\Lambda\Sigma}(\Lambda A(\Lambda^{-1}\cdot)).
            \end{align}
        \item 
            Consider a \emph{gauge transformation}
            $A\mapsto A+\partial \Omega$ for some 
            $\Omega\in C^\infty_c(\R^4,\R)$
            given by the multiplication operator
            $e^{-i\Omega}:\HSigma\to\HSigma$,
            $\psi\mapsto \psi'=e^{-i\Omega}\psi$.
            Then:
            \begin{align}
                V\in\class_\Sigma(A)
                \qquad
                \Leftrightarrow
                \qquad
                e^{-i\Omega}V\in
                \class_\Sigma(A+\partial\Omega).
            \end{align}
    \end{enumerate}
\end{theorem}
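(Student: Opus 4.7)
The plan is to reduce both statements to a single observation: each transformation carries a distinguished representative of the left-hand polarization class to a distinguished representative of the right-hand one via a unitary operator, so that the equivalence relation $\approx$, being preserved under unitary conjugation (which maps Hilbert--Schmidt operators to Hilbert--Schmidt operators), automatically transports one class into the other. I therefore unfold the definition $\class_\Sigma(A)=[U^A_{\Sigma\Sigma_\inn}\cH^-_{\Sigma_\inn}]_\approx$ and choose $\Sigma_\inn$ deep enough in the past that $\supp A$, and, depending on the part, $\supp\Omega$ or the Lorentz translate $\Lambda\supp A$, lies entirely in the future of $\Sigma_\inn$; then each relevant representative is available explicitly.

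For part (i), the key input is the Lorentz covariance of the one-particle Dirac evolution,
\begin{align*}
    L^{(S,\Lambda)}_{\Sigma'}\,U^A_{\Sigma'\Sigma} \;=\; U^{\Lambda A(\Lambda^{-1}\cdot)}_{\Lambda\Sigma',\Lambda\Sigma}\,L^{(S,\Lambda)}_\Sigma,
\end{align*}
which follows because $\psi$ solves $(i\slashed\partial-\slashed A)\psi=m\psi$ iff $S\psi(\Lambda^{-1}\cdot)$ solves the analogous equation with potential $\Lambda A(\Lambda^{-1}\cdot)$. On $\Sigma_\inn$, where the potential vanishes in a neighbourhood, the free positive/negative-energy splitting is invariant under proper orthochronous Lorentz transformations, hence $L^{(S,\Lambda)}_{\Sigma_\inn}\cH^-_{\Sigma_\inn}=\cH^-_{\Lambda\Sigma_\inn}$. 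Composing,
\begin{align*}
    L^{(S,\Lambda)}_\Sigma \bigl(U^A_{\Sigma\Sigma_\inn}\cH^-_{\Sigma_\inn}\bigr) \;=\; U^{\Lambda A(\Lambda^{-1}\cdot)}_{\Lambda\Sigma,\Lambda\Sigma_\inn}\,\cH^-_{\Lambda\Sigma_\inn},
\end{align*}
which is by definition a representative of $\class_{\Lambda\Sigma}(\Lambda A(\Lambda^{-1}\cdot))$. Unitarity of $L^{(S,\Lambda)}_\Sigma$ transports $\approx$, and the reverse direction is identical with $\Lambda^{-1}$ in place of $\Lambda$.

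Part (ii) proceeds analogously, with the easier feature that the Cauchy surface does not move. The standard gauge identity
\begin{align*}
    U^{A+\partial\Omega}_{\Sigma'\Sigma} \;=\; e^{-i\Omega|_{\Sigma'}}\,U^A_{\Sigma'\Sigma}\,e^{i\Omega|_\Sigma},
\end{align*}
combined with the choice of $\Sigma_\inn$ so far in the past that $\Omega|_{\Sigma_\inn}\equiv 0$, gives $U^{A+\partial\Omega}_{\Sigma\Sigma_\inn}\cH^-_{\Sigma_\inn}=e^{-i\Omega}\,U^A_{\Sigma\Sigma_\inn}\cH^-_{\Sigma_\inn}$. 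Since $e^{-i\Omega}$ acts as a unitary multiplication operator on $\HSigma$, it simultaneously sends a representative of $\class_\Sigma(A)$ to one of $\class_\Sigma(A+\partial\Omega)$ and preserves $\approx$, finishing (ii).

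The main obstacle I anticipate is not the chain of formal identities but the verification of $L^{(S,\Lambda)}_{\Sigma_\inn}\cH^-_{\Sigma_\inn}=\cH^-_{\Lambda\Sigma_\inn}$ on a general, possibly non-planar Cauchy surface $\Sigma_\inn$. The usual Fourier-space description of the negative-energy projector is intrinsic only on equal-time hyperplanes, so on curved $\Sigma_\inn$ one has to characterise $\cH^-_{\Sigma_\inn}$ as the restriction to $\Sigma_\inn$ of solutions of the free Dirac equation whose Cauchy data on some reference equal-time hyperplane lie in the negative-energy subspace, and then check that the Lorentz action on free solutions commutes with this characterisation. Granted this one covariance lemma, all remaining steps reduce to combining the intertwining identities above with unitarity of $L^{(S,\Lambda)}_\Sigma$ and $e^{-i\Omega}$.
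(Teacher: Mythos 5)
Your proof is correct, but it takes a genuinely different route from the paper. The paper reduces the statement, via Lemma~\ref{lem:pol-class-plambda}, to showing that the locally defined operator $P_\Sigma^\lambda$ (for $\lambda\in\cG(A)$) transforms covariantly: it computes the conjugated kernel $S\,p_\Sigma^{\lambda,\epsilon u}(\Lambda^{-1}x,\Lambda^{-1}y)\,S^*$, respectively $e^{-i\Omega(x)}p_\Sigma^{\lambda,\epsilon u}(x,y)\,e^{i\Omega(y)}$, and identifies the result as $p_{\Lambda\Sigma}^{\overline\lambda,\epsilon\Lambda u}$ with $\overline\lambda\in\cG(\Lambda A(\Lambda^{-1}\cdot))$, respectively $p_\Sigma^{\overline\lambda,\epsilon u}$ with $\overline\lambda\in\cG(A+\partial\Omega)$. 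You instead work directly with the defining representative $U^A_{\Sigma\Sigma_\inn}\cH^-_{\Sigma_\inn}$ and use the covariance of the one-particle evolution together with Lorentz invariance of the free energy splitting on $\Sigma_\inn$. Your route has the advantage of being independent of the whole $P_\Sigma^\lambda$-machinery (Lemma~\ref{lem:pol-class-plambda} rests on Theorem~\ref{thm:evo-polclass}, which is proved only at the end of Section~\ref{sec:evolution-polarization-classes}), and it makes the conceptual reason for covariance transparent. The paper's route buys an explicit, local description of how the polarization-class representative transforms, which integrates better with the proof of Theorem~\ref{thm:representative} that follows. Two small points you should make explicit if you write this up: first, the identity $L^{(S,\Lambda)}_{\Sigma_\inn}\cH^-_{\Sigma_\inn}=\cH^-_{\Lambda\Sigma_\inn}$ on a curved Cauchy surface (which you correctly flag as the crux) follows from the intertwining $L^{(S,\Lambda)}\circ U_{\Sigma_\inn\cM}^0=U_{\Lambda\Sigma_\inn\cM}^0\circ(\text{pushforward by }\Lambda)$ together with $\Lambda\cM_\pm=\cM_\pm$ for proper orthochronous $\Lambda$; second, since you use $\Lambda\Sigma_\inn$ as the in-surface on the right-hand side, you implicitly rely on the (easy) fact that $\class_{\Lambda\Sigma}$ is independent of the choice of admissible in-surface, which follows because any two such surfaces are connected by free evolution and $U^0$ preserves the energy splitting.
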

As we are mainly interested in a \emph{local} study of the second-quantized
Dirac evolution, we only allow compactly supported vector potentials $A$, and
therefore, have to restrict the gauge transformations $e^{-i \Omega}$ to
compactly supported $\Omega$ as well. Treating more general vector potentials
$A$ and gauge transforms $e^{-i \Omega}$ would require an analysis of decay properties at
infinity which is not our focus here.

Finally, given Cauchy surface $\Sigma$, there is an explicit representative of
the equivalence class of polarizations $\class_\Sigma(A)$ which can be given in
terms of a compact, skew-adjoint linear operator $Q^A_\Sigma:\HSigma\selfmaps$,
as defined in \eqref{eq:V_Sigma} below. With it the polarization class can be
identified as follows:
\begin{theorem}
    \label{thm:representative}
    Given Cauchy surface $\Sigma$, we have
    $\class_\Sigma(A)=\left[e^{Q_\Sigma^A}
    \cH_\Sigma^-\right]_{\approx}$.
\end{theorem}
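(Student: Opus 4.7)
The plan is to show that the subspace $V := e^{Q_\Sigma^A}\HSigma^-$ belongs to the class $\class_\Sigma(A)$, i.e.\ that $P_\Sigma^V - P_\Sigma^{V_0} \in I_2(\HSigma)$, where $V_0 = U_{\Sigma\Sigma_\inn}^A\cH_{\Sigma_\inn}^-$ is the defining representative. The first step is to bring $V_0$ back to $\HSigma$: since the free Dirac evolution preserves the standard polarization, $U_{\Sigma_\inn\Sigma}^0\HSigma^- = \cH_{\Sigma_\inn}^-$, and hence $V_0 = W\HSigma^-$ for the Møller-like unitary
\begin{align*}
    W := U_{\Sigma\Sigma_\inn}^A\, U_{\Sigma_\inn\Sigma}^0 : \HSigma \selfmaps.
\end{align*}
By the standard criterion underlying Shale--Stinespring (Theorem~\ref{thm:Shale-Stinespring}), the desired equivalence $e^{Q_\Sigma^A}\HSigma^- \approx W\HSigma^-$ reduces to showing that the off-diagonal blocks $P_\Sigma^+ (e^{-Q_\Sigma^A}W) P_\Sigma^-$ and $P_\Sigma^- (e^{-Q_\Sigma^A}W) P_\Sigma^+$ of the unitary $e^{-Q_\Sigma^A}W$ are Hilbert--Schmidt.

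Next, I would exploit Theorems \ref{thm:ident-pol-class} and \ref{thm:lorentz-gauge} to simplify the geometry and the form of $A$. Theorem \ref{thm:ident-pol-class} guarantees that both $\class_\Sigma(A)$ and, by construction in \eqref{eq:V_Sigma}, the operator $Q_\Sigma^A$ itself depend only on the tangential restriction $A|_{T\Sigma}$; this allows me to freely modify $A$ outside a tubular neighborhood of $\Sigma$ without changing either side. The gauge invariance of Theorem \ref{thm:lorentz-gauge}(ii) then lets me eliminate the component of $A$ normal to $\Sigma$ inside that neighborhood, reducing the analysis to a purely tangential potential. For such $A$ the interaction operator $W$ admits a Dyson--Duhamel expansion
\begin{align*}
    W = \id - i \int U^0\,\slashed{A}\,U^0\, d^4x + \text{(higher order in $A$)},
\end{align*}
in which only the term linear in $A$ can contribute non-Hilbert--Schmidt off-diagonal pieces, while each higher-order term is smoothed by an additional factor of the compactly supported, smooth $\slashed A$ and contributes only in $I_2(\HSigma)$.

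The heart of the argument is to verify that $Q_\Sigma^A$, as defined in \eqref{eq:V_Sigma}, is engineered to capture the singular off-diagonal content of the linear Duhamel term, so that
\begin{align*}
    P_\Sigma^\pm\bigl(e^{Q_\Sigma^A} - W\bigr) P_\Sigma^\mp \in I_2(\HSigma).
\end{align*}
I expect the main obstacle to lie precisely in this Hilbert--Schmidt matching: the blocks $P_\Sigma^\pm W_1 P_\Sigma^\mp$ are \emph{not} Hilbert--Schmidt individually — this is exactly Ruijsenaars' obstruction reflected in Theorem \ref{thm:ident-pol-class} — so the proof requires a careful momentum-space or microlocal analysis localizing the singularity of the free Dirac kernel on the mass shell, and then a term-by-term comparison with the off-diagonals of $e^{Q_\Sigma^A}$. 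Smoothness and compact support of $A|_{T\Sigma}$ are essential to push the difference into $I_2(\HSigma)$, and the compactness/skew-adjointness of $Q_\Sigma^A$ are needed to control the exponential series $e^{Q_\Sigma^A} = \id + Q_\Sigma^A + \tfrac12 (Q_\Sigma^A)^2 + \cdots$ so that only the first-order term carries the relevant non-Hilbert--Schmidt content.
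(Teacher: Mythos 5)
Your route is genuinely different from the paper's, and it leaves the hardest step unproven. The paper does not compare $e^{Q_\Sigma^A}$ with the M{\o}ller operator directly. It instead uses the operator $P_\Sigma^A = P_\Sigma^{\lambda^A}$ together with Lemma~\ref{lem:pol-class-plambda}, which states that $V\in\class_\Sigma(A)$ if and only if $P_\Sigma^V-P_\Sigma^A\in I_2(\HSigma)$. This lemma is the bridge between the evolution picture and the explicit kernel $P_\Sigma^A$, and it is proved from Theorem~\ref{thm:evo-polclass}, which in turn rests on Theorem~\ref{thm:inf}. Once that bridge is available, the remainder of the paper's proof is purely algebraic: $(\Delta P_\Sigma^{\lambda^A})^2\in I_2$ (Lemma~\ref{lem:Pminus-lambda}(iii)) together with $P_\Sigma^A$ being an almost-projection gives that all four $2\times 2$ blocks of $Q_\Sigma^A$ compose to $I_2$, so $(Q_\Sigma^A)^2\in I_2$, and hence $e^{Q_\Sigma^A}P_\Sigma^-e^{-Q_\Sigma^A}=(\id+Q_\Sigma^A)P_\Sigma^-(\id-Q_\Sigma^A)=P_\Sigma^A\bmod I_2$. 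Your plan bypasses $P_\Sigma^A$ and Lemma~\ref{lem:pol-class-plambda} entirely, so you have no available tool to relate $Q_\Sigma^A$ to the evolution.

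The concrete gaps in your argument are these. First, the claim that in the Dyson expansion of $W$ only the first-order term produces non-Hilbert--Schmidt off-diagonals, with all higher-order contributions smooth enough, is not at all automatic for general Cauchy surfaces and is essentially the entire analytical content of the paper; the paper handles this not via Dyson series but by differentiating along a foliation and using Stokes' theorem (Theorem~\ref{thm:inf}), with substantial kernel estimates (Lemmas~\ref{le:Dtp}--\ref{le:Dt-P+S}). Second, your preparatory reduction using Theorems~\ref{thm:ident-pol-class} and~\ref{thm:lorentz-gauge} to pass to a purely tangential $A$ is not licensed: those theorems govern the class $\class_\Sigma(A)$, not the operator $Q_\Sigma^A=[P_\Sigma^A,P_\Sigma^-]$ itself, which contains the gauge-invariant free projector $P_\Sigma^-$ and therefore does not transform by simple conjugation under $e^{-i\Omega}$. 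You would have to prove separately that $e^{-i\Omega}e^{Q_\Sigma^A}\HSigma^-\approx e^{Q_\Sigma^{A+\partial\Omega}}\HSigma^-$, which you have not done. Third, the "heart of the argument" --- matching $Q_\Sigma^A$ against the linear Duhamel term --- is explicitly left open in your sketch; it is precisely the step where the proof must actually happen.
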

Other representatives for polarization classes $\class_\Sigma(A)$ beyond the
``interpolating representation'' $U_{\Sigma\Sigma_\inn}^A\cH^-_{\Sigma_\inn}$,
as used in Definition~\ref{def:pol-classes}, can be inferred from the so-called
Furry picture, as worked out for equal-time hyperplanes in \cite{Fierz1979}, and
from the global constructions of the fermionic projector given in
\cite{Finster2013,Finster2015c}.  In contrast to global constructions, the
representation given in Theorem~\ref{thm:representative} uses only \emph{local}
geometric information of the vector potential $A$ at $\Sigma$; cf.
\eqref{eq:V_Sigma}, \eqref{eq:P-special-lambda}, and Lemma~\ref{lem:Pminus-lambda}
below.

The implications on the physical picture can be seen as follows. The Dirac sea
on Cauchy surface $\Sigma$ can be described in any Fock space $\cF(V,\HSigma)$
for any choice of polarization $V\in \class_\Sigma(A)$. The polarization class
$\class_\Sigma(A)$ is uniquely determined by the tangential components of the
external potential $A$ on $\Sigma$. This is an object that transforms
covariantly under Lorentz and gauge transformations.  The choice of the particular
polarization can then be seen as a ``choice of coordinates'' in which the Dirac sea is
described. When regarding the Dirac evolution from one Cauchy surface $\Sigma$
to $\Sigma'$ another ``choice of coordinates'' $W\in \class_{\Sigma'}(A)$ 
has to be
made. Then one yields an evolution operator $\widetilde
U_{\Sigma'\Sigma}^A:\cF(V,\HSigma)\to\cF(W,\cH_{\Sigma'})$ which is unique up to
an arbitrary phase
Corollary~\ref{thm:evolution-lift}. Transition probabilities of the kind
$|\langle \Psi,\widetilde U_{\Sigma'\Sigma}^A\Phi\rangle|^2$ for
$\Psi\in\cF(W,\cH_{\Sigma'})$ and $\Phi\in\cF(V,\HSigma)$ are well-defined and
unique without the need of a renormalization method. 
Finally, for a family of Cauchy surfaces $(\Sigma_t)_{t\in\R}$ that
interpolates smoothly between $\Sigma$
and $\Sigma'$ we also give an infinitesimal version of how the external
potential $A$ changes the polarization in terms of the flow parameter $t$; see
Theorem~\ref{thm:inf} below.

\begin{remark}[Charge Sectors]
    \label{rem:approx_0}
    Given two polarizations $V,W\in\pol(\HSigma)$ such that
    $P_\Sigma^V-P_\Sigma^W$ is a compact operator, e.g., as in the case $V{\approx}W$ as
    defined in \eqref{eq:def-CA}, 
    one can define their relative
    charge, denoted by $\operatorname{charge}(V,W)$, to be the Fredholm index
    of $P^W_\Sigma|_{V\to W}$; cf. \cite{Deckert2010a}.
    The equivalence relation $\approx$ 
    in the claim of
    Theorem~\ref{thm:representative} can then be replaced by the finer
    equivalence relation $\approx_0$, which is defined as follows: $V\approx_0
    W$ if and only if $V\approx W$ and $\operatorname{charge}(V,W)=0$.  This is
    shown as an addendum to the proof of Theorem~\ref{thm:representative}.
\end{remark}

\subsection{Outlook}
\label{sec:outlook}

As indicated at the end of the introduction
the current operator depends directly on the unspecified phase of $\widetilde
U_{\Sigma'\Sigma}^A$. This can be seen from Bogolyubov's formula
\begin{align}
    \label{eq:current}
    j^\mu(x)
    =
    i\widetilde{U}_{\Sigma_\inn\Sigma_\out}^A \frac{\delta
        \widetilde{U}_{\Sigma_\out\Sigma_\inn}^A}{\delta A_\mu(x)}
\end{align}
where $\Sigma_\out$ is a Cauchy surfaces in the remote future of the support of
$A$ such that $\Sigma_\out\cap\supp A=\emptyset$. Hence, without identification
of the derivative of the phase of $\widetilde U_{\Sigma'\Sigma}^A$ the physical
current
is not fully specified. Nevertheless, now the
situation is slightly better than in the standard perturbative approach. As for
each choice of admissible polarizations in $\class_{\Sigma'}(A)$ and
$\class_\Sigma(A)$, identified above, there is a well-defined lift
$\widetilde U_{\Sigma'\Sigma}^A$ of the Dirac evolution operator
$U_{\Sigma'\Sigma}^A$ and therefore also a well-defined current
\eqref{eq:current}. Now it is only the task to select the physical relevant one.
One way of doing so is to impose extra conditions on the \eqref{eq:current}, and hence, the
phase, so that the set of admissible phases shrinks to one that produces the same
currents up to the known freedom of charge renormalization; see
\cite{Epstein1973,Scharf1995,Mickelsson1998,gracia2000}. In the case of equal-time hyperplanes a
choice of the unidentified phase was given by parallel transport in
\cite{mickelsson_phase_2014}. On top of the geometric construction and despite
the fact that there are still degrees of freedom left, Mickelsson's current
is particularly interesting because it agrees with conventional
perturbation theory up to second order. 
Yet the open question remains which additional physical requirements may 
constraint these degree of freedoms up to the one of the numerical
value of the elementary charge $e$ fixed by the experiment.

The issue of the unidentified phase particularly concerns the so-called
phenomenon of ``vacuum polarization'' as well as the dynamical description of
pair creation processes for which only a few rigorous treatments are available;
e.g., see \cite{Gravejat2013} for vacuum polarization in the Hartree-Fock
approximation for static external sources, \cite{Pickl2008} for adiabatic pair
creation, and for a more fundamental approach the so-called ``Theory of Causal
Fermion Systems'' \cite{FinsterBook,Finster2015a,Finster2015b}, which is based
on a reformulation of quantum electrodynamics by means of an action principle.

\subsection{Definitions, Constants, Notation, and previous Results}
\label{sec:preliminaries}

In this section we briefly review the notation and results about the
one-particle Dirac evolution on Cauchy surfaces provided in a previous work
\cite{Deckert2014}. The present article, dealing with the second-quantization
Dirac evolution, is based on this work. 

Space-time $\R^4$ is endowed with
metric tensor
$g=(g_{\mu\nu})_{\mu,\nu=0,1,2,3}=\operatorname{diag}(1,-1,-1,-1)$, and its
elements are denoted by four-vectors $x=(x^0,x^1,x^2,x^3)=(x^0,\vec x)=x^\mu
e_\mu$, for $e_\mu$ being the canonical basis vectors. Raising and lowering of
indices is done w.r.t.\ $g$. Moreover, we use Einstein's summation
convention, the standard
representation of the Dirac matrices $\gamma^\mu\in\C^{4\times 4}$ that fulfill
$\{\gamma^\mu,\gamma^\nu\}=2g^{\mu\nu}$, and Feynman's slash-notation
$\slashed{\partial}=\gamma^\mu \partial_\mu$, $\slashed A = \gamma^\mu A_\mu$.
When considering subsets of space-time
$\mathbb R^4$ we shall use the following notations: $\causal:=\{x\in\R^4|\;x_\mu
x^\mu\ge 0\}$ and $\past:=\{x\in\R^4|\;x_\mu x^\mu> 0, x^0<0\}$.

The central geometric objects for posing the initial value problem for
\eqref{eq:dirac-equation} are Cauchy surfaces defined as follows:
\begin{definition}[Cauchy Surfaces]\label{def:cauchy-surface}
    We define a Cauchy surface $\Sigma$ in $\R^4$ to be a smooth, 3-dimensional
    submanifold of $\R^4$ that fulfills the following three conditions:
    \begin{enumerate}
        \item Every inextensible,  two-sided, time- or light-like, continuous path
            in $\R^4$ intersects $\Sigma$ in a unique point.
        \item For every $x\in\Sigma$, the tangential space $T_x\Sigma$ is
            space-like.
        \item The tangential spaces to $\Sigma$ are bounded away from light-like
            directions in the following sense: The only light-like accumulation point
            of $\bigcup_{x\in\Sigma}T_{x}\Sigma$ is zero.
    \end{enumerate} 
\end{definition}
In coordinates, every Cauchy surface $\Sigma$ can be parametrized as 
\begin{align}
    \label{eq:parametrize Sigma}
    \Sigma=\{\pi_\Sigma(\vec x):=(t_\Sigma(\vec x),\vec x)\;|\;\vec x\in\R^3\}
\end{align}
with a smooth function $t_\Sigma:\R^3\to\R$. For convenience and without
restricting generality of our results we keep a
global constant
\begin{align}
    \label{eq:Vmax}
    0<\Vmax<1
\end{align}
fixed and work only with Cauchy surfaces $\Sigma$ such that
\begin{align}
    \label{eq:bound-grad_tSigma}
    \sup_{\vec x\in\R^3}|\nabla t_\Sigma(\vec x)|\le \Vmax.
\end{align}
The assumption (c) in Definition \ref{def:cauchy-surface} and
\eqref{eq:bound-grad_tSigma} can be relaxed to $|\nabla t_\Sigma(\vec
x)|<1$ for all $\vec x\in\R^3$ due to the causal structure of the solutions to
the Dirac equation, although this is not worked out in this paper.

The standard volume form over $\R^4$ is denoted by
$d^4x=dx^0\,dx^1\,dx^2\,dx^3$; the product of forms is understood as wedge
product. The symbols $d^3x$ and $d^3\vec x$
mean the 3-form $d^3x=dx^1\,dx^2\,dx^3$ on $\R^4$ and on $\R^3$, respectively.
Contraction of a form $\omega$ with a vector $v$ is denoted
by $i_v(\omega)$.
The notation $i_v(\omega)$ is also used for the spinor matrix
valued vector $\gamma=(\gamma^0,\gamma^1,\gamma^2,\gamma^3)=\gamma^\mu e_\mu$:
\begin{align}
    i_\gamma(d^4x)=\gamma^\mu i_{e_\mu}(d^4x).
\end{align}
Furthermore, for a $4$-spinor $\psi\in\C^4$ (viewed as column vector),  
$\overline{\psi}$ stands for the row vector $\psi^*\gamma^0$,
where ${}^*$ denotes hermitian conjugation.

Smooth families $(\Sigma_t)_{t\in T}$ of Cauchy surfaces, indexed
by an interval $T\subseteq\R$ and fulfilling
\eqref{eq:bound-grad_tSigma}, are denoted by
\begin{align}
    \label{eq:boldsigma}
    \boldsymbol{\Sigma}:=\{(x,t)|\;t\in T,\, x\in\Sigma_t\}.
\end{align}
Given the external electromagnetic vector potential 
$A\in C^\infty_c(\R^4,\R^4)$ of interest, 
we assume that the set $\{(x,t)\in\boldsymbol{\Sigma}|\;x\in\operatorname{supp}(A)\}$
is compact. This condition is trivially fulfilled
in the important case of a compact interval $T=[t_0,t_1]$ with $\boldsymbol{\Sigma}$
interpolating between two Cauchy surfaces $\Sigma_{t_0}$ and $\Sigma_{t_1}$.
The compactness condition is also automatically fulfilled in the case that
$T=\R$ with $\boldsymbol{\Sigma}$ being a smooth foliation of 
the Minkowski space-time $\R^4$.

We assume furthermore that
the family $(\Sigma_t)_{t\in T}$ is driven driven by a (Minkowski) normal vector field
$vn:\boldsymbol{\Sigma}\to\R^4$, where $n:\boldsymbol{\Sigma}\to\R^4$,
$(x,t)\mapsto n_t(x)$, denotes the future-directed (Minkowski) normal unit
vector field to the Cauchy surfaces and $v:\boldsymbol{\Sigma}\to\R$,
$(x,t)\mapsto v_t(x)$, denotes the speed at which the Cauchy surfaces move
forward in normal direction. For technical reasons, 
in particular when using the chain rule,
it is convenient to extend
the ``speed'' $v$ and the unit vector field 
$n$ in a smooth way to the domain $\R^4\times T$.
In the case that $\boldsymbol{\Sigma}$ is a foliation of
space-time, we may even drop the $t$--dependence of $v$ and $n$.
In this important case, some of the arguments below become slightly simpler.
\begin{definition}[Spaces of Initial Data]
    \label{def:HSigma}
    For any Cauchy surface $\Sigma$ we define the vector space 
    $\CSigma:=C^\infty_c(\Sigma,\C^4)$.
    For a given Cauchy surface $\Sigma$, 
    let $\HSigma=L^2(\Sigma,\C^4)$ denote the  vector space of
    all 4-spinor valued measurable functions $\phi:\Sigma\to \C^4$ (modulo changes on null sets)
    having a finite norm
    $\|\phi\|=\sqrt{\sk{\phi,\phi}}<\infty$ w.r.t.\ the scalar
    product 
    \begin{align}
        \label{eq:scalar-product}
        \sk{\phi,\psi}
        =\int_\Sigma\overline{\phi(x)}i_\gamma(d^4x)\psi(x).
    \end{align}
\end{definition}
For $x\in\Sigma$, the restriction of the  spinor matrix valued 3-form
$i_\gamma(d^4x)$ to the tangential space $T_x\Sigma$ is given by
\begin{align}
    \label{eq: repr igammad4x}
    i_\gamma(d^4x)=
    \slashed{n}(x)i_n(d^4x)=
    \left(
    \gamma^0-\sum_{\mu=1}^3 \gamma^\mu \frac{\partial t_\Sigma(\vec{x})}{\partial x^\mu}
    \right)d^3x =: \Gamma(\vec x) d^3x
    \text{ on }(T_x\Sigma)^3.
\end{align}
As a consequence of the \eqref{eq:bound-grad_tSigma}, 
there is a positive constant 
$\GammaMax=\GammaMax(\Vmax)$ such that 
\begin{align}
    \label{eq:bound-Gamma}
    \| \Gamma(\vec x) \| 
    & \leq
    \GammaMax,
    \qquad
    \forall \, \vec x\in\R^3.
\end{align}

The class of solutions to the Dirac equation \eqref{eq:dirac-equation}
considered in this work is defined by:
\begin{definition}[Solution Spaces]\label{def:CA}
    \mbox{}
    \begin{enumerate}[(i)]
        \item Let $\CA$ denote the space of all smooth solutions $\psi\in
            C^\infty(\R^4,\C^4)$ of the Dirac equation~\eqref{eq:dirac-equation} which
            have a spatially compact causal support in the following sense: There is a
            compact set $K\subset\R^4$ such that $\operatorname{supp}\psi\subseteq
            K+\causal$.         
        \item We endow $\CA$ with the scalar product given in
            \eqref{eq:scalar-product}; note that due to conservation of the
            4-vector current $\overline{\phi}\gamma^\mu\psi$,
            the scalar product
            $\sk{\cdot,\cdot}:\CA\times\CA\to\C$ is independent
            of the particular choice of $\Sigma$.
        \item Let $\HA$ be the Hilbert space given by the (abstract) completion
            of $\CA$. 
    \end{enumerate}
\end{definition}
Theorem 2.21 in \cite{Deckert2014} ensures:
\begin{theorem}[Initial Value Problem and Support]
    \label{dirac-existence-uniqueness}
    Let $\Sigma$ be a Cauchy surface and
    $\chi_\Sigma\in\cC_{c}^{\infty}(\Sigma,\C^{4})$ be given initial data.
    Then, there is a $\psi\in\CA$ such that $\psi|_\Sigma=\chi_\Sigma$ and
    $\supp \psi \subseteq \supp \chi_\Sigma + \operatorname{Causal}$.
    Moreover, suppose $\widetilde\psi\in C^\infty(\R^4,\C^4)$ solves the Dirac
    equation \eqref{eq:dirac-equation} for initial data
    $\widetilde\psi|_\Sigma=\chi_\Sigma$, then $\widetilde\psi=\psi$.
\end{theorem}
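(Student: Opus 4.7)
The plan is to treat the Dirac equation as a linear symmetric hyperbolic system: first construct the solution in coordinates that flatten $\Sigma$, and then propagate support information by means of a current-conservation identity on truncated lens-shaped domains.

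As a first step I will flatten the Cauchy surface. Using the parametrization $\Sigma=\{(t_\Sigma(\vec x),\vec x)\,|\,\vec x\in\R^3\}$ from \eqref{eq:parametrize Sigma}, introduce new coordinates $(\tau,\vec y)$ by $\tau:=x^0-t_\Sigma(\vec x)$, $\vec y:=\vec x$, so that $\Sigma$ becomes $\{\tau=0\}$. The chain rule turns $(i\slashed\partial-\slashed A-m)\psi=0$ into
\begin{align*}
    i\Gamma(\vec y)\,\partial_\tau\psi+i\gamma^j\partial_{y^j}\psi=(\slashed A+m)\psi,
\end{align*}
with $\Gamma$ as in \eqref{eq: repr igammad4x}. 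Since $|\nabla t_\Sigma|\le\Vmax<1$, the matrix $\gamma^0\Gamma=\id-\gamma^0\gamma^j\partial_j t_\Sigma$ is Hermitian positive-definite, with eigenvalues $1\pm|\nabla t_\Sigma|>0$; moreover $\gamma^0\gamma^j$ is Hermitian. Multiplying the equation by $-i\gamma^0$ therefore brings it into first-order symmetric hyperbolic form on $\R\times\R^3$ with smooth, bounded coefficients. Standard existence theory for such systems (via $H^s$ energy estimates for every $s$, combined with Sobolev embedding) produces a unique smooth solution with data $\chi_\Sigma$ at $\tau=0$; pulled back to the original coordinates this yields a smooth $\psi$ solving \eqref{eq:dirac-equation} on $\R^4$.

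Next I will establish the support inclusion by a current-conservation argument. The vector current $j^\mu:=\overline{\psi}\gamma^\mu\psi$ is divergence-free because $A$ is real-valued and $\psi$ solves \eqref{eq:dirac-equation}. For any $x_0\notin\supp\chi_\Sigma+\causal$, the truncated backward light cone $K$ with apex $x_0$ does not meet $\supp\chi_\Sigma$ on $\Sigma$. Applying Stokes' theorem to $i_j(d^4x)$ on the lens bounded by $\Sigma$ and $K$ kills the base contribution by the choice of $x_0$, while the lateral integrand $\overline{\psi}\slashed{n}\psi$ is nonnegative since $n$ is causal and future-directed. Positivity forces this integrand to vanish on $K$, whence $\psi\equiv 0$ on $K$ and in particular $\psi(x_0)=0$. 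Running the same argument toward the future of $\Sigma$ yields $\supp\psi\subseteq\supp\chi_\Sigma+\causal$, which together with the compactness of $\supp\chi_\Sigma$ places $\psi$ in $\CA$. Uniqueness follows by applying the identical argument to $\widetilde\psi-\psi$, whose Cauchy data on $\Sigma$ vanish.

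The main technical obstacle is that the lateral boundary of the lens is characteristic, so the positivity of $\overline{\psi}\slashed{n}\psi$ degenerates there and one cannot directly control $\psi$ from the energy identity on that boundary. The standard remedy is to approximate the backward light cone by a family of slightly inward-tilted spacelike hypersurfaces parametrized by $\varepsilon>0$, derive the resulting energy inequality with a strictly positive lateral control, close it by Gronwall in $\varepsilon$, and then send $\varepsilon\downarrow 0$ using the smoothness already produced by the symmetric hyperbolic step. This localized energy estimate on a tilted cone is the one step that requires genuine care; everything else reduces to change of variables and bookkeeping.
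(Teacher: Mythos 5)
The theorem you are proving is Theorem~2.21 of the companion paper \cite{Deckert2014}; the present paper cites it without proof immediately before Definition~\ref{def:UA}, so there is no in-paper argument to compare yours against. Judged on its own, your outline is a standard and essentially correct route: the change of variables $\tau=x^0-t_\Sigma(\vec x)$ turns the Dirac operator into a first-order system with $A^0=\gamma^0\Gamma$ Hermitian and, by \eqref{eq:bound-grad_tSigma}, uniformly positive-definite (its spectrum is $\{1\pm|\nabla t_\Sigma|\}$), and with constant Hermitian spatial coefficients $A^j=\gamma^0\gamma^j$; existence, uniqueness and smoothness then follow from symmetric hyperbolic theory, and the support property from the domain-of-dependence estimate.

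Two remarks on the support step. First, your initial formulation \emph{is} defective as written — $\overline{\psi}\slashed{n}\psi=0$ on a null hypersurface does not force $\psi=0$ there, since $\slashed{n}$ has a nontrivial kernel when $n$ is null — and you rightly flag this. However, the ``obstacle'' you identify is not genuine: the standard argument never tries to extract information from the lateral null surface at all. Slicing the lens by constant-$\tau$ disks $D_\tau$ and differentiating $E(\tau)=\int_{D_\tau}\psi^*A^0\psi\,d^3\vec y$ produces two boundary contributions on $\partial D_\tau$, a lateral flux and a shrinking-disk term; they combine into $-\int_{\partial D_\tau}\psi^*\bigl(A^0-A^j\nu_j\bigr)\psi\,dS$ with $\nu$ the outward $\vec y$-normal, which is manifestly nonpositive precisely because the lens' lateral slope does not exceed the system's characteristic speed. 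Since the Dirac current is \emph{exactly} divergence-free (your $B$ is skew-Hermitian and $\partial_\tau A^0=\partial_j A^j=0$), there is no error term to absorb, hence no Gronwall step, and $E(0)=0$ gives $E(\tau)=0$ throughout the lens directly. Second, if you do insist on the tilt-and-limit remedy, note that you must also justify that the tilted frustum's base on $\Sigma$ stays inside the region where $\chi_\Sigma$ vanishes; this follows from compactness of $\supp\chi_\Sigma$ and openness of the condition $x_0\notin\supp\chi_\Sigma+\causal$, but it should be said. These are refinements, not flaws in the strategy; the plan would compile into a correct proof.
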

This theorem gives rise to the following definition in which we use the notation
$\psi|_\Sigma\in \CSigma$ to denote the restriction of a
$\psi\in\CA$ to a Cauchy surface $\Sigma$.
\begin{definition}[Evolution Operators]
    \label{def:UA}
    Let $\Sigma,\Sigma'$ be Cauchy surfaces.  In view of
    Theorem~\ref{dirac-existence-uniqueness} we define the isomorphic isometries
    \begin{align}
        \begin{split}
            U_{\Sigma A}&:\CA\to\CSigma, \\
            U_{A \Sigma}&:\CSigma\to\CA, \\
            U_{\Sigma'\Sigma}^A&:\CSigma \to\cC_{\Sigma'},
        \end{split}
        \begin{split}
            &U_{\Sigma A} \, \phi := \phi|_\Sigma, \\
            &U_{\Sigma A}\,\chi_\Sigma:=\psi, \\
            &U_{\Sigma'\Sigma}^A:=U_{\Sigma'A}U_{A \Sigma},
        \end{split} 
    \end{align}
    where $\chi_\Sigma\in\CSigma$, $\phi\in\CA$, and $\psi$ is the solution
    corresponding to initial value $\chi_\Sigma$ as in
    Theorem~\ref{dirac-existence-uniqueness}.
    These maps extend uniquely to unitary maps
    $U_{A\Sigma}:\HSigma\to\HA$, $U_{\Sigma A}:\HA\to\HSigma$ and
    $U_{\Sigma'\Sigma}^A:\HSigma\to\cH_{\Sigma'}$.
\end{definition}
Here we differ from the notation used in Theorem~2.23
in \cite{Deckert2014} where $U^A_{\Sigma'\Sigma}$ was denoted by
$\cF^A_{\Sigma'\Sigma}$. Furthermore, it will be useful to express the
orthogonal projector $P_\Sigma^-$ in an momentum integral representation over
the mass shell 
\begin{align}
    {\mathcal M}=\{p\in\R^4|\;p_\mu p^\mu=m^2\}
    =
    \cM_+\cup\cM_-,
    \qquad
    {\mathcal M}_\pm=\{p\in{\mathcal M}|\;\pm p^0>0\};
\end{align}
cf.\ Lemma~\ref{lemma:kern-p-minus} and the definition of $\cF_{\cM\Sigma}$ in
\cite{Deckert2014}. We endow ${\mathcal M}$ with the orientation that makes the projection
${\mathcal M}\to\R^3$, $(p^0,\vec p)\mapsto \vec p$ positively oriented.
One finds that
\begin{align}\label{eq:ipd4p}
    i_p(d^4 p)=\frac{m^2}{p^0}dp^1\,dp^2\,dp^3=
    \frac{m^2}{p^0}d^3p\text{ on } (T_p{\mathcal M})^3.
\end{align}

\paragraph{General Notation.} Positive constants and remainder terms are denoted by $C_1, C_2,
C_3,\ldots$ and $r_1, r_2, r_3,\ldots$, respectively. 
They keep their meaning throughout the whole article. Any fixed
quantity a constant depends on (except numerical constants like electron mass
$m$
and charge $e$) is displayed at least once when the constant is introduced.
Furthermore, we classify the behavior of functions using the following 
variant of the Landau symbol notation.
\begin{definition}
    For lists of variables 
    $x,y,z$
    we use the notation
    \begin{align}
        \label{eq:landau}
        f(x,y,z)
        =
        O_{y}
        \left(
        g(x)
        \right),
        \qquad
        \text{for all } (x,y,z)\in\mathrm{domain}
    \end{align}
    to mean the following: There exists a constant $C(y)$
    depending only on the parameters $y$, but neither on $x$ nor on $z$,
    such that
    \begin{align}
        | f(x,y,z) |
        \leq
        C(y) | g(x) |,
        \qquad
        \text{ for all } (x,y,z)\in\mathrm{domain},
    \end{align}
    where $|{\cdot}|$ stands for the appropriate norm applicable to $f$.
    Note that the notation \eqref{eq:landau} does not mean that
    $f(x,y,z)=f(x,y)$, i.e., that the value of $f$ is independent of $z$.
    Rather, it
    just means that the bound is uniform in $z$.
\end{definition}

\section{Proofs}
\label{sec:proofs}

The key idea in the proofs of our main results
Theorem~\ref{thm:ident-pol-class}, \ref{thm:lorentz-gauge}, and
\ref{thm:representative} is to guess a simple enough
operator $P_\Sigma^A:\HSigma\selfmaps$ so that
\begin{align}
    \label{eq:key-prop}
    U^A_{\Sigma\Sigma_\inn}P_{\Sigma_\inn}^- U^A_{\Sigma_\inn\Sigma}
    -
    P_\Sigma^A \in I_2(\HSigma).
\end{align}
It turns out that all claims about the properties of the polarization classes
$\class_\Sigma(A)$ above can
then be inferred from the properties of $P_\Sigma^A$. This is due to the fact
that \eqref{eq:key-prop} is compatible with the Hilbert-Schmidt operator freedom
encoded in the ${\approx}$ equivalence relation that was used to define the polarization
classes $\class_\Sigma(A)$; see Definition~\ref{def:pol-classes}.

The intuition behind our guess of $P_\Sigma^A$ comes from gauge transforms.
Imagine the special situation in which an external potential $A$ could be gauged
to zero, i.e., $A=\partial\Omega$ for a given scalar field $\Omega$. In this
case $e^{-i\Omega} P_\Sigma^- e^{i\Omega}$ is a good candidate for $P_\Sigma^A$.
Now in the case of general external potentials $A$ that cannot be attained by a
gauge transformation of the zero potential, the idea is to implement different
gauge transforms locally near to each space-time point.  For example, if
$p^-(y-x)$ denotes the informal integral kernel of the operator $P_\Sigma^-$,
one could try to define $P_\Sigma^A$ as the operator corresponding to the
informal kernel $p^A(x,y)=e^{-i\lambda^A(x,y)}p^-(y-x)$ for the choice
$\lambda^A(x)=\frac{1}{2}(A(x)+A(y))_\mu(y-x)^\mu$.  Due to this choice, the
action of $\lambda^A(x,y)$ can be interpreted as a local gauge transform of
$p^-(y-x)$ from the zero potential to the potential $A_\mu(x)$ at space-time
point $x$. It turns out that these local gauge transforms give rise to
an operator $P_\Sigma^A$ that fulfills \eqref{eq:key-prop}.

\paragraph{Section Overview} In Section~\ref{sec:operator_PA} we define the
operators $P_\Sigma^-$ and $P_\Sigma^A$ and state their main properties.
Assuming these properties  we prove 
our main results in Section~\ref{sec:main-results-proofs}.
The proofs of those employed properties are delivered afterwards in Sections~\ref{sec:pol-classes} and
\ref{sec:evolution-polarization-classes}. 

\subsection{The Operators $P_\Sigma^-$ and $P_\Sigma^A$}
\label{sec:operator_PA}

As described in the previous section, the central objects of our study are the
operators $P_\Sigma^-$ and operators which are derived from them like the
discussed $P_\Sigma^A$.  Lemma~\ref{lemma:kern-p-minus} describes the integral
representation of the orthogonal projector $P_\Sigma^-$.  For this we introduce
the notation
\begin{align}
    \label{eq:def-r}
    r(w):=\sqrt{-w_\mu w^\mu}
    \quad\text{ for }\quad
    w\in\operatorname{domain}(r):=\{w\in\C^4|\;
    -w_\mu w^\mu\in\C\setminus\R^-_0\}.
\end{align}
The square root is interpreted as its principal value
$\sqrt{r^2e^{2i\varphi}}=re^{i\varphi}$ for $r>0$, 
\mbox{$-\frac{\pi}{2}<\varphi<\frac{\pi}{2}$}.
\ifx\arxiv\undefined
We note that for a Cauchy surface $\Sigma$ fulfilling 
\eqref{eq:bound-grad_tSigma} and $0\neq z=y-x$ with
$x,y\in\Sigma$ one has
\begin{align}
    \label{eq:rz-z}
    \sqrt{1-\Vmax^2}|\vec z|
     \leq r(z)
    \leq |\vec z| 
    \leq |z|\leq \sqrt{1+\Vmax^2}|\vec z|.
\end{align}
\fi
To deal with the singularity of the informal integral kernel $p^-(y-x)$ of
the projection operator $P_\Sigma^-$ at the diagonal $x=y$, we use 
a regularization shifting the argument $y-x$
a little in direction of the imaginary past.
\begin{lemma}
    \label{lemma:kern-p-minus}
    For $\phi,\psi\in\CSigma$ 
    and any past-directed time-like vector 
    $u\in\past$
    one has
    \begin{align}
        \label{eq:pminus-claim}
        \sk{\phi,P^-_\Sigma\psi} 
        =
        \lim_{\epsilon\downarrow 0}
        \int_{x\in\Sigma} \overline{\phi}(x) \, i_\gamma(d^4x)
        \int_{y\in\Sigma} p^-(y-x+i\epsilon u) \, i_\gamma(d^4y) \, \psi(y),
    \end{align}  
    where
    \begin{gather}
        p^-: \R^4 + i\operatorname{Past}
        \to \C^{4x4}, 
        \quad
        p^-(w)
        =
        \frac{1}{(2\pi)^3m}
        \int_{\cM_-} \frac{\slashed p+m}{2m} e^{ipw} \, i_p(d^4p)
        =\frac{-i\slashed \partial+m}{2m} D(w),
        \label{eq:pminus}
        \\
        D: \R^4 + i\operatorname{Past} 
        \to \C,\quad 
        D(w) 
        =
        \frac{1}{(2\pi)^3m}
        \int_{\cM_-} e^{ipw} \, i_p(d^4p)
        =
        -\frac{m^3}{2 \pi^2} \frac{K_1(m r(w))}{m r(w)},
    \label{eq:def-D}
    \\
        K_1:\R^+ + i\mathbb R 
         \to 
        \C,
        \qquad
        K_1(\xi)=
        \xi\int_1^\infty e^{-\xi s}\sqrt{s^2-1}\, ds.
        \label{eq:K1}
    \end{gather}
    $K_1$ is the modified Bessel function of second kind of order one.
    The functions $D$ and $p^-$ have
    analytic continuations defined on $\operatorname{domain}(r)$.
    The corresponding continuations are denoted by the same symbols.
\end{lemma}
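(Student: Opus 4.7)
My plan is to derive the kernel representation by expressing $P_\Sigma^-$ in momentum space via the free ($A=0$) Dirac evolution and then transferring the result back to position space on $\Sigma$. For $\psi\in\CSigma$, extend $\psi$ uniquely to a smooth free Dirac solution on $\R^4$ by Theorem~\ref{dirac-existence-uniqueness}; this extension admits a mass-shell Fourier representation on $\cM=\cM_+\cup\cM_-$, and the projector $P_\Sigma^-$ acts by restricting the integration to $\cM_-$. Combining the standard spinor factor $(\slashed p+m)/(2m)$ (which encodes the resolution of the identity on the negative-energy spin subspace) with the covariant scalar product~\eqref{eq:scalar-product}, and then reading off the integral kernel relating $\psi(y)$ on $\Sigma$ to $(P_\Sigma^-\psi)(x)$ on $\Sigma$, formally produces the claimed double integral.

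The formal manipulation must then be justified by regularizing the oscillatory integral over $\cM_-$. Shifting $y-x$ by $i\epsilon u$ with $u\in\past$ inserts a factor $e^{-\epsilon p_\mu u^\mu}$, and since $p\in\cM_-$ and $u$ are both past-directed timelike one has $p_\mu u^\mu>0$ (because $p^0u^0>0$ dominates $\vec p\cdot\vec u$ by Cauchy-Schwarz). The regularized momentum integral is therefore absolutely convergent, Fubini applies, and the $\Sigma$-integrations may be performed first. Taking the limit $\epsilon\downarrow 0$ recovers $\sk{\phi,P_\Sigma^-\psi}$ by a distributional/dominated-convergence argument exploiting the smoothness and compact support of $\phi$ and $\psi$.

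For the explicit form of the kernel, I would first evaluate $D(w)$: by Lorentz invariance of $i_p(d^4p)$ it depends only on $r(w)$, so choosing a frame in which the spatial part of $w$ vanishes reduces the computation to a one-dimensional integral. The substitution $p^0=-m\cosh\theta$, $|\vec p|=m\sinh\theta$ converts it into the integral representation~\eqref{eq:K1} of the Bessel function $K_1$, yielding $D(w)=-\frac{m^3}{2\pi^2}K_1(mr(w))/(mr(w))$. The spinor-valued kernel $p^-$ is then obtained by pulling a derivative outside the integral via $\slashed p\,e^{ipw}=-i\slashed\partial_w e^{ipw}$. Analyticity of $D$ and $p^-$ on $\operatorname{domain}(r)$ follows from analyticity of $K_1$ on $\C\setminus(-\infty,0]$ together with the principal-branch choice for the square root defining $r$.

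The main obstacle is handling the light-cone singularity of $p^-$ simultaneously with the $\epsilon\downarrow 0$ limit. Since $K_1(\xi)\sim 1/\xi$ as $\xi\to 0$, $D(w)$ is of order $1/r(w)^2$ and $p^-(w)$ is even more singular through the $\slashed\partial$ derivative. I would resolve this by transferring the derivative in $p^-$ onto the test functions $\phi,\psi$ via integration by parts along $\Sigma$ (using the parametrization~\eqref{eq:parametrize Sigma} and the gradient bound~\eqref{eq:bound-grad_tSigma}), reducing the singular part of the kernel to the scalar $D$, whose local integrability against smooth compactly supported test functions on $\Sigma\times\Sigma$ is then a routine exercise in view of the $1/r(w)^2$ behavior and the three-dimensional volume element.
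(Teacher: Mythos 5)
Your overall route is the same as the paper's: represent $P_\Sigma^-$ via the generalized mass-shell Fourier transform (your ``extend to a free Dirac solution'' is exactly what the map $\cF_{\cM\Sigma}$ from \cite{Deckert2014} encodes), insert the regularizing factor $e^{-\epsilon p_\mu u^\mu}$ which decays exponentially on $\cM_-$ since $p_\mu u^\mu>0$, use Fubini for fixed $\epsilon>0$ to pass between the momentum integral and the double $\Sigma$-integral, and evaluate $D$ by Lorentz-reducing to a rest frame and a hyperbolic substitution that produces the integral representation~\eqref{eq:K1} of $K_1$, with $p^-$ obtained by $\slashed p\,e^{ipw}=-i\slashed\partial_w e^{ipw}$ and analyticity from the principal-branch square root. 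All of that matches the paper's proof.

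Your final paragraph, however, introduces a spurious difficulty. Once you have identified the $\epsilon>0$ double $\Sigma$-integral with the regularized momentum integral $\int_{\cM_-}e^{-\epsilon p u}\,\overline{\widehat{\phi}(p)}\widehat{\psi}(p)\,i_p(d^4p)/m$ --- which your own second paragraph does --- the limit $\epsilon\downarrow 0$ is taken entirely in momentum space, where it is immediate by dominated convergence using $|e^{-\epsilon pu}|\le 1$ and the integrability of $\overline{\widehat{\phi}}\widehat{\psi}$ on $\cM_-$ (guaranteed by Theorem~2.15 of \cite{Deckert2014}). No control of the light-cone singularity of the $\epsilon=0$ position-space kernel is required, and in fact the claim in your last paragraph that local integrability of $p^-$ on $\Sigma\times\Sigma$ is ``a routine exercise'' is not correct: by the upper bound~\eqref{eq:bound-pminus-R3}, $\|p^-(y-x)\|$ scales like $|\vec y-\vec x|^{-3}$ near the diagonal, which is \emph{not} locally integrable in $\R^3\times\R^3$; that is exactly why the lemma is formulated as a limit rather than a direct $\epsilon=0$ integral. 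The integration-by-parts device you sketch (reducing to the kernel $D$, which \emph{is} locally integrable at order $|\vec z|^{-2}$) would be a workable but needlessly heavy way to prove a statement stronger than what the lemma asserts; it is not needed here. (A variant of your observation \emph{is} what makes formula~\eqref{eq:pdiff} in Lemma~\ref{lem:Pminus-lambda} well-defined at $\epsilon=0$, but there the extra vanishing factor $e^{-i\lambda(x,y)}-1$ does the softening, not an integration by parts.)
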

The proof is given in Section~\ref{sec:pol-classes}.
It is based on the momentum integral representation given in Theorem~2.15 in
\cite{Deckert2014}. In the following we define several candidates for $P_\Sigma^A$
fulfilling the key property \eqref{eq:key-prop} as discussed in the beginning of
Section~\ref{sec:proofs}.  We will denote these operators by
$P_\Sigma^\lambda:\HSigma\selfmaps$ where the superscript $\lambda$ denotes an
element out of the following class of ``local'' gauge functions:
\begin{definition}
    \label{def:PLambda}
    For $A\in C^\infty_c(\R^4,\R^4)$
    let $\cG(A)$ denote the set of all functions $\lambda:\R^4\times\R^4\to\R$
    with the following properties:
    \begin{enumerate}[(i)]
        \item $\lambda\in  C^\infty(\R^4\times\R^4,\R)$.
        \item There is a compact set $K\subset\R^4$ such that $\supp \lambda
            \subseteq K\times\R^4\cup\R^4\times K$.
        \item $\lambda$ vanishes on the diagonal, 
            i.e., $\lambda(x,x)=0$ for $x\in\R^4$.
        \item On the diagonal the first derivatives fulfill
            \begin{align}
                \partial^x\lambda(x,y)=-\partial^y\lambda(x,y)=A(x)
                \qquad\text{for } x=y\in\R^4.
            \end{align} 
    \end{enumerate}
\end{definition}
Given a ``local'' gauge transform $\lambda\in\cG(A)$ we define the corresponding operator
$P_\Sigma^\lambda$ using the heuristic idea behind $P_\Sigma^A$
we discussed in the beginning of Section~\ref{sec:proofs}.
\begin{lemma}
    \label{lem:Pminus-lambda}
    Given $A\in C^\infty_c(\R^4,\R^4)$ and $\lambda\in\cG(A)$ there is a unique bounded operator
    $P_\Sigma^\lambda:\HSigma\selfmaps$ with matrix elements
    \begin{align}
        \label{eq:Pminus-lambda}
        \sk{\phi,P_\Sigma^\lambda\psi} 
        &=
        \lim_{\epsilon\downarrow 0} \sk{\phi,P_\Sigma^{\lambda,\epsilon u}\psi}
        \text{ with}
        \\ 
        \label{eq:Pminus-lambda-epsilon}
        \sk{\phi,P_\Sigma^{\lambda,\epsilon u}\psi}&:= \int_{x\in\Sigma} \overline{\phi}(x) \, i_\gamma(d^4x)
        \int_{y\in\Sigma} e^{-i\lambda(x,y)}p^-(y-x+i\epsilon u) \, i_\gamma(d^4y)
        \, \psi(y).
    \end{align}
    for any given $\phi,\psi\in\CSigma$ and any past-directed time-like vector
    $u\in\past$.  In particular, the limit in \eqref{eq:Pminus-lambda} does not
    depend on the choice of $u\in\operatorname{Past}$. For 
    $\Delta P_\Sigma^\lambda:=P_\Sigma^\lambda-P_\Sigma^-$,  $\psi\in\HSigma$,
    and almost all $x\in\Sigma$ it holds
    \begin{align}
        \label{eq:pdiff}
        \left(\Delta P_\Sigma^\lambda\psi\right)(x)=
        \int_{y\in\Sigma} (e^{-i\lambda(x,y)}-1)
        p^-(y-x) \, i_\gamma(d^4y)
        \, \psi(y),
    \end{align} 
    and furthermore:
    \begin{enumerate}[(i)]
        \item The operator norm of $P_\Sigma^\lambda$ is bounded by a constant
            $\constl{c:Pminus-lambda}(\Vmax,\lambda)$; 
            cf. \eqref{eq:bound-grad_tSigma}; 
        \item 
            $\Delta P_\Sigma^\lambda$ is a
            compact operator;
        \item 
            $|\Delta P_\Sigma^\lambda|^2$ is a Hilbert-Schmidt operator.
        \item If $\lambda(x,y)=-\lambda(y,x)$ for all $x,y\in\Sigma$,
          then $P_\Sigma^\lambda$ is self-adjoint.
    \end{enumerate}
\end{lemma}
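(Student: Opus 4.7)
The plan is to analyse everything through the integral kernel $K^\lambda(x,y) := (e^{-i\lambda(x,y)}-1)\,p^-(y-x)$ that appears in the formula \eqref{eq:pdiff} one has to derive. Two pointwise inputs will drive every estimate. From \eqref{eq:def-D}, \eqref{eq:K1} and the classical asymptotics $K_1(\xi)\sim 1/\xi$ as $\xi\to 0$, one reads off $D(w)=O(r(w)^{-2})$ near $w=0$ and $|p^-(w)|=O(r(w)^{-3})$ after the single derivative in \eqref{eq:pminus}, while $K_1$ produces exponential decay for $|\vec w|$ large. Properties (i)--(iii) of Definition~\ref{def:PLambda} give the Lipschitz bound $|e^{-i\lambda(x,y)}-1|\le|\lambda(x,y)|\le C(\lambda)\,|x-y|$ on $\supp\lambda$, which together with the spacelike comparability $r(y-x)\ge\sqrt{1-\Vmax^2}\,|\vec y-\vec x|$ for $x,y\in\Sigma$ yields the localised bound $|K^\lambda(x,y)|\le C(\Vmax,\lambda)\,|\vec x-\vec y|^{-2}$ near the diagonal, with exponential decay off it and with $K^\lambda(x,y)=0$ whenever both $x,y$ lie outside the compact set of Definition~\ref{def:PLambda}(ii).

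Armed with this bound, I would first establish the limit in \eqref{eq:Pminus-lambda}, its independence of $u\in\past$, and the formula \eqref{eq:pdiff}. For distinct $x,y\in\Sigma$ the difference $y-x$ is spacelike, so $y-x+i\epsilon u\in\operatorname{domain}(r)$ for any past-directed timelike $u$ and small $\epsilon>0$, with $r(y-x+i\epsilon u)\to r(y-x)>0$ independently of $u$. The same kernel bound dominates the regularised integrand of \eqref{eq:Pminus-lambda-epsilon} uniformly in $\epsilon$ and~$u$, so dominated convergence produces the pointwise limit in \eqref{eq:Pminus-lambda} and shows it does not depend on $u$; subtracting \eqref{eq:pminus-claim} then yields \eqref{eq:pdiff} on the dense subspace $\CSigma$.

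Next I would derive the operator-theoretic properties. For (i), Schur's test on $K^\lambda$ uses $\int_{|\vec z|\le R}|\vec z|^{-2}\,d^3\vec z<\infty$ in three dimensions together with the exponential tail of $p^-$ to give a uniform bound on $\sup_x\int|K^\lambda(x,y)|\,i_\gamma(d^4y)$ and its symmetric counterpart in~$x$, depending only on $\Vmax$ and $\lambda$; combined with $\|P_\Sigma^-\|=1$ this controls $\|P_\Sigma^\lambda\|$ and simultaneously provides the unique bounded extension $P_\Sigma^\lambda:\HSigma\selfmaps$. For (iii), the kernel of $|\Delta P_\Sigma^\lambda|^2=(\Delta P_\Sigma^\lambda)^*\Delta P_\Sigma^\lambda$ is $\int K^\lambda(z,x)^*K^\lambda(z,y)\,i_\gamma(d^4z)$, and the standard three-dimensional convolution bound $|\cdot|^{-2}\ast|\cdot|^{-2}\le C\,|\cdot|^{-1}$ reduces the composite singularity to $O(|\vec x-\vec y|^{-1})$, which is square-integrable on bounded regions since $\int_0^1 r^{-2}\cdot r^2\,dr<\infty$; exponential decay and the compact-support structure from Definition~\ref{def:PLambda}(ii) handle the tails, giving $|\Delta P_\Sigma^\lambda|^2\in I_2(\HSigma)$. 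Equivalently $\Delta P_\Sigma^\lambda\in I_4$, hence compact, which gives (ii). For (iv), the self-adjointness of $P_\Sigma^-$ encodes a symmetry of $p^-$ under $x\leftrightarrow y$ combined with Dirac conjugation, and the hypothesis $\lambda(x,y)=-\lambda(y,x)$ yields $e^{-i\lambda(x,y)}-1=\overline{e^{-i\lambda(y,x)}-1}$, so this symmetry is inherited by $K^\lambda$ and hence by $P_\Sigma^\lambda$.

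The main technical obstacle will be making the singularity bound $|p^-(w)|=O(r(w)^{-3})$ hold \emph{uniformly} for the analytically continued argument $w=y-x+i\epsilon u$, since $r$ and the continuation of $K_1$ must be controlled on the whole one-parameter family of shifted contours indexed by $\epsilon>0$ and $u\in\past$. Once this uniform control is in place, Schur's test and the convolution estimate become routine and all four claims follow algebraically.
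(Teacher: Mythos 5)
Your proposal is correct and follows essentially the same route as the paper: the kernel bound $\|p^-(w)\|=O(e^{-c|\vec z|}|\vec z|^{-3})$ from the appendix combined with the Lipschitz estimate $|e^{-i\lambda(x,y)}-1|\le C|x-y|\,\mathbf{1}_K$ to produce the integrable kernel bound of order $|\vec z|^{-2}$, dominated convergence for the $\epsilon\downarrow0$ limit and its $u$-independence, a Schur/Young-type estimate for operator boundedness, the 3D convolution bound $|\vec z|^{-2}\ast|\vec z|^{-2}\lesssim|\vec z|^{-1}$ together with the compact-support factor to show $|\Delta P_\Sigma^\lambda|^2\in I_2$, compactness of $\Delta P_\Sigma^\lambda$ as a consequence, and the $x\leftrightarrow y$ symmetry of $p^-$ under Dirac conjugation for (iv). The uniform-in-$\epsilon$ control of $p^-(z+i\epsilon u)$ you flag as the main obstacle is exactly what the paper's appendix Corollary supplies.
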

This lemma is proven in Section \ref{sec:pol-classes}.
Two important examples of elements in $\cG(A)$ are:
\begin{itemize}
    \item The choice $\lambda(x,y)=\Omega(x)-\Omega(y)$ for
        $\Omega\in\cC_c^\infty(\R^4,\R)$ fulfills 
        $\lambda\in\cG(\partial\Omega)$. 
        Such a $\lambda$
        delivers a good candidate for the operator $P_\Sigma^A$ fulfilling
        \eqref{eq:key-prop} if the external field $A$ can be attained from the
        zero field via a gauge transform $A=0\mapsto A=\partial \Omega$.
        We observe for any path $C_{y,x}$ from $y$ to $x$
        \begin{align}
          \label{eq:lambda-as-integral}
          \lambda(x,y)=\int_{C_{y,x}}A_\mu(u)\,du^\mu=
          \frac{1}{2}(A_\mu(x)+A_\mu(y))(x^\mu-y^\mu)+O_A(|x-y|^3).
        \end{align}
    \item For an arbitrary vector potential $A\in C^\infty_c(\R^4,\R^4)$ also
        \begin{align}
            \label{eq:special-lambda}
            \lambda^A(x,y):=\frac{1}{2}(A_\mu(x)+A_\mu(y))(x^\mu-y^\mu)
        \end{align} 
        fulfills $\lambda^A\in\cG(A)$. This choice is motivated by
        the special case \eqref{eq:lambda-as-integral}. It will be particularly
        convenient for our work. Note that it has the symmetry
        $\lambda^A(x,y)=-\lambda^A(y,x)$; cf. part (iv) in 
        Lemma~\ref{lem:Pminus-lambda}. In particular, the operator
        $P_\Sigma^A$ from the
        discussion will be given by 
        \begin{align}
            \label{eq:P-special-lambda}
            P_\Sigma^A:=P_\Sigma^{\lambda^A}.
        \end{align}
\end{itemize}
We shall show that for $\lambda\in\cG(A)$ the operators $P_\Sigma^\lambda$ obey
the key property \eqref{eq:key-prop}.  Our first result about $P_\Sigma^\lambda$
for a $\lambda\in\cG(A)$ is that, up to a Hilbert-Schmidt operator, it depends
only on the restriction of the 1-form $A$ to the tangent bundle $T\Sigma$ of the
Cauchy surface $\Sigma$.
\begin{theorem}
    \label{thm:equivalence-pol-classes}
    Given $A,\widetilde{A}\in C^\infty_c(\R^4,\R^4)$ and
    $\lambda\in \cG(A)$, $\widetilde{\lambda}\in\cG(\widetilde{A})$, the
    following is true:
    \begin{align}
        P_\Sigma^\lambda-P_\Sigma^{\widetilde{\lambda}}\in I_2(\HSigma)
        \qquad
        \Leftrightarrow
        \qquad
        A|_{T\Sigma}=\widetilde A|_{T\Sigma}.
        \label{eq:equivalence-pol-classes}
    \end{align}
\end{theorem}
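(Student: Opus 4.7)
The plan is to analyze the explicit integral kernel of the difference $P_\Sigma^\lambda - P_\Sigma^{\widetilde\lambda}$ and reduce the Hilbert--Schmidt question to local behavior at the diagonal. Subtracting the two applications of \eqref{eq:pdiff} yields the kernel
$$K(x,y) = \left(e^{-i\lambda(x,y)} - e^{-i\widetilde\lambda(x,y)}\right) p^-(y-x)\,\Gamma(\vec y),$$
whose Hilbert--Schmidt norm, up to bounded $\Gamma$ factors, equals the $L^2(\Sigma\times\Sigma)$ norm of $K$. The compact support condition in Definition~\ref{def:PLambda}(ii) together with the exponential decay of $K_1(mr)$ at real positive argument (since $y-x$ is spacelike for $x,y\in\Sigma$) handles the off-diagonal region, so everything reduces to the singular structure at $y=x$.

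From Lemma~\ref{lemma:kern-p-minus} and the expansion $K_1(z)\sim 1/z$ as $z\downarrow 0$, one extracts the leading singularity $p^-(w) = i\slashed{w}/(2\pi^2\,r(w)^4) + O(r(w)^{-2})$, so $\|p^-(y-x)\|^2$ behaves like $|y-x|^{-6}$ near the diagonal. Meanwhile, Definition~\ref{def:PLambda}(iii)--(iv) and Taylor expansion in $y$ around $y=x$ give
$$\lambda(x,y) - \widetilde\lambda(x,y) = -(A-\widetilde A)_\mu(x)(y-x)^\mu + O_{A,\widetilde A}(|y-x|^2),$$
and for $x,y\in\Sigma$ the chord $y-x$ lies in $T_x\Sigma$ modulo a normal component of size $O(|y-x|^2)$ (Taylor-expand $t_\Sigma$ in \eqref{eq:parametrize Sigma}).

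The implication ``$\Leftarrow$'' then follows directly: if $A|_{T\Sigma}=\widetilde A|_{T\Sigma}$, the linear term $(A-\widetilde A)_\mu(x)(y-x)^\mu$ contributes only $O(|y-x|^2)$, hence $\|K(x,y)\|^2 = O(|y-x|^{-2})$, which is locally integrable in $d^3x\,d^3y$; global integrability follows from the support constraint.

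For ``$\Rightarrow$'' I argue by contrapositive. If $(A-\widetilde A)_\mu(x_0) v^\mu\neq 0$ for some $x_0\in\Sigma$ and some $v\in T_{x_0}\Sigma$, then in a neighborhood $U$ of $x_0$ the leading part of $K$ is a nonzero multiple of $(A-\widetilde A)_\mu(x)(y-x)^\mu\,\slashed{(y-x)}/r(y-x)^4$. Introducing polar coordinates $y-x=r\omega$ for $\omega$ ranging over the unit sphere of $T_x\Sigma$, the squared Hilbert--Schmidt norm of this leading piece integrated over $U\times U$ factors as $\int_0^\epsilon r^{-2}\,dr$ times the strictly positive angular integral $\int |(A-\widetilde A)_\mu(x_0)\,\omega^\mu|^2\,d\omega$, and hence diverges. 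The main obstacle is precisely this converse direction: one must rigorously split $K$ into its principal singular piece plus an $O(|y-x|^{-1})$ remainder, and confirm that neither the subleading $O(|y-x|^2)$ scalar corrections nor any matrix cancellation between $\slashed{(y-x)}$ and the scalar prefactor can tame the divergent leading contribution.
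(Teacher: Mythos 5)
Your plan follows the same route as the paper: isolate the kernel of $P_\Sigma^\lambda-P_\Sigma^{\widetilde\lambda}$, Taylor-expand the gauge factors, identify the linear term $(A-\widetilde A)_\mu(x)(y-x)^\mu$ as the decisive piece, and reduce to the singularity of $p^-$ at the diagonal. The ``$\Leftarrow$'' direction you give is essentially identical to the paper's and is fine.

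For ``$\Rightarrow$'', however, you have named the obstacle but not closed it. Your sketch expands $p^-$ into a principal part $i\slashed{w}/(2\pi^2 r(w)^4)$ plus subleading terms, and then needs to rule out cancellation between the matrix factor $\slashed{(y-x)}$ and the scalar prefactor, as well as to account for the weight $\gamma^0\Gamma(\vec x)\,\cdot\,\gamma^0\Gamma(\vec y)$ that actually enters the Hilbert--Schmidt norm on $\HSigma$. The paper avoids this expansion entirely. It splits the kernel of the difference as $t_1+t_2$ where $t_2$ collects the $O(|y-x|^2)$ remainders (and gives a Hilbert--Schmidt operator $T_2$, so that $P_\Sigma^\lambda-P_\Sigma^{\widetilde\lambda}\in I_2(\HSigma)$ iff $T_1\in I_2(\HSigma)$), and then uses a \emph{pointwise lower bound} on the full $p^-$, namely $\|p^-(z)\|\geq c\, e^{-m|\vec z|}/|\vec z|^3$ from Lemma~\ref{lem:lower-bounds}, together with the fact that $\gamma^0\Gamma(\vec x)$ is bounded below by a positive multiple of the identity, to obtain a non-integrable lower bound on the trace-density of $T_1^*\,(\gamma^0\Gamma)\,T_1\,(\gamma^0\Gamma)$. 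This bypasses both the small-$r$ asymptotic of $p^-$ and the cancellation question. If you want to complete your version, you would need to (a) establish that the angular integral of $|(A-\widetilde A)_\mu(x_0)\omega^\mu|^2\,\|\slashed\omega\|^2$ over the unit sphere of $T_{x_0}\Sigma$ is strictly positive (it is, but this must be argued against the $\gamma^0\Gamma$-weighted trace, not the bare Frobenius norm), and (b) show the remainder of $p^-$, multiplied by the linear $\lambda$-term, is already Hilbert--Schmidt so that it cannot save integrability. Using Lemma~\ref{lem:lower-bounds} is cleaner and is what the paper does.
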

This theorem is also proven in Section \ref{sec:pol-classes}.
From our next result we can infer that the operators $P_\Sigma^\lambda$ obey the
key property \eqref{eq:key-prop}.
\begin{theorem}
    \label{thm:evo-polclass}
    Given $A\in C^\infty_c(\R^4,\R^4)$,
    $\lambda\in \cG(A)$, and two Cauchy surfaces $\Sigma,\Sigma'$,
    one has 
    \begin{align}
        U_{A\Sigma'}P_{\Sigma'}^\lambda U_{\Sigma' A}-
        U_{A\Sigma}P_\Sigma^\lambda U_{\Sigma A}
        \in I_2(\HA),
        \label{eq:gen-key-prop}
    \end{align}
    where $U_{A\Sigma}$ and $U_{\Sigma A}$ are the Dirac evolution operators
    given in Definition~\ref{def:UA}.
\end{theorem}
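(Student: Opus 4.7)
The plan is to interpolate between $\Sigma$ and $\Sigma'$ by a smooth family $(\Sigma_t)_{t\in[0,1]}$ of Cauchy surfaces all obeying \eqref{eq:bound-grad_tSigma}, with $\Sigma_0=\Sigma$ and $\Sigma_1=\Sigma'$, and to prove that the operator family
\begin{align*}
Q(t):=U_{\Sigma\Sigma_t}^A\,P_{\Sigma_t}^{\lambda}\,U_{\Sigma_t\Sigma}^A:\HSigma\to\HSigma
\end{align*}
is $\|\cdot\|_{I_2}$-Lipschitz in $t$, so that $Q(1)-Q(0)\in I_2(\HSigma)$. Conjugating by the unitary $U_{A\Sigma}$ and using $U_{\Sigma A}U_{A\Sigma_t}=U_{\Sigma\Sigma_t}^A$ then yields the claim \eqref{eq:gen-key-prop}. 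By Theorem~\ref{thm:equivalence-pol-classes} applied on each $\Sigma_t$, replacing $\lambda$ by another element of $\cG(A)$ alters $P_{\Sigma_t}^{\lambda}$ only by a Hilbert–Schmidt operator whose norm is uniformly bounded in $t\in[0,1]$; hence I may restrict to the symmetric representative $\lambda=\lambda^A$ from \eqref{eq:special-lambda}.

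To analyze $\dot Q(t)$ I would write matrix elements of $P_{\Sigma_t}^{\lambda}$ between wave functions transported by $U_{\Sigma_t\Sigma}^A$ via Lemma~\ref{lem:Pminus-lambda} as a regularized double integral over $\Sigma_t\times\Sigma_t$, and differentiate in $t$ using both the motion of the integration domain in the normal direction $v_t n_t$ and the Dirac equations obeyed by the transported spinors. The key algebraic input is that $p^-$ solves the free Dirac equation in both arguments — a consequence of \eqref{eq:pminus} and the Klein–Gordon equation obeyed by $D$ — so that after integration by parts the transversal derivatives of the integrand collapse onto combinations of the form $\partial_\mu^x\lambda(x,y)-A_\mu(x)$ (and symmetrically on the $y$-side). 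By Definition~\ref{def:PLambda}(iv) these factors vanish on the diagonal and are $O_A(|x-y|)$ near it.

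This one-power gain is the crux of the argument. The near-diagonal singularity of the differentiated kernel is thereby reduced from the $|y-x|^{-3}$ blow-up of $p^-(y-x)$, inherited from the Bessel asymptotics $K_1(\xi)/\xi\sim\xi^{-2}$ of \eqref{eq:def-D}--\eqref{eq:K1} together with the comparability of $r(y-x)$ and $|y-x|$ on any $\Sigma_t$ satisfying \eqref{eq:bound-grad_tSigma}, down to $|y-x|^{-2}$, which is locally square-integrable on the six-dimensional domain $\Sigma_t\times\Sigma_t$. Combined with the uniform bound \eqref{eq:bound-Gamma} on $\Gamma(\vec x)$ and the compact support of $A$ — which confines the difference from the free integrand to a compact spacetime region — this shows that $\dot Q(t)$ has finite $I_2$-norm uniformly in $t\in[0,1]$. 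Bochner integration then gives $Q(1)-Q(0)=\int_0^1\dot Q(t)\,dt\in I_2(\HSigma)$ as desired.

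The hard part will be the singularity and regularization analysis in the last paragraph: one must justify interchanging the $\epsilon\downarrow 0$ limit from Lemma~\ref{lem:Pminus-lambda} with the $t$-differentiation and with the HS-norm estimation, and must ensure that the derivative transversal to $\Sigma_t$ applied to $p^-(y-x+i\epsilon u)$ does not destroy the one-power gain provided by condition (iv) of Definition~\ref{def:PLambda}. Once this uniform $L^2$-kernel bound is established, integration over the compact parameter interval closes the proof.
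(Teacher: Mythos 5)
Your interpolation strategy and your identification of the algebraic mechanism (Stokes' theorem, $p^-$ annihilated by the free Dirac operator, diagonal vanishing from Definition~\ref{def:PLambda}(iv)) are all on target and match the paper's Lemmas~\ref{lem:time-derivative} and~\ref{le:Dtp}. But your crucial claim --- that the one-power gain from condition (iv) reduces the singularity of $\dot Q(t)$'s kernel to $|y-x|^{-2}$ and that this is locally square-integrable --- contains an arithmetic error that breaks the argument. A kernel with $\|k(x,y)\| = O(|\vec y-\vec x|^{-\alpha})$ near the diagonal is Hilbert--Schmidt only for $\alpha < 3/2$: the HS-norm squared involves $\int_{|\vec z|<1}|\vec z|^{-2\alpha}\,d^3\vec z = 4\pi\int_0^1 r^{2-2\alpha}\,dr$, which for $\alpha=2$ is $4\pi\int_0^1 r^{-2}\,dr = \infty$. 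So after the one-power gain the kernel of $\dot Q(t)$ is still \emph{not} Hilbert--Schmidt, and $Q(t)$ cannot be shown $I_2$-Lipschitz by this estimate alone. The paper already signals this in Lemma~\ref{lem:Pminus-lambda}: $\Delta P_\Sigma^\lambda$, which has precisely the same $|\vec z|^{-2}$ singularity, is only proven compact, with only $|\Delta P_\Sigma^\lambda|^2$ (gaining a further power by convolution) being HS.

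This is the gap that the helper operator $S_\Sigma^A$ of Definition~\ref{def:S} exists to fill, and it is the centerpiece of the paper's argument (Section~\ref{sec:evolution-polarization-classes}, which the authors call ``the heart of this work''). Lemma~\ref{le:Dtp} isolates the leading term $-\tfrac{i}{2m}v(x)z^\mu E_\mu(x)\slashed\partial D(w)$ of $\mathcal{D}_t^A p^{A,\epsilon u}$, which behaves like $|\vec z|^{-2}$ and is not HS; Lemma~\ref{le:Dt-P+S} shows that $\mathcal{D}_t^A s_\Sigma^{A,\epsilon u}$ produces exactly the opposite leading term, so that $\mathcal{D}_t^A(p^A+s_\Sigma^A)$ has only $O(|\vec z|^{-1})$ errors, and these \emph{are} HS. Theorem~\ref{thm:inf} then packages $U_{A\Sigma_t}(P_{\Sigma_t}^A+S_{\Sigma_t}^A)U_{\Sigma_t A}$ as absolutely continuous in $I_2$, and since $S_\Sigma^A$ is itself HS by Lemma~\ref{lem:SA}, Theorem~\ref{thm:evo-polclass} follows. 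Your proposal, by omitting the cancellation via $S_\Sigma^A$ and miscounting the HS threshold, would stall exactly at the point you flag as ``the hard part.''
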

Instead of proving this theorem directly we prove it at the end of
Section~\ref{sec:evolution-polarization-classes} as consequence of
Theorem~\ref{thm:inf} below. The latter can be understood as an infinitesimal version
of Theorem~\ref{thm:evo-polclass}. To state Theorem~\ref{thm:inf} we consider a
family $(\Sigma_t)_{t\in T}$ of Cauchy surfaces encoded by
$\boldsymbol\Sigma$, see \eqref{eq:boldsigma}, such that $\Sigma=\Sigma_{t_0}$
and $\Sigma'=\Sigma_{t_1}$. In addition we
need the
following helper object $s_\Sigma^A$ defined in Definition~\ref{def:S} below as
well as the following notation.  Given an electromagnetic potential $A\in
C^\infty_c(\R^4,\R^4)$ and a Cauchy surface $\Sigma$ with future-directed unit
normal vector field $n$, we define the electromagnetic field tensor
$F_{\mu\nu}=\partial_\mu A_\nu-\partial_\nu A_\mu$ and
\begin{align}
    E_\mu:=F_{\mu\nu}n^\nu
    \label{eq:electric-field}
\end{align}
referred to as the ``electric field'' with respect to the local Cauchy surface
$\Sigma$.  In the special case $n=e_0=(1,0,0,0)$, this encodes just the electric
part of the electromagnetic field tensor.

Recall from the paragraph preceding Definition~\ref{def:HSigma} that we extended
the unit normal field $n$ on the Cauchy surface to a smooth unit normal field
$n:\R^4\times T\to\R^4$ and velocity field $v:\R^4\times T\to\R$, which induces
the ``electric field'' $E$ to be defined on $\R^4\times T$ as well.  In
particular, after this extension, the partial derivative $\partial
E_\mu(x,t)/\partial t=F_{\mu\nu}(x) \, \partial {n_t}^\nu(x)/\partial t$ then makes
sense.
\begin{definition}
    \label{def:S}
    Recall the definitions of $r(w)$ and $D(w)$ given in \eqref{eq:def-r}
    and \eqref{eq:def-D}, respectively. 
    For $\epsilon>0$, $u\in\operatorname{Past}$, and $x,y\in\R^4$,
    we define the integral kernel
    \begin{align}
        s_\Sigma^{A,\epsilon u}(x,y):=\frac{1}{8m}\slashed{n}(x)\slashed{E}(x)
        r(w)^2 \slashed\partial D(w),
        \quad\text{ where }
        w=y-x+i\epsilon u.
    \end{align}
    Furthermore,  for $x-y$ being space-like (in particular $x\neq y$), 
    we also define the integral kernel
    \begin{align}
        s_\Sigma^A(x,y)=s_\Sigma^{A,0}(x,y):=
        \lim_{\epsilon\downarrow 0}s_\Sigma^{A,\epsilon u}(x,y)
        =\frac{1}{8m}\slashed{n}(x)\slashed{E}(x)
        r(y-x)^2 \slashed\partial D(y-x).
    \end{align}
\end{definition}

We remark that restricted to $x$ and $y$ within a single Cauchy surface
$\Sigma$, the value of the kernel $s_\Sigma^{A,\epsilon u}(x,y)$ depends only on
$\Sigma$ through its normal field $n:\Sigma\to\R^4$. In this case
the definition makes sense without specifying neither the velocity field
$v$ nor the extension of $n$ and $v$ to $\R^4\times T$. In particular,
$s_\Sigma^{A,\epsilon u}(x,y)$ depends only on the Cauchy surface $\Sigma$
but not on the choice of a family $(\Sigma_t)_{t\in T}$.
This stands in contrast
to the derivative $\partial s_{\Sigma_t}^{A,\epsilon u}/\partial t$, which makes
sense everywhere only given a family $(\Sigma_t)_{t\in T}$ and
the extended version of $n$.

Exploiting the properties of $D(w)$ given in Lemma \ref{lemma:kern-p-minus}
and in Corollary~\ref{lem:upper-bounds} in the appendix we shall find:
\begin{lemma}
    \label{lem:SA}
    Let $u\in\past$.
    \begin{enumerate}[(i)]
        \item 
            The integral kernels $s_\Sigma^{A,\epsilon u}$, $\epsilon\geq 0$, 
            give rise to Hilbert-Schmidt operators 
            \begin{align}
                S_\Sigma^{A,\epsilon u}:
                \HSigma\selfmaps,
                \qquad
                S_\Sigma^{A,\epsilon u}\psi(x):=
                \int_\Sigma s_\Sigma^{A,\epsilon u}(x,y)\,i_\gamma(d^4y)\,\psi(y)
                \quad\text{ for almost all }x\in\Sigma,
            \end{align}
            $S_\Sigma^A:=S_\Sigma^{A,0}$,
            with the property that
            $\|S_\Sigma^A - S_\Sigma^{A,\epsilon u}\|_{I_2(\HSigma)}\konv{\epsilon\downarrow 0}0$.
        \item
            Similarly, for $t\in T$, the integral kernels  
            $\partial s_{\Sigma_t}^{A,\epsilon u}/\partial t$, $\epsilon\geq 0$,
            give rise to Hilbert-Schmidt operators 
            \begin{align}
                \dot S_{\Sigma_t}^{A,\epsilon u}:
                \HSigma\selfmaps,
                \qquad
                \dot S_{\Sigma_t}^{A,\epsilon u}\psi(x):=
                \int_{\Sigma_t} \frac{\partial s_{\Sigma_t}^{A,\epsilon u}}{\partial t}(x,y)\,
                i_\gamma(d^4y)\,\psi(y)
                \quad\text{ for almost all }x\in\Sigma_t,
            \end{align}
            $\dot S_{\Sigma_t}^A:=\dot S_{\Sigma_t}^{A,0}$,
            with the property that
            $\sup_{t\in T}\|\dot S_{\Sigma_t}^A\|_{I_2(\cH_{\Sigma_t})}<\infty$
            and
            $\|\dot S_{\Sigma_t}^A - \dot S_{\Sigma_t}^{A,\epsilon u}\|_{I_2(\cH_{\Sigma_t})}
            \konv{\epsilon\downarrow 0}0$ for all $t$.
    \end{enumerate}
 \end{lemma}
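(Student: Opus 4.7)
The plan is to reduce both parts to pointwise estimates on the integral kernels and then invoke dominated convergence in Hilbert--Schmidt norm. The key input is the explicit form $D(w) = -\frac{m^3}{2\pi^2}\frac{K_1(mr(w))}{mr(w)}$, together with the standard asymptotics $K_1(\xi)\sim 1/\xi$ as $\xi\downarrow 0$ and $K_1(\xi)=O(\xi^{-1/2}e^{-\xi})$ as $\xi\to\infty$ (along with analogous estimates for $K_1'$), made rigorous via Corollary \ref{lem:upper-bounds}. These expansions, combined with the chain rule applied to $D(w)$, yield a bound of the form
\[
\bigl|r(w)^2\slashed\partial D(w)\bigr| \leq C\,\bigl(|w|^{-1}+1\bigr)\,e^{-cm\,\re r(w)},\qquad w\in\R^4+i\past,
\]
with constants independent of the regularization shift $\epsilon u$. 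For $x,y\in\Sigma$ and $w=y-x+i\epsilon u$, the inequality \eqref{eq:rz-z} yields $\re r(w)\geq c'|\vec y-\vec x|$ uniformly in small $\epsilon\geq 0$. Together with the compact support of $\slashed E$ inherited from $A$, this gives a single dominating kernel
\[
\bigl|s_\Sigma^{A,\epsilon u}(x,y)\bigr| \leq C_1\,\mathbf{1}_{\supp E}(x)\,\frac{e^{-c_2 m|\vec y-\vec x|}}{|\vec y-\vec x|},\qquad \epsilon\in[0,\epsilon_0].
\]

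For part (i), inserting this bound into the Hilbert--Schmidt norm squared, parametrizing $\Sigma$ via \eqref{eq:parametrize Sigma} and using \eqref{eq:bound-Gamma}, I obtain
\[
\|S_\Sigma^{A,\epsilon u}\|_{I_2(\HSigma)}^2 \leq C_3 \int_{\Sigma\cap\supp E}d^3\vec x\int_{\R^3}\frac{e^{-2c_2 m|\vec y-\vec x|}}{|\vec y-\vec x|^2}\,d^3\vec y,
\]
which is finite because the inner integral reduces in spherical coordinates to $4\pi\int_0^\infty e^{-2c_2 m r}\,dr<\infty$. This gives a uniform Hilbert--Schmidt bound in $\epsilon$; for the convergence $S_\Sigma^{A,\epsilon u}\konv{\epsilon\downarrow 0}S_\Sigma^A$ in $I_2$-norm, the pointwise convergence $s_\Sigma^{A,\epsilon u}(x,y)\to s_\Sigma^A(x,y)$ for $x\neq y$ is ensured by the analyticity of $D$ stated in Lemma \ref{lemma:kern-p-minus}, and the square-integrable dominating function above allows one to apply dominated convergence to $|s_\Sigma^{A,\epsilon u}-s_\Sigma^{A}|^2$.

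For part (ii), the crucial observation is that under the smooth extensions of $n$ and $v$ to $\R^4\times T$, the factor $r(w)^2\slashed\partial D(w)$ depends on $x,y,t$ only through $w=y-x+i\epsilon u$ with $x,y$ treated as independent variables, hence carries no $t$-dependence. The product rule therefore gives
\[
\frac{\partial s_{\Sigma_t}^{A,\epsilon u}}{\partial t}(x,y) = \frac{1}{8m}\,\frac{\partial [\slashed n_t(x)\slashed E_t(x)]}{\partial t}\,r(w)^2\slashed\partial D(w),
\]
so the differentiated kernel has exactly the same singularity and decay structure as $s_{\Sigma_t}^{A,\epsilon u}$, with the bounded smooth prefactor $\slashed n\slashed E$ replaced by $\partial_t(\slashed n\slashed E)$. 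Since $F=dA$ and $E_t=F\cdot n_t$ are both supported in $\supp A$, the prefactor $\partial_t(\slashed n_t\slashed E_t)(x)$ is smooth in $(x,t)$ and supported in $\{(x,t)\in\boldsymbol\Sigma : x\in\supp A\}$, which is compact; its sup norm is therefore bounded uniformly in $t\in T$, and the same argument as in (i) yields both $\sup_t\|\dot S_{\Sigma_t}^A\|_{I_2}<\infty$ and the $\epsilon\downarrow 0$ Hilbert--Schmidt convergence.

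The main obstacle is establishing the uniform pointwise bound on $r(w)^2\slashed\partial D(w)$: combining the asymptotics of $K_1$ near $0$ and $\infty$ with the principal-branch subtleties of $r$ on $\R^4+i\past$ requires care, in particular to ensure the constants do not blow up as $\epsilon\downarrow 0$ near the light-cone. Once this bound is extracted from Corollary \ref{lem:upper-bounds}, everything else reduces to dominated convergence and the bookkeeping of smooth, compactly supported prefactors.
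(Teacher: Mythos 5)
Your proof is correct and takes essentially the same route as the paper, whose own proof of this lemma is a one-paragraph sketch citing exactly the ingredients you flesh out: the kernel bound \eqref{eq:bound-r2partialD} from Corollary~\ref{lem:upper-bounds}, compact support of $E$, boundedness of $\partial_t(\slashed n_t\slashed E_t)$, and dominated convergence, with part (ii) reduced to part (i) by the observation that the $t$-derivative hits only the prefactor $\slashed n_t\slashed E_t$. One minor imprecision: your intermediate bound $|r(w)^2\slashed\partial D(w)|\leq C(|w|^{-1}+1)e^{-cm\,\re r(w)}$, stated for all $w\in\R^4+i\past$, does not hold uniformly near the light cone (where $|r(w)|$ can be much smaller than $|w|$, making $|r(w)^2\slashed\partial D(w)|\sim|w|/|r(w)|^2$ blow up faster than $|w|^{-1}$); the paper's bound \eqref{eq:bound-r2partialD} carries the hypothesis $|z^0|\leq\Vmax|\vec z|$ precisely to rule this out, and since you immediately restrict to $z=y-x$ with $x,y\in\Sigma$ and invoke \eqref{eq:rz-z}, your final estimate on $s_\Sigma^{A,\epsilon u}$ is nonetheless correct.
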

With this ingredient our infinitesimal version of Theorem~\ref{thm:evo-polclass}
can  be formulated as follows; for technical convenience, we phrase it only for
the special choice $\lambda^A\in\cG(A)$ defined in \eqref{eq:special-lambda}.
\begin{theorem}
    \label{thm:inf}
    Given $A\in C^\infty_c(\R^4,\R^4)$, any
    smooth family of Cauchy surfaces $\boldsymbol\Sigma$, cf.
    \eqref{eq:boldsigma}, and $t_0,t_1\in T$, and  one has
    \begin{align}
        \label{eq:plambda-inf}
        U_{A\Sigma_{t_1}}
        \left(P_{\Sigma_{t_1}}^A+S_{\Sigma_{t_1}}^A\right)
        U_{\Sigma_{t_1} A}-
        U_{A\Sigma_{t_0}}
        \left(P_{\Sigma_{t_0}}^A+S_{\Sigma_{t_0}}^A\right)
        U_{\Sigma_{t_0} A}
        =
        \int_{t_0}^{t_1} 
        U_{A\Sigma_{t}} R(t) U_{\Sigma_{t}A}
        \, dt
    \end{align}
    for a family of Hilbert-Schmidt operators $R(t)$, $t\in T$, with
    $\sup_{t\in T} \| R(t) \|_{I_2(\cH_{\Sigma_t})}<\infty$. The
    integral in \eqref{eq:plambda-inf} is understood in the weak sense.
\end{theorem}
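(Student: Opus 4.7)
The plan is to prove \eqref{eq:plambda-inf} by computing the $t$-derivative of the family
$$T(t) := U_{A\Sigma_t}\bigl(P_{\Sigma_t}^A + S_{\Sigma_t}^A\bigr)\,U_{\Sigma_t A}$$
in the weak sense on the dense subspace $\CA\times\CA$, and identifying $R(t)$ as the Hilbert-Schmidt operator on $\cH_{\Sigma_t}$ that represents this derivative. First I would regularize by working with $P_{\Sigma_t}^{\lambda^A,\epsilon u}$ and $S_{\Sigma_t}^{A,\epsilon u}$ for $\epsilon>0$; by Lemmas \ref{lem:Pminus-lambda} and \ref{lem:SA} all convergences as $\epsilon\downarrow 0$ take place in $I_2$-norm. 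For $\phi,\psi\in\CA$, the identity $U_{\Sigma_t A}\psi = \psi|_{\Sigma_t}$ turns $\langle\phi,T(t)\psi\rangle_{\HA}$ into a smooth double surface integral over $\Sigma_t\times\Sigma_t$ whose $t$-derivative is the object to analyze.

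The main computation is to convert this $t$-derivative into a bulk integral over the 4-tube swept by $(\Sigma_t)_t$ in each of $x$ and $y$, via Stokes' theorem for the forms $i_\gamma(d^4x)$ and $i_\gamma(d^4y)$. After Stokes, one obtains a bulk integrand of the schematic form $\partial_\mu^x(\overline\phi\gamma^\mu\,\cdots) + \partial_\nu^y(\cdots\gamma^\nu\psi)$, plus the explicit $t$-derivative $\partial s^{A,\epsilon u}_{\Sigma_t}/\partial t$ coming from the $n_t$-dependence of $s_\Sigma^A$. Because $\phi,\psi\in\CA$ solve $(i\slashed\partial-\slashed A)\phi = m\phi$ and the adjoint equation for $\overline\phi$, every derivative falling on $\overline\phi$ or $\psi$ is algebraically replaced by $\slashed A$ and $m$; the surviving derivatives act on the kernel $k_{\Sigma_t}^\epsilon(x,y) := e^{-i\lambda^A(x,y)}p^-(y-x+i\epsilon u) + s_{\Sigma_t}^{A,\epsilon u}(x,y)$.

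The heart of the argument is the cancellation that produces a Hilbert-Schmidt remainder. With $\lambda^A(x,y) = \tfrac12(A(x)+A(y))_\mu(x^\mu-y^\mu)$ from \eqref{eq:special-lambda}, the $\slashed A$-contributions from the Dirac equation combine with the derivatives of $e^{-i\lambda^A}$ to produce the antisymmetric combination $F_{\mu\nu}(x) = \partial_\mu A_\nu(x)-\partial_\nu A_\mu(x)$ on the diagonal; contracted against the surface normal $n^\nu(x)$ arising from Stokes this is exactly the electric field $E_\mu(x)$ of \eqref{eq:electric-field}. Using $p^-=\tfrac{-i\slashed\partial+m}{2m}D$ together with $(\square+m^2)D=0$ away from the light cone, the resulting near-diagonal singular term takes the form $\slashed n(x)\slashed E(x)$ times a derivative of $D(y-x)$, which matches $\partial s^{A,\epsilon u}_{\Sigma_t}/\partial t$ from Definition \ref{def:S}. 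The weighting factor $r^2$ in $s_\Sigma^A = \tfrac{1}{8m}\slashed n\slashed E\,r^2\slashed\partial D$ is precisely tailored to cancel the $r^{-2}$ singularity of $D$ near the diagonal, so that the remainder $R^\epsilon(t)(x,y)$ has only an integrable diagonal singularity. The $K_1$-estimates of Corollary \ref{lem:upper-bounds}, combined with the compact support of $A$ and the uniform bounds \eqref{eq:bound-grad_tSigma}, \eqref{eq:bound-Gamma}, then yield $\sup_{t,\epsilon}\|R^\epsilon(t)\|_{I_2(\cH_{\Sigma_t})}<\infty$.

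The main obstacle is the precise bookkeeping of this cancellation: the extrinsic geometry of the moving $\Sigma_t$ mixes tangential and normal derivatives, so that matching the explicit $t$-derivative of $s^A_{\Sigma_t}$ against the on-diagonal behaviour of $\partial^{x,y}\!\bigl[e^{-i\lambda^A}p^-\bigr]$ requires writing every quantity covariantly and tracking the exact numerical coefficients (in particular the factor $\tfrac{1}{8m}$). Once the matching is done and $\|R^\epsilon(t)-R(t)\|_{I_2(\cH_{\Sigma_t})}\to 0$ is established uniformly in $t$, integrating the weak identity $\partial_t\langle\phi,T(t)\psi\rangle_{\HA} = \langle\phi,U_{A\Sigma_t}R(t)U_{\Sigma_t A}\psi\rangle_{\HA}$ over $t\in[t_0,t_1]$ and extending by density yields \eqref{eq:plambda-inf}.
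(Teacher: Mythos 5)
Your proposal is correct and follows essentially the same route as the paper's proof: regularize with $\epsilon>0$, use Stokes' theorem (Lemma~\ref{lem:time-derivative} / Corollary~\ref{cor:time-derivative}) to turn the $t$-derivative of the double surface integral into a bulk integral, exploit the Dirac equation on $\phi,\psi\in\CA$ so that all derivatives land on the kernel, observe that $\lambda^A$ produces $F_{\mu\nu}$ on the diagonal and hence $E_\mu$ after contraction with $n^\nu$, and cancel the leading singular term $\frac{i}{2m}v\,z^\mu E_\mu\slashed\partial D$ against $\mathcal{D}_t^A s_\Sigma^{A,\epsilon u}$ (Lemmas~\ref{le:Dtp}, \ref{le:dr2dD}, \ref{le:Dt-P+S}), with the $K_1$-estimates delivering the uniform $I_2$-bound (Corollary~\ref{cor:r7r8}). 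The one small imprecision: the weighting $r^2$ in $s_\Sigma^A=\frac{1}{8m}\slashed n\slashed E\,r^2\slashed\partial D$ is there to tame the $r^{-3}$ singularity of $\slashed\partial D$ (not the $r^{-2}$ of $D$ itself) down to an integrable $r^{-1}$; the genuine cancellation is of the whole leading term $\frac{i}{2m}v\,z^\mu E_\mu\slashed\partial D$ between $\mathcal{D}_t^A p^{A,\epsilon u}$ and $\mathcal{D}_t^A s_\Sigma^{A,\epsilon u}$.
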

Note that for the choice $\lambda\in\cG(A)$, $\Sigma_{t_1}=\Sigma$,
$\Sigma_{t_0}=\Sigma_\inn$ one has
$P_{\Sigma_\inn}^\lambda=P_{\Sigma_\inn}^-$, and the restriction of
\eqref{eq:gen-key-prop} to Cauchy surface $\Sigma$ yields property
$U_{\Sigma\Sigma_\inn}^AP_{\Sigma_\inn}^-U_{\Sigma_\inn\Sigma}^A -
P_\Sigma^\lambda\in I_2(\HSigma)$, i.e., the key property \eqref{eq:key-prop}.
The proof of Theorem~\ref{thm:inf} given in
Section~\ref{sec:evolution-polarization-classes} is the heart of this work. 

\subsection{Proofs of Main Results}
\label{sec:main-results-proofs}
In this section, we prove the main results under the assumption that the claims
in Section~\ref{sec:operator_PA} are true.  The proofs of these assumed claims
are then provided in
Sections~\ref{sec:pol-classes}-\ref{sec:evolution-polarization-classes}.  The
connection of how to infer the properties of $\class_\Sigma(A)$ from the
properties of the operators $P_\Sigma^\lambda$ is given by the following lemma.
\begin{lemma}
    \label{lem:pol-class-plambda}
    Let $A\in C^\infty_c(\R^4,\R^4)$, $\Sigma$ be a Cauchy surface, and $\lambda\in\cG(A)$. Then for every polarization $V$ in $\HSigma$, we have
    \begin{align}
        V\in \class_\Sigma(A)
        \qquad
        \Leftrightarrow
        \qquad
        P_\Sigma^V-P_\Sigma^\lambda \in I_2(\HSigma).
    \end{align}
\end{lemma}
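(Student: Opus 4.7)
The plan is to reduce the lemma to showing that, for the interpolating representative $W_0:=U^A_{\Sigma\Sigma_\inn}\cH^-_{\Sigma_\inn}$ used in Definition~\ref{def:pol-classes}, one has $P_\Sigma^{W_0}-P_\Sigma^\lambda\in I_2(\HSigma)$. Granting this, since $V\in\class_\Sigma(A)$ is by definition the statement $V\approx W_0$, i.e.\ $P_\Sigma^V-P_\Sigma^{W_0}\in I_2(\HSigma)$, the ideal property of $I_2$ immediately yields the claimed equivalence $V\in\class_\Sigma(A)\iff P_\Sigma^V-P_\Sigma^\lambda\in I_2(\HSigma)$. Note that by unitarity of the Dirac evolution, $P_\Sigma^{W_0}=U^A_{\Sigma\Sigma_\inn}P^-_{\Sigma_\inn}U^A_{\Sigma_\inn\Sigma}$, so that the reduction targets precisely an instance of the key property \eqref{eq:key-prop}.

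For the key statement $P_\Sigma^{W_0}-P_\Sigma^\lambda\in I_2(\HSigma)$, I would first apply Theorem~\ref{thm:evo-polclass} with the second Cauchy surface chosen as $\Sigma_\inn$, yielding
\[
U_{A\Sigma_\inn}P^\lambda_{\Sigma_\inn}U_{\Sigma_\inn A}-U_{A\Sigma}P^\lambda_\Sigma U_{\Sigma A}\in I_2(\HA).
\]
Conjugation by the unitary pair $U_{\Sigma A}$ and $U_{A\Sigma}$ preserves the Hilbert-Schmidt ideal and transforms this statement into
\[
U^A_{\Sigma\Sigma_\inn}P^\lambda_{\Sigma_\inn}U^A_{\Sigma_\inn\Sigma}-P^\lambda_\Sigma\in I_2(\HSigma).
\]

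To eliminate the $\lambda$ on $\Sigma_\inn$, I would use the defining property $\supp A\cap \Sigma_\inn=\emptyset$ from \eqref{eq:Sigma_inn}, which forces $A|_{T\Sigma_\inn}=0$. The constant function $\widetilde\lambda\equiv 0$ satisfies the four conditions of Definition~\ref{def:PLambda} and therefore lies in $\cG(0)$; a direct comparison of \eqref{eq:pminus-claim} with \eqref{eq:Pminus-lambda-epsilon} (using $e^{-i\cdot 0}\equiv 1$) shows that $P^{\widetilde\lambda}_{\Sigma_\inn}=P^-_{\Sigma_\inn}$. Applying Theorem~\ref{thm:equivalence-pol-classes} on $\Sigma_\inn$ to the pairs $(A,\lambda)$ and $(0,\widetilde\lambda)$, whose tangential potentials coincide, delivers $P^\lambda_{\Sigma_\inn}-P^-_{\Sigma_\inn}\in I_2(\cH_{\Sigma_\inn})$. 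Conjugating by the unitary $U^A_{\Sigma\Sigma_\inn}$ and subtracting from the previous display produces $P_\Sigma^{W_0}-P_\Sigma^\lambda\in I_2(\HSigma)$.

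The substantive analytic content is entirely packaged in Theorems~\ref{thm:evo-polclass} and \ref{thm:equivalence-pol-classes}, so this lemma is essentially bookkeeping. The only small obstacle is making sure the unitary conjugations are distributed correctly and verifying the two compatibility facts used above: that $\widetilde\lambda\equiv 0\in\cG(0)$ and that this choice reproduces $P^-_{\Sigma_\inn}$; both are immediate from the definitions.
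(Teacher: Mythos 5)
Your proof is correct and follows essentially the same route as the paper: both reduce the claim to the key property \eqref{eq:key-prop} via Theorem~\ref{thm:evo-polclass} specialized to $\Sigma'=\Sigma_\inn$, conjugated by $U_{\Sigma A}$. You are slightly more careful than the paper in handling the transition $P_{\Sigma_\inn}^\lambda\leadsto P_{\Sigma_\inn}^-$: the paper's remark after Theorem~\ref{thm:evo-polclass} asserts the equality $P_{\Sigma_\inn}^\lambda=P_{\Sigma_\inn}^-$, which holds exactly only when $\lambda$ vanishes on $\Sigma_\inn\times\Sigma_\inn$ (e.g.\ for $\lambda=\lambda^A$), whereas your appeal to Theorem~\ref{thm:equivalence-pol-classes} with $\widetilde\lambda\equiv 0\in\cG(0)$ correctly covers general $\lambda\in\cG(A)$ by giving the needed Hilbert--Schmidt-class difference.
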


\begin{proof}
    By Definition~\ref{def:pol-classes}, $V\in \class_\Sigma(A)$ is equivalent to
    \begin{align}
        \label{eq:class-pol-step}
        P_\Sigma^V-U_{\Sigma\Sigma_\inn}^AP_{\Sigma_\inn}^- U_{\Sigma_\inn\Sigma}^A \in
        I_2(\HSigma).
    \end{align}
    On the other hand, Theorem~\ref{thm:evo-polclass}
    implies
    \begin{align}
        P_\Sigma^\lambda-U_{\Sigma\Sigma_\inn}^AP_\Sigma^- U_{\Sigma_\inn\Sigma}^A \in
        I_2(\HSigma).
    \end{align}
    Thus, statement \eqref{eq:class-pol-step} is equivalent to
    $P_\Sigma^V-P_\Sigma^\lambda\in I_2(\HSigma)$.
\end{proof}

\begin{proof}[Proof of Theorem~\ref{thm:ident-pol-class}]
    $\class_\Sigma(A)=\class_\Sigma(\widetilde A)$ holds true if and only
    if there are  $V\in \class_\Sigma(A)$ and $W\in \class_\Sigma(\widetilde A)$ such that
    \begin{align}
        \label{eq:ident-pol-class-step}
        P_\Sigma^V-P_\Sigma^W\in I_2(\HSigma).
    \end{align}
    Let $\lambda \in \cG(A)$ and $\widetilde \lambda\in \cG(\widetilde A)$.  In view
    of Lemma~\ref{lem:pol-class-plambda}, statement
    \eqref{eq:ident-pol-class-step} is equivalent to
    $P_\Sigma^\lambda - P_\Sigma^{\widetilde \lambda} \in I_2(\HSigma)$.  Due to
    Theorem~\ref{thm:equivalence-pol-classes} the latter is equivalent to
    $A|_{T\Sigma}=\widetilde A|_{T\Sigma}$, which proves the claim.
\end{proof}

\begin{proof}[Proof of Thorem~\ref{thm:lorentz-gauge}]
    Claim (i): Is is sufficient to prove that there exist 
    $V\in \class_\Sigma(A)$ and $W\in
    \class_{\Lambda\Sigma}(\Lambda A(\Lambda^{-1}\cdot))$ such that 
    $L^{(S,\Lambda)}\, P_\Sigma^V\,(L^{(S,\Lambda)})^{-1} - P_{\Lambda\Sigma}^W
    \in I_2(\cH_{\Lambda\Sigma})$. We remark that for the {\it linear form} $A$,
    $\Lambda A$ stands for the linear form with coordinates ${\Lambda_\mu}^\nu
    A_\nu$, while for a
{\it vector} $x$, the term $\Lambda x$ stands for the vector with coordinates
${\Lambda^\mu}_\nu x^\nu$.
    We take $\lambda\in\cG(A)$, e.g., $\lambda=\lambda^A$ from \eqref{eq:special-lambda}.
    Thanks to Lemma~\ref{lem:pol-class-plambda}, for all $V\in
    \class_\Sigma(A)$ 
    we have $P_\Sigma^V -
    P_\Sigma^\lambda\in I_2(\HSigma)$. 
    First, let us discuss how such a $P_\Sigma^\lambda$ behaves under the Lorentz
    transforms $L^{(S,\Lambda)}$.  
    For $\epsilon>0$ and 
    $u\in\operatorname{Past}$, the integral kernel 
    $p_\Sigma^{\lambda,\epsilon u}(x,y)=e^{-i\lambda(x,y)}p_-(y-x+i\epsilon u)$ of 
    $P_\Sigma^{\lambda,\epsilon u}$, cf.
    \eqref{eq:Pminus-lambda-epsilon}, transforms as follows:
    The integral kernel of 
    $L^{(S,\Lambda)}_\Sigma P_\Sigma^{\lambda,\epsilon u} (L^{(S,\Lambda)}_\Sigma)^{-1}$ 
    is given by
    \begin{align} 
        S p_\Sigma^{\lambda,\epsilon u}(\Lambda^{-1}x,\Lambda^{-1}y)  S^*
        & =
        e^{-i\lambda(\Lambda^{-1} x,\Lambda^{-1} y)} \, S \,
        p^-(\Lambda^{-1}(y-x)+i\epsilon u)) \, S^*
        \cr
        & =
        e^{-i\lambda(\Lambda^{-1} x,\Lambda^{-1} y)}
        p^-(y-x+i\epsilon \Lambda u)
        =
        p_{\Lambda\Sigma}^{\overline\lambda,\epsilon \Lambda  u}(x,y),
    \end{align}
    where $\overline \lambda(x,y)=\lambda(\Lambda^{-1}x,\Lambda^{-1}y)$.
    We claim $\overline \lambda\in\cG(\Lambda A(\Lambda^{-1}\cdot))$.
    Indeed, $\overline\lambda$
    clearly fulfills conditions (i)-(iii) of the Definition~\ref{def:PLambda} 
    of $\cG(\Lambda A(\Lambda^{-1}\cdot))$. It also fulfills condition (iv)
    since
    \begin{align}
        \frac{\partial}{\partial x^\mu} \overline\lambda(x,y) \big|_{y=x}
        & =
        \frac{\partial}{\partial x^\mu} \lambda(\Lambda^{-1}x,\Lambda^{-1}y)
        \big|_{y=x}
        =
        {(\Lambda^{-1})^\nu}_\mu
        \frac{\partial}{\partial z^\nu} \lambda(z,\Lambda^{-1}y)
        \big|_{z=\Lambda^{-1}x,y=x}
        \cr
        & =
        {\Lambda_\mu}^\nu A_\nu(\Lambda^{-1}x)
        \label{eq:p-lorentz-transformed}
    \end{align}
    and similarly
    $ \partial^y_\mu \overline\lambda(x,y) \big|_{x=y} =
    -{\Lambda_\mu}^\nu A_\nu(\Lambda^{-1}x)$,
    where we have used ${(\Lambda^{-1})^\nu}_\mu={\Lambda_\mu}^\nu$.
    This shows 
    $L^{(S,\Lambda)}_\Sigma P_\Sigma^{\lambda,\epsilon u}
    (L^{(S,\Lambda)}_\Sigma)^{-1} =  P_\Sigma^{\overline{\lambda},\epsilon
    \Lambda u},$
    which implies
    $L^{(S,\Lambda)}_\Sigma P_\Sigma^\lambda (L^{(S,\Lambda)}_\Sigma)^{-1} =  P_\Sigma^{\overline{\lambda}}$
    in the limit as $\epsilon\downarrow 0$; recall from 
    Lemma~\ref{lem:Pminus-lambda} that the limit does not depend on
    the choice of $u,\Lambda u\in\operatorname{Past}$.

    Again by
    Lemma~\ref{lem:pol-class-plambda}, there is a 
    $W\in \class_{\Lambda\Sigma}(\Lambda A(\Lambda^{-1}\cdot))$ such that
    $P_{\Lambda\Sigma}^W-P_{\Lambda\Sigma}^{\overline\lambda}\in
    I_2(\cH_{\Lambda\Sigma})$.
    We conclude
    \begin{align}
        L^{(S,\Lambda)}_\Sigma \, P_\Sigma^V \, \left(L^{(S,\Lambda)}_\Sigma\right)^{-1}
        -
        P_{\Lambda\Sigma}^W
        =
        L^{(S,\Lambda)}_\Sigma \, \left( P_\Sigma^V - P_\Sigma^\lambda \right) \,
        \left(L^{(S,\Lambda)}_\Sigma\right)^{-1}
        -
        \left( P_{\Lambda\Sigma}^W - P_{\Lambda\Sigma}^{\overline \lambda}
        \right)
        \in
        I_2(\cH_{\Lambda \Sigma}).
    \end{align}

    Claim (ii): The integral kernel of 
    $e^{-i\Omega}P_\Sigma^{\lambda,\epsilon u} e^{i\Omega}$ for $\lambda\in\cG(A)$,
    $\epsilon>0$ and $u\in\operatorname{Past}$ equals
    \begin{align}
        e^{-i\Omega(x)}p_\Sigma^{\lambda,\epsilon u}(x,y) e^{i\Omega(y)}
        =
        e^{-i\Omega(x)} e^{-i\lambda(x,y)}
        p^-(y-x+i\epsilon u) e^{i\Omega(y)}
        =
        p_\Sigma^{\overline\lambda,\epsilon u} (x,y),
    \end{align}
    where $\overline \lambda(x,y)=\Omega(x)+\lambda(x,y)-\Omega(y)$, which
    clearly fulfills $\overline\lambda\in \cG(A+\partial \Omega)$; 
    cf.\ Definition~\ref{def:PLambda}.
    Taking the limit as $\epsilon\downarrow 0$, 
    the claim follows from the same kind of reasoning as in part (i).
\end{proof}

Finally, one can also use the self-adjoint operator $P_\Sigma^A$ from
\eqref{eq:P-special-lambda} 
to construct a unitary operator
$e^{Q^A_\Sigma}:\HSigma\selfmaps$ 
which adapts the standard polarization $\HSigma^-$ to one corresponding to
$A|_{T\Sigma}$, more precisely, $e^{Q^A_\Sigma}\HSigma^-\in \class_\Sigma(A)$.
It is defined as follows:
\begin{definition}
We set
\begin{align}\label{eq:V_Sigma}
    Q_\Sigma^A
    &:=
    [P_\Sigma^A,P_\Sigma^-]=
    P_{\Sigma}^+ (P_\Sigma^A - P_\Sigma^-) P_{\Sigma}^-
    -
    P_{\Sigma}^- (P_\Sigma^A - P_\Sigma^-) P_{\Sigma}^+.
\end{align}
\end{definition}

\begin{proof}[Proof of Theorem~\ref{thm:representative}]
  In this proof, we use a $2\times 2$-matrix notation for
linear operators of the type $\HSigma\selfmaps$. This matrix notation
always refers to the splitting $\HSigma=\HSigma^+ \oplus \HSigma^-$.
In particular, we set
\begin{align}
        \begin{pmatrix}
            \Delta_{++} & \Delta_{+-} \\
            \Delta_{-+} & \Delta_{--}
        \end{pmatrix}
        =
        \Delta P_\Sigma^{\lambda^A}=P_\Sigma^A-P_\Sigma^-,
\end{align}
cf. \eqref{eq:pdiff} for $\lambda=\lambda^A$.
Using this matrix notation, we write
\begin{align}
  \label{eq:Q-as-matrix}
        Q_\Sigma^A=
        \begin{pmatrix}
            0 & \Delta_{+-} \\
            -\Delta_{-+} & 0
        \end{pmatrix}
.
\end{align}
In the following we use the notation $X=Y \mod I_2(\HSigma)$ to mean $X-Y\in
I_2(\HSigma)$. By (iii) of Lemma~\ref{lem:Pminus-lambda} we know
that $(\Delta P_\Sigma^{\lambda^A})^2\in I_2(\HSigma)$, and therefore
    \begin{align}
        (P_\Sigma^A)^2 
        =
        (P_\Sigma^-+\Delta P_\Sigma^{\lambda^A})^2 
        = P_\Sigma^A +
        \begin{pmatrix}
            -\Delta_{++} & 0 \\
            0 & \Delta_{--}
        \end{pmatrix}
        \mod I_2(\HSigma).
        \label{eq:PA-square-mod-I2}
    \end{align}
    Furthermore, Lemma~\ref{lem:pol-class-plambda} implies for all $V\in \class_\Sigma(A)$
    that the corresponding orthogonal projector $P_\Sigma^V$
    fulfills $P_\Sigma^A-P_\Sigma^V\in I_2(\HSigma)$.
    However, this means also that $(P_\Sigma^A)^2-P_\Sigma^A\in
    I_2(\HSigma)$, and therefore, $\Delta_{++},\Delta_{--}\in I_2(\HSigma)$;
    see \eqref{eq:PA-square-mod-I2}.
    In conclusion, we obtain 
    \begin{align}
        P_\Sigma^A=P_\Sigma^- + \Delta P_\Sigma^{\lambda^A} = 
        \begin{pmatrix}
            0 & \Delta_{+-} \\
            \Delta_{-+} & \id_{\HSigma^-}
        \end{pmatrix}
        \mod I_2(\HSigma).
    \end{align}
    Since $(\Delta P_\Sigma^{\lambda^A})^2\in I_2(\HSigma)$ we have 
    $\Delta_{-+}\Delta_{+-},\Delta_{-+}\Delta_{+-}\in
    I_2(\HSigma)$ and hence $(Q_\Sigma^A)^2\in I_2(\HSigma)$; cf. 
    \eqref{eq:Q-as-matrix}.
    Defining 
    \begin{align}
        \label{eq:def-Pi}
        \Pi_\Sigma^A:=e^{Q_\Sigma^A} P_\Sigma^- e^{-Q_\Sigma^A},
      \end{align}
    we conclude 
    \begin{align}
        \label{eq:Q-rep-step}
        \Pi_\Sigma^A&=
        (\id_{\HSigma}+Q_\Sigma^A)P_\Sigma^- (\id_{\HSigma}-Q_\Sigma^A)
        =        
        \begin{pmatrix}
            0 & \Delta_{+-} \\
            \Delta_{-+} & \id_{\HSigma^-}
        \end{pmatrix}
        = P_\Sigma^A=P_\Sigma^V 
        \mod I_2(\HSigma).
     \end{align}
    Furthermore, we observe that $e^{Q_\Sigma^A}$ is unitary because $Q_\Sigma^A$ 
    is skew-adjoint, so that $\Pi_\Sigma^A$ is an
    orthogonal projector.
    Summarizing, we have shown
    $e^{Q_\Sigma^A}\HSigma^-=\Pi_\Sigma^A\HSigma\in\class_\Sigma(A)$, 
    which proves the claim of
    Theorem~\ref{thm:representative}.\\

    As an addendum we prove the refinement of Theorem~\ref{thm:representative}
    described in Remark~\ref{rem:approx_0}. For this it is left to show
    that $\operatorname{charge}(U_{\Sigma\Sigma_\inn}^A
    \cH_{\Sigma_\inn}^-,\Pi_\Sigma^A\HSigma)=0$. We choose a future oriented foliation 
    $(\Sigma_t)_{t\in \R}$ 
    of space-time
    such that $\Sigma_0=\Sigma_\inn$ and $\Sigma_1=\Sigma$. Recall the choice of $\Sigma_\inn$ described in
    \eqref{eq:Sigma_inn}.
    The operators $Q_{\Sigma_t}^A$ are compact because they are
    skew-adjoint and $(Q_{\Sigma_t}^A)^2\in I_2(\cH_{\Sigma_t})$. 
    Hence, the operators $e^{-Q_{\Sigma_t}^A}$
    are compact perturbations of the identity operators $\id_{\cH_{\Sigma_t}}$.
    Translating this fact to an interaction picture,
    the operators 
    \begin{align}
        Q_t
        :=
        U_{\Sigma_\inn\Sigma_t}^0
        e^{-Q_{\Sigma_t}^A}
        U_{\Sigma_t\Sigma_\inn}^0
    \end{align}
    are as well compact perturbations of the identity operator
    $\id_{\cH_{\Sigma_\inn}}$. We define the evolution operators in the
    interaction picture 
    \begin{align}
        U_t 
        := 
        U_{\Sigma_\inn\Sigma_t}^0
            U_{\Sigma_t\Sigma_\inn}^A,
    \end{align}
    which are continuous in $t\in \R$ w.r.t.\ the operator norm; this follows
    from Lemma 3.9 in \cite{Deckert2014}. Moreover, using $V\approx W
    \, \Leftrightarrow \, P_\Sigma^V P_\Sigma^{W^\perp},P_\Sigma^{V^\perp}
    P_\Sigma^{W} \in I_2(\HSigma)$,  the just proven
    Theorem~\ref{thm:representative} implies
    \begin{align}
        e^{Q_\Sigma^A} \cH_\Sigma^- \approx U_{\Sigma\Sigma_\inn}^A \cH_{\Sigma_\inn}^-
        \Rightarrow \quad & 
        P_\Sigma^\pm e^{-Q_\Sigma^A} U_{\Sigma\Sigma_\inn}^A
        P_{\Sigma_\inn}^\mp \in I_2(\HSigma)
        \\
        \Rightarrow \quad & 
        U_{\Sigma_\inn\Sigma}^0 
        P_\Sigma^\pm e^{-Q_\Sigma^A} U_{\Sigma\Sigma_\inn}^A
        P_{\Sigma_\inn}^\mp \in I_2(\HSigma)
        \\
        \Rightarrow \quad & 
        P_{\Sigma_\inn}^\pm Q_t U_t P_{\Sigma_\inn}^\mp 
        =
        P_\Sigma^\pm 
        U_{\Sigma_\inn\Sigma}^0 
        e^{-Q_\Sigma^A} U_{\Sigma\Sigma_\inn}^A
        P_{\Sigma_\inn}^\mp
        \in
        I_2(\cH_{\Sigma_\inn}).
        \label{eq:comp_op}
    \end{align}
    Since $Q_t-\id_{\cH_{\Sigma_\inn}}$ is compact, the operator
    $P_{\Sigma_\inn}^\pm (Q_t - \id_{\cH_{\Sigma_\inn}}) U_t P_{\Sigma_\inn}^\mp$ 
    is compact as well. Taking the difference with the compact operator in
    \eqref{eq:comp_op} yields that
    $P_{\Sigma_\inn}^\pm U_t P_{\Sigma_\inn}^\mp$ is compact so that
    \begin{align}
        \begin{pmatrix}
            P_{\Sigma_\inn}^+ U_t P_{\Sigma_\inn}^+ & 0 \\
            0 & P_{\Sigma_\inn}^- U_t P_{\Sigma_\inn}^- 
        \end{pmatrix}
        =
        U_t 
        -
        \begin{pmatrix}
            0 & P_{\Sigma_\inn}^+ U_t P_{\Sigma_\inn}^- \\
            P_{\Sigma_\inn}^- U_t P_{\Sigma_\inn}^+ & 0
        \end{pmatrix}
    \end{align}
    deviates from the unitary operator $U_t$ by a compact perturbation, and
    hence, is  a Fredholm operator. This implies that $P_{\Sigma_\inn}^- U_t
    P_{\Sigma_\inn}^-\big|_{\cH_{\Sigma_\inn}^-\selfmaps}$ is a Fredholm
    operator.  We note that the Fredholm index of $P_{\Sigma_\inn}^- U_{t=0}
    P_{\Sigma_\inn}^-\big|_{\cH_{\Sigma_\inn}^-\selfmaps}=\id_{\cH_{\Sigma_\inn}^-}$ 
    equals zero. The map 
    $t\mapsto P_{\Sigma_\inn}^- U_t P_{\Sigma_\inn}^-$ is continuous in the
    operator norm which implies that the Fredholm index is constant, and hence,
    \begin{multline}
        0 
        = 
        \operatorname{index}P_{\Sigma_\inn}^- U_{t=1}
        \big|_{\cH_{\Sigma_\inn}^-\selfmaps}
        =
        \operatorname{index}
        P_{\Sigma_\inn}^- U_{\Sigma_\inn\Sigma}^0 U^A_{\Sigma\Sigma_\inn}
        \big|_{\cH_{\Sigma_\inn}^-\selfmaps}
        \\
        =
        \operatorname{index}
        P_\Sigma^- U^A_{\Sigma\Sigma_\inn}
        \big|_{\cH_{\Sigma_\inn}^-\to
            \cH_\Sigma^-}
        =
        \operatorname{index}
        P_\Sigma^- 
        \big|_{U_{\Sigma\Sigma_\inn}^A \cH_{\Sigma_\inn}^-\to
            \cH_\Sigma^-}
        =
        \operatorname{index}
        P_\Sigma^- e^{-Q_\Sigma^A}
        \big|_{U_{\Sigma\Sigma_\inn}^A \cH_{\Sigma_\inn}^-\to
            \cH_\Sigma^-}
        \\
        =
        \operatorname{index}
        e^{Q_\Sigma^A} P_\Sigma^- e^{-Q_\Sigma^A}
        \big|_{U_{\Sigma\Sigma_\inn}^A \cH_{\Sigma_\inn}^-\to
            \Pi_\Sigma^A\HSigma}
        =
        \operatorname{charge}
        (U_{\Sigma\Sigma_\inn}^A \cH_{\Sigma_\inn}^-,\Pi_\Sigma^A\HSigma),
    \end{multline}
    where in the fifth equality we have used that
    $e^{-Q_\Sigma^A}$ is a compact perturbation of the identity.
\end{proof}

This concludes the proofs of the main results under the condition that the
claims in Section~\ref{sec:operator_PA} are true. The proofs of 
these claims will be provided in the next two sections.

\subsection{Proof of Lemma~\ref{lemma:kern-p-minus}, Lemma~\ref{lem:Pminus-lambda},
and Theorem~\ref{thm:equivalence-pol-classes}}
\label{sec:pol-classes}

\begin{proof}[Proof of Lemma \ref{lemma:kern-p-minus}.]
    Given $\phi,\psi\in\CSigma$, we set $\widehat{\phi}=\cF_{\cM \Sigma}\phi$
    and $\widehat{\psi}=\cF_{\cM \Sigma}\psi$ where $\cF_{\cM\Sigma}$ is the
    generalized Fourier transform
    \begin{align}
        \label{def FMSigma}
        (\cF_{\cM\Sigma}\psi)(p)&=\frac{\slashed p+m}{2m}(2\pi)^{-3/2}
        \int_\Sigma e^{ipx}\,i_\gamma(d^4x)\,\psi(x)
        &&\text{for } \psi\in\CSigma, p\in\cM,
    \end{align}
    introduced in 
    Theorem~2.15 of
    \cite{Deckert2014}. This theorem ensures
    that $\overline{\widehat{\phi}(p)}\widehat{\psi}(p)\,
    i_p(d^4p)$ is integrable
    on $\cM_-$.
    Let $u\in\operatorname{Past}$. 
    With justifications given below, we compute the following.
    \begin{align}
        &\sk{\phi,P_\Sigma^-\psi}=
        \lim_{\epsilon\downarrow 0}\int\limits_{p\in\cM_-} e^{-\epsilon pu}
        \overline{\widehat{\phi}(p)}\widehat{\psi}(p)\frac{i_p(d^4p)}{m}
        \label{eq:pminus1}
        \\&=
        \frac{1}{(2\pi)^3m}\lim_{\epsilon\downarrow 0}\int\limits_{p\in\cM_-} e^{-\epsilon pu}
        \int\limits_{x\in\Sigma} \overline{\phi(x)} \,i_\gamma(d^4x)\, e^{-ipx}
        \left(\frac{\slashed p+m}{2m}\right)^2
        \int\limits_{y\in\Sigma}e^{ipy}\,i_\gamma(d^4y)\, \psi(y)\,i_p(d^4p)
        \label{eq:pminus2}
        \\&=
        \frac{1}{(2\pi)^3m}\lim_{\epsilon\downarrow 0}
        \int\limits_{p\in\cM_-}
        \int\limits_{x\in\Sigma} \overline{\phi(x)} \,i_\gamma(d^4x)\, 
        \frac{\slashed p+m}{2m}
        \int\limits_{y\in\Sigma}e^{ip(y-x+i\epsilon u)}\,i_\gamma(d^4y)\, \psi(y)\,i_p(d^4p)
        \label{eq:pminus3}
        \\&=
        \frac{1}{(2\pi)^3m}\lim_{\epsilon\downarrow 0}
        \int\limits_{x\in\Sigma} \overline{\phi(x)} \,i_\gamma(d^4x)\, 
        \int\limits_{y\in\Sigma}\int\limits_{p\in\cM_-}
        \frac{\slashed p+m}{2m}
        e^{ip(y-x+i\epsilon u)}\,i_p(d^4p)\,i_\gamma(d^4y)\, \psi(y)
        \label{eq:pminus4}
        \\&
        =\lim_{\epsilon\downarrow 0}
        \int\limits_{x\in\Sigma} \overline{\phi}(x) \, i_\gamma(d^4x)
        \int\limits_{y\in\Sigma} p^-(y-x+i\epsilon u) \, i_\gamma(d^4y) \, \psi(y).
        \label{eq:pminus5}
    \end{align}
    The interchange of the $p$-integral and the limit $\epsilon\downarrow 0$
    in \eqref{eq:pminus1} is justified by dominated convergence
    since
    $\overline{\widehat{\phi}(p)}\widehat{\psi}(p) i_p(d^4p)$ is integrable
    on $\cM_-$ and by $|e^{-\epsilon pu}|\le 1$ for $\epsilon>0$, $p\in\cM_-$.
    In the step from \eqref{eq:pminus1} to \eqref{eq:pminus2} we have used
    \eqref{def FMSigma} and that $\gamma^0(\gamma^\mu)^*\gamma^0=\gamma^\mu$,
    from \eqref{eq:pminus2} to \eqref{eq:pminus3} that $\slashed p^2=p^2$ and
    that $p^2=m^2$ for $p\in\cM_-$.
    In the step from \eqref{eq:pminus3} to \eqref{eq:pminus4} we
    have used Fubini's theorem to interchange the integrals.
    This is justified because $\phi$ and $\psi$ are bounded and compactly
    supported, and because for any given $\epsilon>0$,
    $|e^{ip(y-x+i\epsilon u)}|=e^{-\epsilon pu}$ tends exponentially fast to $0$ 
    as $|p|\to\infty$, $p\in\cM_-$. This proves the claim
    \eqref{eq:pminus-claim}. \\
   
    Now we prove the claimed properties of $D$ and $p^-$.  For any
    $w\in\R^4+i\operatorname{Past}$, the modulus $|e^{ipw}|=e^{-p \im w}$ tends
    exponentially fast to $0$ as $|p|\to\infty$, $p\in\cM_-$. Consequently,
    exchanging differentiation and integration in the following calculation is
    justified:
    \begin{align}
        &p^-(w)=         
        \frac{1}{(2\pi)^3m}
        \int_{\cM_-} \frac{-i\slashed\partial^w+m}{2m} e^{ipw} \, i_p(d^4p)
        \cr&
        =         
        \frac{1}{(2\pi)^3m}
        \frac{-i\slashed\partial^w+m}{2m} \int_{\cM_-} e^{ipw} \, i_p(d^4p)
        =\frac{-i\slashed\partial+m}{2m} D(w).
    \end{align} 
    To show the second equality in \eqref{eq:def-D}, we proceed as follows:
    First, we show that $w\in \R^4+i\operatorname{Past}$
    implies $-w_\mu w^\mu\in\C\setminus\R_0^-=\operatorname{domain}(\sqrt{\cdot})$.
    We take $w=z+iu$ with 
    $z\in\R^4$ and $u\in \operatorname{Past}$, 
    and assume $-w_\mu w^\mu\in\R$.
    Then
    $0=\im(w_\mu w^\mu)=2z_\mu u^\mu$, i.e., $z$ is orthogonal to $u$ in 
    the Minkowski sense. Because $u$ is time-like, we conclude that
    $z$ is space-like or zero.
    We obtain $w_\mu w^\mu= \re(w_\mu w^\mu)=z_\mu z^\mu-u_\mu u^\mu<0$,
    i.e., $-w_\mu w^\mu\in\operatorname{domain}(\sqrt{\cdot})$.
    It follows that $\sqrt{-w_\mu w^\mu}\in \R^++i\R=\operatorname{domain}(K_1)$.
    In particular,
    \begin{align}
        \label{eq:D-version2}
        \widetilde{D}:\R^4+i\operatorname{Past}\ni w\mapsto
        -\frac{m^3}{2 \pi^2} \frac{K_1(m\sqrt{-w_\mu w^\mu})}{m\sqrt{-w_\mu w^\mu}}
    \end{align}
    is a well-defined holomorphic function.
    Because $|e^{ipw}|$ decays fast as $|p|\to\infty$, $p\in\cM_-$,
    uniformly for $w$ in any compact subset
    of $\R^4+i\operatorname{Past}$,
    \begin{align}
        \label{eq:D-version1}
        D:\R^4+i\operatorname{Past}\ni w\mapsto
        \frac{1}{(2\pi)^3m}
        \int_{\cM_-} e^{ipw} \, i_p(d^4p)
    \end{align}
    is also a holomorphic function. 
    We need to show $D=\widetilde{D}$.
    By the identity theorem for holomorphic functions, 
    it suffices to show that
    the restrictions of $D$ and $\widetilde{D}$ to $i\operatorname{Past}$
    coincide.
    Given $w=iu\in i\operatorname{Past}$, 
    we choose a proper, orthochronous Lorentz transform 
    $\Lambda\in\operatorname{SO}^\uparrow(1,3)\subseteq\R^{4\times 4}$ that maps
    $u$ to the negative time axis:
    \begin{align}
        \Lambda u=-te_0=(-t,0,0,0)\text{ with } t=\sqrt{u_\mu u^\mu}=\sqrt{-w_\mu w^\mu}>0.
    \end{align}
    By Lorentz invariance of the volume-form $i_p(d^4p)$
    on $\cM_-$,
    we know
    \begin{align}
        \int_{\cM_-}e^{ipw}\,i_p(d^4p)=\int_{\cM_-}e^{ip\Lambda w}\,i_p(d^4p)
    \end{align}
    and
    $\sqrt{-w_\mu w^\mu}=\sqrt{-(\Lambda w)_\mu(\Lambda w)^\mu}$.
    Summarizing, we have reduced the claim $D=\widetilde D$
    to its special case $D(w)=\widetilde{D}(w)$ for
    $w=-ite_0$, $t=\sqrt{-w_\mu w^\mu}>0$.
    This special case is proven as follows.
    Using 
    \begin{align}
        i_p(d^4p)=\frac{m^2}{p^0}d^3p 
        \text{ on } (T_p \cM)^3,
    \end{align}
    rotational symmetry, and the substitution 
    \begin{align}
        s=\frac{\sqrt{k^2+m^2}}{m},
        \quad 
        k=m\sqrt{s^2-1},
        \quad
        m^2 s\,ds=k\,dk,
    \end{align}
    we obtain with the abbreviation $E(\vec p)=\sqrt{\vec p^2+m^2}$:
    \begin{align}
        &\int_{\cM_-}e^{ipw}\,i_p(d^4p)
        =
        -m^2 \int_{\R^3} e^{-E({\bf p}) t}\frac{d^3{\bf p}}{E({\bf p})}
        \cr&=
        -4\pi m^2 \int_0^\infty \exp\left(-t\sqrt{k^2+m^2}\right) 
        \frac{k^2\,dk}{\sqrt{k^2+m^2}}
        \cr&=
        -4\pi m^4 \int_1^\infty e^{-mts}\sqrt{s^2-1}\,ds 
        =
        -4\pi m^4 \frac{K_1(mt)}{mt},
    \end{align}
    using the definition of $K_1$ in \eqref{eq:K1},
    and hence, the claim $D(-ite_0)=\widetilde{D}(-ite_0)$.

    The representation \eqref{eq:D-version2} of $D$ shows also that
    $D$ can be analytically extended to all arguments $w\in\C^4$ with 
    $-w_\mu w^\mu\in\operatorname{domain}(\sqrt{\cdot})=\C\setminus\R_0^-$.
    The same holds true for $p^-=(2m)^{-1}(-i\slashed\partial+m)D$.
    To sum up, $p^-$ has an analytic continuation
    $p^-:\operatorname{domain}(r)\to\C^{4\times 4}$,
    which also concludes the proof of Lemma~\ref{lemma:kern-p-minus}.
\end{proof}

\ifx\arxiv\undefined
\else
We remark that the modified Bessel function $K_1$ has an analytic continuation
to the Riemann surface of the logarithm, although this is not used in this
paper. However, the analytic continuation of $K_1$ to $\C\setminus\R^+_0$ is
defined in \eqref{eq:K1-1} in the appendix.
\fi

\begin{proof}[Proof of Lemma~\ref{lem:Pminus-lambda}]
We remark that most of the arguments in this proof are valid 
without regularization, i.e., also in the case $\epsilon=0$. This is
in contrast to Section \ref{sec:evolution-polarization-classes} below,
where the regularization with $\epsilon>0$ turns out to be very useful.

    Let $A\in\cC^\infty_c(\R^4,\R^4)$, $\lambda\in\cG(A)$, and $\Sigma$ be a Cauchy
    surface.
    Before proving the claim
    \eqref{eq:Pminus-lambda}-\eqref{eq:Pminus-lambda-epsilon}
    it will be convenient to introduce the operators $\Delta
    P_\Sigma^{\lambda,\epsilon u}$, $\epsilon\geq 0$, which shall 
    act on any $\psi\in\HSigma$ as
    \begin{align}
        \label{eq:deltap-epsilon}
        \left(\Delta P_\Sigma^{\lambda,\epsilon u}\psi\right)(x)=
        \int_{y\in\Sigma} (e^{-i\lambda(x,y)}-1)
        p^-(y-x+i\epsilon u) \, i_\gamma(d^4y)
        \, \psi(y),
    \end{align}
    where the fixed vector $u\in\R^4$ is past-directed time-like.
    We remark that the special case $\epsilon=0$ is included in the form
    $\Delta P_\Sigma^{\lambda,0}=\Delta P_\Sigma^{\lambda}$; cf.~\eqref{eq:pdiff}.
   
    We show now that 
    $\Delta P_\Sigma^{\lambda,\epsilon u}:\HSigma\selfmaps$ is well-defined.
    Recall the parametrization $\pi_\Sigma(\vec x)$ of $\Sigma$ as stated in
    \eqref{eq:parametrize Sigma} and the identity $i_\gamma(d^4x)=\Gamma(\vec
    x)\,d^3x$ on $(T_x\Sigma)^3$ given in \eqref{eq: repr igammad4x}.  
    We use the
    abbreviation $x=\pi_\Sigma(\vec x)$, $y=\pi_\Sigma(\vec y)$ in the following.
    Line
    \eqref{eq:deltap-epsilon} can be recast into
    \begin{align} 
        \left(\Delta P_\Sigma^{\lambda,\epsilon u}\psi\right)(x)
        & =
        \int_{\R^3}
        \Delta p_\Sigma^{\lambda,\epsilon u}(\vec x,\vec y)
        \, \Gamma(\vec y) 
        \, \psi(y)
        \, d^3\vec y 
        \quad
        \text{for}
        \label{eq:deltap-epsilon-2}
        \\
        \Delta p_\Sigma^{\lambda,\epsilon u}(\vec x,\vec y)
        &:=
        \left(e^{-i\lambda(x,y)}-1\right)
        p^-(y-x+i\epsilon u) 
        .
        \label{eq:Pminus-deltaP-kernel}
    \end{align}
    To show at the same time that
    the right-hand side of \eqref{eq:deltap-epsilon-2}, i.e.,
    \eqref{eq:deltap-epsilon}, is well-defined for $\psi\in\HSigma$ 
    and almost every 
    $x\in\Sigma$, and that $\Delta P_\Sigma^{\lambda,\epsilon}\psi\in\HSigma$,
    it suffices to prove that for every $\phi\in\HSigma$, we have
    \begin{align}
        \int_{\vec x\in\R^3}\int_{\vec y\in\R^3}\left|\overline{\phi(x)}\Gamma(\vec x) 
        \Delta p_\Sigma^{\lambda,\epsilon u}(\vec x,\vec y)
        \Gamma(\vec y) 
        \psi(y)\right|
        \, d^3\vec y \le\constl{c:Pminus-lambda-3}
        \| \phi \| \| \psi \|
        \label{eq:claim-DeltaP-bounded}
    \end{align}
 with some constant $\constr{c:Pminus-lambda-3}(u,\Vmax)$.
 We collect the necessary ingredients:
    \begin{itemize}
        \item 
            As $\lambda$ is smooth and vanishes on the diagonal,
            there is a positive constant $\constl{c:lambda}(\lambda)$ such that
            \begin{align}
                \label{eq:first-order-exp-lambda-bound}
                |e^{-i\lambda(x,y)}-1|\le\constr{c:lambda}|x-y|[1_K(x)\vee 1_K(y)]
                \text{ for } x,y\in\R^4. 
            \end{align}
            Note that this bound holds globally, not only locally
            close to the diagonal, because $e^{-i\lambda}-1$ is bounded
            and vanishes outside $K\times \R^4\cup\R^4\times K$ for some compact
            set $K$.
        \item 
            The bounds \eqref{eq:rz-z} from the appendix, 
            cf.~\eqref{eq:bound-grad_tSigma}, show 
            that for all $x,y\in\Sigma$
            and $(z^0,\vec z)=z=y-x$ we find
            $|\vec z|
            \leq
            |z|
            \leq \sqrt{1+\Vmax^2} |\vec z|$.
        \item
            Formula \eqref{eq:bound-pminus-R3} in 
            Corollary~\ref{lem:upper-bounds} of the Appendix
            ensures for all $\epsilon\geq 0$ that 
            for all $z=(z^0,\vec z)$ such that $z = y-x$
            for $x,y\in\Sigma$ and $\vec z\neq 0$ that
            \begin{align}
                \label{eq:majorante}
                \| p^-( z+i\epsilon u ) \| 
                \leq 
                O_{u,\Vmax}\left(
                \frac{e^{- \cD |\vec z|}}
                {|\vec z |^3}\right).
            \end{align}
    \end{itemize}
    Thanks to these ingredients we find the estimate 
    \begin{align}
        \| \Delta p_\Sigma^{\lambda,\epsilon u}(\vec x,\vec y) \|
        \leq
        \constl{c:Pminus-lambda-0}
        \frac{e^{-\cD |\vec y-\vec x|}}{|\vec y-\vec x|^2}[1_K(x)\vee 1_K(y)]
        \label{eq:Pminus-deltaP-kernel-bound}
    \end{align}
    for all $x,y\in\Sigma$ such that $\vec y-\vec x\neq 0$ and 
    $\epsilon\geq 0$ with some constant 
    $\constr{c:Pminus-lambda-0}(u,\Vmax,\lambda)$.
    Consequently, 
    using the bound for $\Gamma$ from
    \eqref{eq:bound-Gamma},
    we have the dominating function
    \begin{align}
      \sup_{\epsilon\ge 0}\left|\overline{\phi(x)}\Gamma(\vec x) 
      \Delta p_\Sigma^{\lambda,\epsilon u}(\vec x,\vec y)
      \Gamma(\vec y) 
      \psi(y)\right|
      \le
      \constr{c:Pminus-lambda-0}\GammaMax^2|\phi(x)|
      \frac{e^{-\cD |\vec y-\vec x|}}{|\vec y-\vec x|^2}
        |\psi(y)|,
        \label{eq:dominating-function}
      \end{align}
      which is integrable, as the following calculation shows:
    \begin{align}
        & 
        \constr{c:Pminus-lambda-0}\GammaMax^2\int_{\vec x\in\R^3} \int_{\vec y\in\R^3}
        |\phi(x)|
        \frac{e^{-\cD |\vec y-\vec x|}}{|\vec y-\vec x|^2}
        |\psi(y)|
        \, d^3\vec y \, d^3\vec x
        \label{eq:Pminus-lambda-4-0}
        \\
        &=
        \constr{c:Pminus-lambda-0}\GammaMax^2
        \int_{\vec z\in\R^3} 
        \frac{e^{-\cD |\vec z|}}{|\vec z|^2}
        \int_{\vec x\in\R^3}
        |\phi(\pi_\Sigma(\vec x))|
        |\psi(\pi_\Sigma(\vec x+\vec z))|
        \, d^3\vec x \, d^3\vec z
        \label{eq:Pminus-lambda-4}
        \\
        & \leq
        4\pi \constr{c:Pminus-lambda-0}\GammaMax^2
        \int_0^\infty
        e^{-\cD s}
        \, 
        ds
        \,
        \| \phi \circ \pi_\Sigma \|_2 
        \| \psi \circ \pi_\Sigma \|_2
        \label{eq:Pminus-lambda-5}
        \\
        & \leq
        \constr{c:Pminus-lambda-3}
        \| \phi \| \, \| \psi \|,
        \label{eq:Pminus-lambda-6}
    \end{align}
    for a constant
    $\constr{c:Pminus-lambda-3}(u,\Vmax,\lambda)$. In the step from 
    \eqref{eq:Pminus-lambda-4} to
    \eqref{eq:Pminus-lambda-5}
    we use the
    Cauchy-Schwarz inequality, and in the step from \eqref{eq:Pminus-lambda-5}
    to \eqref{eq:Pminus-lambda-6}, we use
    that the norms $\|{\cdot}\circ\pi_\Sigma \|_2$ and $\|{\cdot}\|$ are
    equivalent.
    On the one hand, this proves claim~\eqref{eq:claim-DeltaP-bounded},
    which implies that the operators
    $\Delta P_\Sigma^{\lambda,\epsilon u}:\HSigma\selfmaps$ 
    described in \eqref{eq:deltap-epsilon-2} and \eqref{eq:Pminus-deltaP-kernel}
    are well-defined for all $\epsilon\ge 0$ 
    and bounded by
    \begin{align}
      \sup_{\epsilon\ge 0}\|\Delta P_\Sigma^{\lambda,\epsilon u}\|_{\HSigma\selfmaps}
        \le \constr{c:Pminus-lambda-3}.
        \label{eq:bound-Delta-P}
    \end{align}
    On the other hand, we use again the integrable domination from 
    \eqref{eq:dominating-function}
    together with the point-wise convergence
    \begin{align}
      \lim_{\epsilon\downarrow 0} p^-(y-x+i\epsilon u)=p^-(y-x)
    \end{align}
    for $x,y\in\Sigma$ with $x\neq y$;
    cf.~the analytic continuation of $p^-$ 
    described in Lemma~\ref{lemma:kern-p-minus}. 
    Using these ingredients, the dominated convergence theorem
    yields the following convergence in the weak operator topology:
    \begin{align}
      \sk{\phi,\Delta P_\Sigma^{\lambda,\epsilon u}\psi}
      \konv{\epsilon\downarrow 0} \sk{\phi,\Delta P_\Sigma^\lambda\psi}
\text{ for }\phi,\psi\in\HSigma.
\end{align}
The next argument needs this fact only restricted 
to $\phi,\psi\in\CSigma$.
Using the notation~\eqref{eq:Pminus-lambda-epsilon}
and Lemma~\ref{lemma:kern-p-minus}, we get for  $\phi,\psi\in\CSigma$
\begin{align}
\sk{\phi,P_\Sigma^{\lambda,\epsilon u}\psi}
=\sk{\phi,P_\Sigma^{0,\epsilon u}\psi}
+\sk{\phi,\Delta P_\Sigma^{\lambda,\epsilon u}\psi}
  \konv{\epsilon\downarrow 0}
\sk{\phi,P_\Sigma^-\psi}
+\sk{\phi,\Delta P_\Sigma^\lambda\psi}
.
\end{align}
Because  $P_\Sigma^-,\Delta P_\Sigma^\lambda:\HSigma\selfmaps$
are bounded operators and $\CSigma$ is dense in $\HSigma$,
this implies that
\begin{align}
P_\Sigma^\lambda:=P_\Sigma^-+\Delta P_\Sigma^\lambda:\HSigma\selfmaps
\end{align}
is the unique bounded operator that satisfies~\eqref{eq:Pminus-lambda},
together with the bound
\begin{align}
\|P_\Sigma^\lambda\|_{\HSigma\selfmaps}\le
\|P_\Sigma^-\|_{\HSigma\selfmaps}+
\|\Delta P_\Sigma^\lambda\|_{\HSigma\selfmaps}\le 
1+\constr{c:Pminus-lambda-3}(u,\Vmax,\lambda)
\end{align}
coming from \eqref{eq:bound-Delta-P}.
Note that we may take any fixed $u\in\operatorname{Past}$,
e.g., $u=(-1,0,0,0)$, in this bound and in the bounds below.

    Next, we show that 
    $K^{\lambda}:=|\Delta P_\Sigma^\lambda|^2$ is a
    Hilbert-Schmidt operator. 
    It is the integral operator (here written in 3-vector notation)
    \begin{align}
      K^\lambda\psi(x)=
      \int_{\R^3}k^\lambda(\vec x,\vec y)\Gamma(\vec y)\psi(y)d^3\vec y
    \end{align}
    for $\psi\in\HSigma$ and almost all $x\in\Sigma$
    with the integral kernel
    \begin{align}
      k^\lambda(\vec x,\vec y)
      =\int_{\R^3}
      \gamma^0 \Delta p_\Sigma^{\lambda,0}(\vec x,\vec z)^*\gamma^0
      \Gamma(\vec z)
      \Delta p_\Sigma^{\lambda,0}(\vec z,\vec y)
      \, d^3\vec z.
    \end{align}
We remark that under the symmetry assumption 
$\lambda(x,y)=-\lambda(y,x)$, we have
\begin{align}
\gamma^0 \Delta p_\Sigma^{\lambda,0}(\vec x,\vec z)^*\gamma^0
=\Delta p_\Sigma^{\lambda,0}(\vec z,\vec x);
\end{align}
cf.~formula \eqref{eq:hermitean} below.
    Thanks to the estimate \eqref{eq:Pminus-deltaP-kernel-bound} we find
    \begin{align} 
        \left\| k^\lambda(\vec x,\vec y)
        \right\|
        \leq
        \GammaMax \constr{c:Pminus-lambda-0}^2
        \int_{\R^3}
        \frac{e^{ - \cD |\vec x - \vec z| }}
        {|\vec x - \vec z |^2}
        \frac{e^{ - \cD |\vec z - \vec y| }}
        {|\vec z - \vec y |^2}
        (1_K(x)\vee 1_K(z))
        (1_K(z)\vee 1_K(y))
        \, d^3\vec z.
        \label{eq:bound-klambda}
    \end{align}
    Next, we use the bound
    \begin{align}
      e^{ - \cD |\vec x - \vec z| }e^{ - \cD |\vec z - \vec y| }
      (1_K(x)\vee 1_K(z))
      (1_K(z)\vee 1_K(y))
      \le 
      \constl{c:geometrie-vorfaktor}
      e^{ - \cD(|\vec y - \vec x| +|\vec x|)/2}
      \label{eq:geometric-bound}
    \end{align}
    with the constant 
    $\constr{c:geometrie-vorfaktor}(\lambda,\Vmax)=\sup_{z\in K} 
    e^{\cD|\vec z|/2}$.
    Substituting this bound in \eqref{eq:bound-klambda}
    and carrying out the integration yields
    \begin{align}
      &\| k^\lambda(\vec x,\vec y)\|
      \leq
      \GammaMax \constr{c:Pminus-lambda-0}^2
      \constr{c:geometrie-vorfaktor}
      e^{ - \cD(|\vec y - \vec x| +|\vec x|)/2}
      \int_{\R^3}
      \frac{d^3\vec z}
      {|\vec x - \vec z |^2|\vec z - \vec y |^2}
      =
      \constl{c:plambda-foliation1}
      \frac{e^{ - \cD(|\vec y - \vec x| +|\vec x|)/2}}
      {|\vec y - \vec x |}
    \end{align}
    for a finite constant
    $\constr{c:plambda-foliation1}(\lambda,\Vmax)$. We can 
    therefore bound the Hilbert-Schmidt norm of $K^\lambda$ as follows:
   \begin{align}
        &\| K^\lambda\|_{I_2(\HSigma)}^2
        =
        \int_{\R^3}\int_{\R^3}\operatorname{trace}[\gamma^0
        k^\lambda(\vec x,\vec y)^*\gamma^0\Gamma(\vec x)
        k^\lambda(\vec x,\vec y)\Gamma(\vec y)]\,d^3\vec x\,d^3\vec y
        \cr&
        \leq
        4\GammaMax^2
        \int_{\R^3}
        \int_{\R^3}
        \| k^\lambda(\vec x,\vec y)\|^2
        \, d^3\vec x
        \, d^3\vec y
        \cr&
        \leq
        4\GammaMax^2\constr{c:plambda-foliation1}^2
        \int_{\R^3}
        \int_{\R^3}
            \frac{e^{ - \cD(|\vec y - \vec x| +|\vec x|)}}
            {|\vec y - \vec x |^2}
        \, d^3\vec x
        \, d^3\vec y
        <
        \infty.
    \end{align}
    This proves that $K^\lambda=|\Delta P_\Sigma^\lambda |^2$ 
    is a Hilbert-Schmidt operator, and
    therefore, $\Delta P_\Sigma^\lambda$ is compact. 

    To prove part (iv) of Lemma~\ref{lem:Pminus-lambda}, we assume
    $\lambda(x,y)=-\lambda(y,x)$ for all $x,y\in\Sigma$.
    From the symmetries $D(w^*)=D(w)^*$ 
    and $D(-w)=D(w)$ for all 
    $w\in
    \operatorname{domain}(r)$ and $(\gamma^\mu)^*=\gamma^0\gamma^\mu\gamma^0$,
    we conclude
    \begin{align}
        p^-(-w^*)=\gamma^0p^-(w)\gamma^0,
    \end{align}
    and hence, using the assumed symmetry of $\lambda$,
    \begin{align} 
        \gamma^0\left(e^{-i\lambda(y,x)}p_-(y-x+i\epsilon u)\right)^*\gamma^0
        =e^{-i\lambda(x,y)}p_-(x-y+i\epsilon u)
        \label{eq:hermitean}
    \end{align}
    for $x,y\in\Sigma$, $\epsilon>0$ and $u\in\operatorname{Past}$.  Substituting
    this in the
    specification~\eqref{eq:Pminus-lambda}-\eqref{eq:Pminus-lambda-epsilon} of
    $P_\Sigma^\lambda$, it follows that $P_\Sigma^\lambda$ is self-adjoint and
    concludes the proof.  
\end{proof}

\begin{proof}[Proof of Theorem \ref{thm:equivalence-pol-classes}.]
    To show the equivalence we need to control of the kernel of
    $P_\Sigma^\lambda-P_\Sigma^{\widetilde\lambda}$ from above and from below.
    Let $\Delta \vec A:\R^3\to\R^3$ be the vector field on $\R^3$ with
    \begin{align}
        \Delta \vec A(\vec x)\cdot \vec{z}= (A_\mu(x)-\widetilde{A}_\mu(x))z^\mu
    \end{align}
    for any $x=(x^0,\vec x)\in\Sigma$ and $z=(z^0,\vec{z})\in T_x\Sigma$.
    Then for any $x=(x^0,\vec x)\in\Sigma$, 
    $A(x)|_{T_x\Sigma}=\widetilde A(x)|_{T_x\Sigma}$
    holds if and only if $\Delta \vec A(\vec x)=0$.
    From $\lambda\in\mathcal{G}(A)$ and  $\lambda\in\mathcal{G}(\widetilde A)$,
    see Definition \ref{def:PLambda},
    we get the Taylor expansions
    \begin{align}
        e^{-i\lambda(x,y)}&=
        1+iA_\mu(x)(y^\mu-x^\mu)+O_\lambda(|x-y|^2)(1_K(x)\vee 1_K(y)),
        \\
        e^{-i\widetilde\lambda(x,y)}&=
        1+i\widetilde{A}_\mu(x)(y^\mu-x^\mu)+O_{\widetilde\lambda}(|x-y|^2)(1_K(x)\vee
        1_K(y)),
        \\
        y^0-x^0 &= \nabla t_\Sigma(\vec x)\cdot(\vec y - \vec x) +
        O_\Sigma(|\vec x-\vec y|^2)
        &
    \end{align}
    for $y,x\in\Sigma$ from which we
    conclude
    \begin{align}
        e^{-i\lambda(x,y)}-e^{-i\widetilde\lambda(x,y)}=
        i\Delta \vec A(\vec x)\cdot (\vec{y}-\vec{x})+
        \remainl{r}{r:Delta-expilambda}(\vec x,\vec y)
    \end{align}
    with an error term $\remainr{r}{r:Delta-expilambda}$
    that fulfills for any $x,y\in\Sigma$ 
    \begin{align}
        \label{eq:bound-error-R}
        |\remainr{r}{r:Delta-expilambda}(\vec x,\vec y)|\le
        O_{\lambda,\widetilde \lambda, \Vmax}(|\vec x-\vec y|^2)
        \left( 1_{K}(x) \vee 1_{K}(y) \right),
    \end{align}
    where we used $|x-y| = O_{\Vmax}(|\vec x-\vec y|)$
    due to \eqref{eq:bound-grad_tSigma}.
    Note that the bound \eqref{eq:bound-error-R}
    holds not only locally near the diagonal but also {\it
    globally} for $x,y\in\Sigma$ because
    $e^{-i\lambda}-e^{-i\widetilde\lambda}$ is bounded and $\lambda$ and
    $\widetilde\lambda$ vanish outside $K\times\R^4\cup\R^4\times K$ for some
    compact set $K\subset\R^4$.  For $\phi,\psi\in \HSigma$ 
    formula \eqref{eq:pdiff} from
    Lemma~\ref{lem:Pminus-lambda} implies
    \begin{align}
        &\sk{\phi,(P_\Sigma^\lambda-P_\Sigma^{\widetilde{\lambda}})\psi} 
        \cr
        & =
        \int_{x\in\Sigma} \overline{\phi}(x) \, i_\gamma(d^4x)
        \int_{y\in\Sigma} (e^{-i\lambda(x,y)}-e^{-i\widetilde\lambda(x,y)})
        p^-(y-x) \, i_\gamma(d^4y)
        \, \psi(y)
        \cr
        &=
        \int_{\vec x\in\R^3} \int_{\vec y\in\R^3} \phi(x)^*\gamma^0\Gamma(\vec x)
        [t_1(x,y)+t_2(x,y)]\gamma^0\Gamma(\vec y)
        \psi(y)\,d^3\vec y\,d^3\vec x  
        \label{eq:Delta-Pminus-diff}
      \end{align}
    with
    \begin{align}
        t_1(x,y)&=
        i\Delta \vec A(\vec x)\cdot (\vec{y}-\vec{x})
        p^-(y-x)\gamma^0,\\
        t_2(x,y)&=\
        \remainr{r}{r:Delta-expilambda}(\vec x,\vec y)
        p^-(y-x)\gamma^0,
    \end{align}
    where we use the abbreviations $x=\pi_\Sigma(\vec x)$,
    $y=\pi_\Sigma(\vec y)$ again, and $\Gamma$ is defined in
    \eqref{eq: repr igammad4x}. We have introduced
    two extra factors
    $\gamma^0$
    in \eqref{eq:Delta-Pminus-diff} in order to
    have a positive-definite weight $\gamma^0\Gamma$.

    We claim that the kernel $t_2(x,y)\gamma^0$ gives rise to a
    Hilbert-Schmidt-operator $T_2$.  Indeed, using the bound
    \eqref{eq:bound-Gamma} for $\Gamma$, the bound \eqref{eq:bound-pminus-R3}
    from Corollary~\ref{lem:upper-bounds} in the appendix for $p^-$,  and the
    bound \eqref{eq:bound-error-R} for $ \remainr{r}{r:Delta-expilambda}$, 
    we have
    \begin{align}
        \|T_2\|^2_{I_2(\HSigma)}&=\int_{\vec x\in\R^3}\int_{\vec y\in\R^3}
        \operatorname{trace}[t_2(x,y)^*
        \gamma^0\Gamma(\vec x)t_2(x,y)\gamma^0\Gamma(\vec y)]
        \,d^3\vec y\,d^3\vec x  
        \nonumber\\
        &\le
        \constl{c:error-lambda-2}
         \int_{\vec x\in\R^3}\int_{\vec y\in\R^3}
        \left|
            \frac{e^{-\cD |\vec y-\vec x|^2}}{|\vec y-\vec x|}
        \right|^2
        \left( 1_{K}(x)+1_{K}(y) \right)\,d^3\vec y\,d^3\vec x  
        \le \constl{c:t2-I2}
        <\infty
    \end{align}
    for some constants $\constr{c:error-lambda-2}$ and
    $\constr{c:t2-I2}$ that depend on $\Sigma,\lambda,\tilde{\lambda}$.\\
    
    If $A|_{T\Sigma}=\widetilde A|_{T\Sigma}$ 
    then $\Delta\vec A=0$. This implies $t_1=0$ and
    therefore $P_\Sigma^\lambda-P_\Sigma^{\widetilde{\lambda}}=T_2$
    is a Hilbert-Schmidt operator. This proves the ``$\Leftarrow$'' part
    of the claim~\eqref{eq:equivalence-pol-classes}.\\

    Conversely, let us assume that  
    $A|_{T\Sigma}=\widetilde A|_{T\Sigma}$  does not hold.
    Then we can take some $x_0\in\R^3$ with $\Delta\vec A(\vec x_0)\neq 0$.
    By continuity of $\Delta\vec{A}$, we have $\inf_{\vec x\in U}|\Delta\vec A(x)|>0$
    for some neighborhood $U$ of $\vec x$. Furthermore there is a constant
    $\constl{c:Gamma-lower}(\Vmax)$ such that 
    $\gamma^0\Gamma(\vec x)-\constr{c:Gamma-lower}$ is positive-semidefinite
    for all $x=(x^0,\vec x)\in\Sigma$.
    Consequently, we get the following bound for all $\vec x\in U$ 
    and $\vec y\in\R^3$:
    \begin{align}
        &
        \operatorname{trace}\Big[t_1(\vec x,\vec y)^*
        \gamma^0\Gamma(\vec x)t_1(\vec x,\vec y)\gamma^0\Gamma(\vec y)\Big]
        \ge \constr{c:Gamma-lower}^2
        \operatorname{trace}\Big[t_1(\vec x,\vec y)^*t_1(\vec x,\vec y)\Big]
        \nonumber\\&
        \label{eq:lower-bound-trace-t1}
        \ge
        \constl{c:trace-t1}
        |\Delta \vec A(\vec x)\cdot (\vec{y}-\vec{x})|^2
        \|p^-(y-x)\|^2
        \ge
        \constl{c:trace-t1-2}
        |\Delta \vec A(\vec x)\cdot (\vec{y}-\vec{x})|^2
        \left(
            \frac{e^{-m|\vec y-\vec x|}}{|\vec y-\vec x|^3}
        \right)^2.
    \end{align}
    with two positive constants $\constr{c:trace-t1}$ and 
    $\constr{c:trace-t1-2}$
    depending on $\Vmax$.
    In the last step,
    we have used the lower bound \eqref{eq:lower-bound-pminus-R3}
    for $\|p^-\|$ from Corollary \ref{lem:lower-bounds}
    in the appendix.
    Because the lower bound given in 
    \eqref{eq:lower-bound-trace-t1} is not integrable
    over $(\vec x,\vec y)\in U\times\R^4$, we conclude that $T_1$ is
    not a Hilbert-Schmidt operator.
    Because $T_2$ is a Hilbert-Schmidt operator,
    this implies that $P_\Sigma^\lambda-P_\Sigma^{\widetilde{\lambda}}$ cannot
    be a Hilbert-Schmidt operator. Thus, we have proven part ``$\Rightarrow$''
    of the Theorem.
\end{proof}

\subsection{Proof of Theorem~\ref{thm:inf}}
\label{sec:evolution-polarization-classes}
This section contains the centerpiece of this work.
The proof of Theorem~\ref{thm:inf} will be given
at the end of this section. To show that the claimed equality
\eqref{eq:plambda-inf} holds, we analyze the difference of matrix elements
\begin{align}
        \sk{\phi,(P_{\Sigma_{t_1}}^A+S_{\Sigma_{t_1}}^A)\psi} 
        -
        \sk{\phi,(P_{\Sigma_{t_0}}^A+S_{\Sigma_{t_0}}^A)\psi}
\end{align}
for $\psi,\phi\in\CA$. This is done in two
steps. First, using Stokes' theorem, we provide a formula for the
derivative w.r.t.\ the flow parameter of the family of Cauchy surfaces
$(\Sigma_t)_{t\in T}$ in Lemma~\ref{lem:time-derivative} and
Corollary~\ref{cor:time-derivative}. Second, we give the relevant
estimates on this derivative in Lemmas~\ref{le:Dtp}-\ref{le:Dt-P+S} which are
summarized in Corollary~\ref{cor:r7r8}, and conclude with the proof of
Theorem~\ref{thm:inf}.

For the first step, the following notations for the Dirac operators acting from
the left and from the right, respectively, are convenient:
\begin{align}
    D^A\psi(x)=D^A_x\psi(x)
    &:=(i\slashed\partial^x-\slashed A(x)-m)\psi(x),
    \label{eq:DA}
    \\
    \overline{\phi(y)}\,\overleftarrow{D^A}
    =\overline{\phi(y)}\,\overleftarrow{D^A_y}
    &:=\overline{\phi(y)}(-i\overleftarrow{\slashed\partial^y}-\slashed A(y)-m)
    =\overline{D^A_y\phi(y)},
    \label{eq:DA-left}
\end{align}
where
$f(y)\overleftarrow{\slashed\partial^y}
    =f(y)\overleftarrow{\slashed\partial}:=\partial_\mu f(y)\gamma^\mu$.

\begin{lemma}\label{lem:time-derivative}
    Let $k:\R^4\times\R^4\to\C^{4\times 4}$ be a smooth
    function. Let $\phi,\psi\in\CA$. 
    Then for any $t\in T$ we have
    \begin{align}
        &\frac{d}{dt}
        \int_{x\in\Sigma_t}\int_{y\in\Sigma_t}
        \overline{\phi(x)}\,i_\gamma(d^4x)\,
        k(x,y)
        i_\gamma(d^4y)\,\psi(y)
        \nonumber\\&
        =
        -i\int_{x\in\Sigma_t}\int_{y\in\Sigma_t}
        \overline{\phi(x)}\,i_\gamma(d^4x)\,
        \mathcal{D}_t^Ak(x,y)
        i_\gamma(d^4y)\,\psi(y)
        \label{eq:claim-Dk}
    \end{align}
    with
    \begin{align}
        \label{eq:def-Dk}
        \mathcal{D}_t^Ak(x,y)
        :=
        v_t(x)\slashed{n}_t(x)D^A_xk(x,y)-k(x,y)\overleftarrow{D^A_y}v_t(y)\slashed{n}_t(y).
    \end{align}
\end{lemma}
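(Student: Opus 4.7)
The plan is to apply a Stokes/Cartan-type moving-boundary formula twice, once for the motion of the outer Cauchy surface (carrying the $x$ variable) and once for the inner Cauchy surface (carrying $y$). Writing $L(t) := \int_{\Sigma_t \times \Sigma_t}\overline{\phi(x)}\,i_\gamma(d^4x)\,k(x,y)\,i_\gamma(d^4y)\,\psi(y) = \int_{\Sigma_t}\omega_x(t,\cdot)$ with the 3-form $\omega_x(t,x) := \overline{\phi(x)}\,i_\gamma(d^4x)\,G(x,t)$ and the auxiliary column spinor $G(x,t) := \int_{\Sigma_t} k(x,y)\,i_\gamma(d^4y)\,\psi(y)$, the Leibniz rule decomposes $\dot L(t)$ into a contribution from the motion of the outer surface (with $G(\cdot,t_0)$ momentarily frozen) plus an ``inner'' contribution $\int_{\Sigma_t}\overline{\phi}\,i_\gamma(d^4x)\,\partial_t G(x,t)$, the second of which is analyzed by the same moving-boundary trick applied to $y\mapsto k(x,y)\,i_\gamma(d^4y)\,\psi(y)$.

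For each piece I would apply Stokes to the $4$-dimensional slab $\Omega_{t_0,t_1} := \bigcup_{s\in[t_0,t_1]}\Sigma_s$ whose only boundary consists of $\Sigma_{t_1}-\Sigma_{t_0}$ thanks to the spatially compact causal supports of $\phi,\psi\in\CA$ and the smoothness of $k$. Differentiating at $t_1=t_0$ converts the slab integral into a flux through $\Sigma_{t_0}$, yielding $\int_{\Sigma_{t_0}} v_{t_0}(x)\,f(x)\,i_{n_{t_0}}(d^4x)$ whenever $d_x\omega_x = f\,d^4x$. A direct Leibniz computation of $d_x\omega_x$ gives $d_x\omega_x = [\overline{\phi}\,\overleftarrow{\slashed\partial^x} G + \overline{\phi}\,\slashed\partial^x G]\,d^4x$; substituting $\overline{\phi}\,\overleftarrow{\slashed\partial^x}=i\overline{\phi}(\slashed A + m)$ (a consequence of $\overline{\phi}\,\overleftarrow{D^A_x}=0$) and $\slashed\partial^x G = -i(D^A_x G + (\slashed A + m)G)$, the field-dependent terms cancel and one is left with $d_x\omega_x = -i\,\overline{\phi}(x)\,D^A_x G(x)\,d^4x$. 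Exactly the same pattern applied to the inner 3-form, now using $D^A_y\psi=0$, produces $d_y[k(x,y)\,i_\gamma(d^4y)\,\psi(y)] = i\,k(x,y)\,\overleftarrow{D^A_y}\psi(y)\,d^4y$; the sign flip relative to the $x$-side reflects the asymmetric definitions of $D^A_x$ and $\overleftarrow{D^A_y}$ in \eqref{eq:DA}--\eqref{eq:DA-left}.

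The remaining step is bookkeeping: convert $v(x)\,i_{n}(d^4x)$ and $v(y)\,i_n(d^4y)$ back to the $i_\gamma$ that appear in the statement of the lemma. On $T\Sigma$ one has the matrix identity $i_\gamma = \slashed n\,i_n$, so using $\slashed n^2=1$ the scalar form $i_n(d^4x)$ can be reinserted inside a spinor sandwich as either $\slashed n(x)\,i_\gamma(d^4x)$ or $i_\gamma(d^4x)\,\slashed n(x)$ as convenient. Inserting it so that $v_t(x)\slashed n_t(x)$ lands immediately to the right of $i_\gamma(d^4x)$ and to the left of $D^A_x k(x,y)$ on the $x$-side, and so that $v_t(y)\slashed n_t(y)$ lands immediately to the right of $k(x,y)\,\overleftarrow{D^A_y}$ and to the left of $i_\gamma(d^4y)$ on the $y$-side, the two contributions combine to exactly $-i\,\mathcal{D}_t^A k(x,y)$ as in \eqref{eq:claim-Dk}--\eqref{eq:def-Dk}. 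The main obstacle I anticipate is precisely this last bookkeeping---the sign asymmetry between the outer and inner derivatives and the choice of side on which to insert $\slashed n$---while all other ingredients (interchange of differentiation with the surface integrals, absence of boundary terms at spatial infinity, and smoothness of the extended $n$, $v$) are routine consequences of the standing hypotheses on $\phi,\psi,k,\boldsymbol\Sigma$.
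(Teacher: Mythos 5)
Your proof is correct and follows essentially the same route as the paper: Stokes' theorem on the slab between $\Sigma_{t_0}$ and $\Sigma_{t_1}$, the Dirac equations $\overline{\phi}\,\overleftarrow{D^A}=0$ and $D^A\psi=0$ to reduce the exterior derivative to $\pm i$ times the appropriate Dirac operator applied to the kernel, and the identity $i_\gamma(d^4x)=\slashed{n}(x)\,i_n(d^4x)$ with $\slashed{n}^2=1$ for the final conversion. Your Leibniz decomposition into an ``outer'' and an ``inner'' contribution is the same two-variable bookkeeping the paper encodes via the function $F(s,t)$ and the chain rule, and the sign tracking you flag as the main pitfall comes out exactly as in the paper.
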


\begin{proof}
    Assume that $\phi',\psi':\R^4\to\C^4$ are smooth functions
    with
    $\supp\phi'\cap\supp\psi'\subseteq K+\causal$ for some compact
    set $K\subset\R^4$.
    
    We set
    \begin{align}
      \boldsymbol{\Sigma}_{t_0t_1}:=\{(x,t)\in\boldsymbol{\Sigma}|\;t_0\le t\le t_1\}
    \end{align}
    for any real numbers $t_0\le t_1$.
    By Stokes' theorem, we have:
    \begin{align}
        \left(\int_{\Sigma_{t_1}}-\int_{\Sigma_{t_0}}\right)
        \overline{\phi'(x)}\,i_\gamma(d^4x)\,\psi'(x)
        =
        \int_{\boldsymbol{\Sigma}_{t_0t_1}}
        d[\overline{\phi'(x)}\,i_\gamma(d^4x)\,\psi'(x)].
    \end{align}
    We calculate:
    \begin{align}
        &d[\overline{\phi'(x)}i_\gamma(d^4x)\psi'(x)]=
        \partial_\mu(\overline{\phi'(x)}\gamma^\mu\psi'(x))\,d^4x
        \nonumber
        \\&=
        (\partial_\mu\overline{\phi'(x)})\gamma^\mu\psi'(x)\,d^4x
        +\overline{\phi'(x)}\gamma^\mu\partial_\mu\psi'(x)\,d^4x
        \nonumber
        \\&=
        \overline{\slashed{\partial}\phi'(x)}\psi'(x)\,d^4x
        +\overline{\phi'(x)}\slashed{\partial}\psi'(x)\,d^4x
        \nonumber
        \\&=
        i\overline{D^A\phi'(x)}\psi'(x)\,d^4x
        -i\overline{\phi'(x)}D^A\psi'(x)\,d^4x
        ,
        \label{eq:stokes-argument}
    \end{align}
    see also the calculation from (17) to (20) in \cite{Deckert2014}.
    Integration yields
    \begin{align}
        &\left(\int_{\Sigma_{t_1}}-\int_{\Sigma_{t_0}}\right)
        \overline{\phi'(x)}\,i_\gamma(d^4x)\,\psi'(x)
        \nonumber\\&=
        i\int_{\boldsymbol{\Sigma}_{t_0t_1}}
        [\overline{D^A\phi'(x)}\psi'(x)
        -\overline{\phi'(x)}D^A\psi'(x)]\,d^4x
        \nonumber\\&=
        i\int_{t_0}^{t_1}\int_{\Sigma_t} 
        [\overline{D^A\phi'(x)}\psi'(x)
        -\overline{\phi'(x)}D^A\psi'(x)]\,i_{v_tn_t}(d^4x)\,
        dt.
    \end{align}
    Differentiating this with respect to the upper boundary $t_1$,
    we conclude
    \begin{align}
        &\frac{d}{dt}\int_{\Sigma_t}
        \overline{\phi'(x)}\,i_\gamma(d^4x)\,\psi'(x)
        \nonumber\\&=
        i\int_{\Sigma_t} 
        [\overline{D^A\phi'(x)}\psi'(x)
        -\overline{\phi'(x)}D^A\psi'(x)]\,i_{v_tn_t}(d^4x)
        \nonumber\\&=
        i\int_{\Sigma_t} 
        [\overline{\phi'(x)}\,\overleftarrow{D^A}
            v_t(x)\slashed{n}_t(x)i_\gamma(d^4x)\psi'(x)
            -\overline{\phi'(x)}i_\gamma(d^4x)v_t(x)\slashed{n}_t(x)
        D^A\psi'(x)],
    \end{align}
    using
    \eqref{eq: repr igammad4x}.
    In the special case $\phi'\in\CA$
    this boils down to
    \begin{align}
        \label{eq:dirac-right}
        &\frac{d}{dt}\int_{\Sigma_t}
        \overline{\phi'(x)}\,i_\gamma(d^4x)\,\psi'(x)
        =
        -i\int_{\Sigma_t} 
        \overline{\phi'(x)}
        \,i_\gamma(d^4x)\,v_t(x)\slashed{n}_t(x)D^A\psi'(x),
    \end{align}
    while in the special case $\psi'\in\CA$ it boils down to 
    \begin{align}
        &\frac{d}{dt}\int_{\Sigma_t}
        \overline{\phi'(x)}\,i_\gamma(d^4x)\,\psi'(x)
        =
        i\int_{\Sigma_t} 
        \overline{\phi'(x)}\,\overleftarrow{D^A}
        v_t(x)\slashed{n}_t(x)\,i_\gamma(d^4x)\,\psi'(x).
        \label{eq:dirac-left}
    \end{align}
    We consider the function 
    \begin{align}
        F:T\times T\to\C,\quad
        F(s,t):=
        \int_{x\in\Sigma_s}\int_{y\in\Sigma_t}
        \overline{\phi(x)}\,i_\gamma(d^4x)\,
        k(x,y)\,
        i_\gamma(d^4y)\,\psi(y).
    \end{align}
    We apply \eqref{eq:dirac-right} to $\phi'=\phi$ and
        $\psi'(x)=
        \int_{y\in\Sigma_t}
        k(x,y)
        i_\gamma(d^4y)\,\psi(y)$
    to get
    \begin{align}
        \frac{\partial}{\partial s} F(s,t)
        =
        -i\int_{x\in \Sigma_s} \int_{y\in \Sigma_t} 
        \overline{\phi(x)}
        \,i_\gamma(d^4x)\,v_s(x)\slashed{n}_s(x)D^A_xk(x,y)
        \,i_\gamma(d^4y)\,\psi(y).
    \end{align}
    Similarly, we apply 
    \eqref{eq:dirac-left} to 
        $\overline{\phi'(y)}=
        \int_{y\in\Sigma_t}
        \overline{\phi(x)}\,i_\gamma(d^4x)\,k(x,y)$
    and $\psi'=\psi$ to get
    \begin{align}
        \frac{\partial}{\partial t} F(s,t)
        =
        i\int_{x\in \Sigma_s} \int_{y\in \Sigma_t} 
        \overline{\phi(x)}
        \,i_\gamma(d^4x)\,k(x,y)\overleftarrow{D^A_y}v_t(y)\slashed{n}_t(y)
        \,i_\gamma(d^4y)\,\psi(y).
    \end{align}
    From the chain rule, claim \eqref{eq:claim-Dk} follows:
    \begin{align}
        \nonumber&
        \frac{d}{dt} F(t,t)
        \\&=
        -i\int_{x\in \Sigma_s} \int_{y\in \Sigma_t} 
        \overline{\phi(x)}
        \,i_\gamma(d^4x)
        [v_t(x)\slashed{n}_t(x)D^A_xk(x,y)-k(x,y)\overleftarrow{D^A_y}v_t(y)\slashed{n}_t(y)]
        \,i_\gamma(d^4y)\,\psi(y).
    \end{align}
\end{proof}
From formula \eqref{eq:claim-Dk} and
the chain rule, we immediately get the following corollary.
\begin{corollary}
    \label{cor:time-derivative}
    For any smooth function  $k:\R^4\times\R^4\times T\to\C^{4\times 4}$, 
    $(x,y,t)\mapsto k_t(x,y)$,  any $\phi,\psi\in\CA$,
    and any $t\in T$ we have
    \begin{align}
        &\frac{d}{dt}
        \int_{x\in\Sigma_t}\int_{y\in\Sigma_t}
        \overline{\phi(x)}\,i_\gamma(d^4x)\,
        k_t(x,y)
        i_\gamma(d^4y)\,\psi(y)
        \nonumber\\&
        =
        \int_{x\in\Sigma_t}\int_{y\in\Sigma_t}
        \overline{\phi(x)}\,i_\gamma(d^4x)\,
        \left[-i\mathcal{D}_t^Ak(x,y)+\frac{\partial k_t}{\partial t}(x,y)\right]
        i_\gamma(d^4y)\,\psi(y).
        \label{eq:claim-Dk-t}
    \end{align}
\end{corollary}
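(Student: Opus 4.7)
The plan is simply to apply the chain rule, isolating the $t$-dependence coming from the moving Cauchy surfaces from the explicit $t$-dependence carried by the kernel. Introduce the two-parameter auxiliary function
\begin{align}
G(s,t):=\int_{x\in\Sigma_t}\int_{y\in\Sigma_t}\overline{\phi(x)}\,i_\gamma(d^4x)\,k_s(x,y)\,i_\gamma(d^4y)\,\psi(y),
\end{align}
so that the left-hand side of \eqref{eq:claim-Dk-t} equals $\tfrac{d}{dt}G(t,t)=\partial_s G(s,t)|_{s=t}+\partial_t G(s,t)|_{s=t}$ by the multivariable chain rule, provided $G$ is $C^1$ in both arguments.

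For the second term, $s$ is frozen so $k_s$ is a fixed smooth kernel on $\R^4\times\R^4$, and Lemma~\ref{lem:time-derivative} applies verbatim with $k\leftarrow k_s$: it yields
\begin{align}
\partial_t G(s,t)=-i\int_{x\in\Sigma_t}\int_{y\in\Sigma_t}\overline{\phi(x)}\,i_\gamma(d^4x)\,\mathcal{D}_t^A k_s(x,y)\,i_\gamma(d^4y)\,\psi(y),
\end{align}
which upon setting $s=t$ produces the $-i\mathcal{D}_t^A k(x,y)$ contribution. For the first term, the surface $\Sigma_t$ is frozen, so we may differentiate under both integral signs: smoothness of $k$ and the fact that $\phi,\psi\in\CA$ have spatially compact causal support imply that the intersection of $\operatorname{supp}\phi\cap\operatorname{supp}\psi$ with the fixed slice $\Sigma_t$ is compact, providing a uniform integrable majorant for the difference quotient in $s$. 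This gives
\begin{align}
\partial_s G(s,t)=\int_{x\in\Sigma_t}\int_{y\in\Sigma_t}\overline{\phi(x)}\,i_\gamma(d^4x)\,\frac{\partial k_s}{\partial s}(x,y)\,i_\gamma(d^4y)\,\psi(y),
\end{align}
and evaluating at $s=t$ produces the $\partial k_t/\partial t$ contribution.

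Adding the two pieces yields \eqref{eq:claim-Dk-t}. There is no real obstacle: all analytic work (Stokes' theorem and the manipulation using the Dirac equation on both sides) was already done in Lemma~\ref{lem:time-derivative}, and the only new ingredient is a standard differentiation under the integral sign, justified by the compact causal support assumption on elements of $\CA$. The only minor point worth verifying is that the mixed partials of $G$ are continuous near the diagonal $s=t$, which follows from the smoothness of $k$, $\phi$, $\psi$, and of the family $\boldsymbol{\Sigma}$.
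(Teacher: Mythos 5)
Your proposal is correct and takes essentially the same route the paper does: the paper states that the corollary follows "from formula \eqref{eq:claim-Dk} and the chain rule," which is precisely the two-parameter $G(s,t)$ chain-rule decomposition you spell out. Your only minor imprecision is the phrase about "$\operatorname{supp}\phi\cap\operatorname{supp}\psi$ with the fixed slice": since $x$ and $y$ range independently, the relevant compact set is the product $(\operatorname{supp}\phi\cap\Sigma_t)\times(\operatorname{supp}\psi\cap\Sigma_t)$, not an intersection — but the compactness justification for differentiating under the integral goes through unchanged.
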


This completes step one, and next, we turn to the relevant estimates.
In the following calculations for fixed $t\in T$, we drop the index $t$
in $v=v_t$ and $n=n_t$. Also, the $t$--dependence of the remainder terms
$r_{\ldots}$ is suppressed in the notation below, as we have
uniformity in $t$ of the error bounds.
Recall from equation \eqref{eq:electric-field} that $E_\mu=F_{\mu\nu}n^\nu$
denotes the ``electric field'' 
of the electromagnetic field $F_{\mu\nu}=\partial_\mu A_\nu-\partial_\nu A_\mu$
with respect 
to the local Cauchy surface $\Sigma$.
\begin{lemma}
    \label{le:Dtp}
    For $u\in\operatorname{Past}$, $\epsilon>0$,  and $x,y\in\R^4$, let 
    \begin{align}
      \label{eq:def-pA}
       p^{A,\epsilon u}(x,y)&:=e^{-i\lambda^A(x,y)}p^-(y-x+i\epsilon u) 
    \end{align} with $\lambda^A$ defined in \eqref{eq:special-lambda}.
    Then for $t\in T$, $x,y\in\Sigma_t$, $z=(z^0,\vec z)=y-x$, 
    and $w=z+i\epsilon u$ we have
    \begin{align}
        &\mathcal{D}_t^Ap^{A,\epsilon u}(x,y)
        \cr
        =&\frac12 v(x)\slashed{n}(x)
        \gamma^\nu F_{\mu\nu}(x)z^\mu p^{A,\epsilon u}(x,y)
        +\frac12 p^{A,\epsilon u}(x,y)\gamma^\nu 
        F_{\mu\nu}(y)z^\mu v(y)\slashed{n}(y)
        +\remainl{r}{r:R3}(x,y,\epsilon u)
        \label{eq:claimA}
        \\=&
        -\frac{i}{2m}
        v(x) z^\mu E_{\mu}(x)\slashed \partial D(w)
        +\remainl{r}{r:R3A}(x,y,\epsilon u)+\remainl{r}{r:R3B}(x,y,\epsilon u)
        \label{eq:claimB}
    \end{align}
    with error terms 
    \begin{align}
        \remainr{r}{r:R3}&=
        O_{A,u,\boldsymbol{\Sigma}}
        \left(\frac{e^{-\cD |\vec z|}}{|\vec z|}\right)
        [1_K(x)\vee 1_K(y)],
        \label{eq:r1-estimate-in-lemma}
        \\
        \remainr{r}{r:R3A}&=
        O_{A,u,\boldsymbol{\Sigma}}
        \left(\frac{e^{-\cD |\vec z|}}{|\vec z|}\right)[1_K(x)\vee 1_K(y)],
        \label{eq:claimB-error-bound1}
        \\
        \remainr{r}{r:R3B}&=
        O_{A,u,\boldsymbol{\Sigma}}\left(
          \sqrt{\epsilon}
          \frac{e^{-\cD |\vec z|}}{|\vec z|^{5/2}}\right)
        [1_K(x)\vee 1_K(y)]
        \label{eq:claimB-error-bound2}
      \end{align}
    for any compact set $K$ containing the support of $A$.
      For any two different points $x\neq y$ in $\Sigma_t$, the limit 
      $\remainr{r}{r:R3A}(x,y,0):=
      \lim_{\epsilon\downarrow 0}\remainr{r}{r:R3A}(x,y,\epsilon u)$
      exists.
  \end{lemma}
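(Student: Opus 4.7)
The plan is to establish (A) by direct differentiation using the free Dirac equations satisfied by $p^-$, and then to pass from (A) to (B) via a triple-$\gamma$-matrix identity combined with careful estimates of three remainder pieces. The hardest step will be the $r_{3B}$ bound, which depends crucially on the antisymmetry of $F_{\mu\nu}$.

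For (A), I would substitute $p^{A,\epsilon u}(x,y)=e^{-i\lambda^A(x,y)}p^-(w)$ into $D_x^A$ and apply the product rule. The identity $(i\slashed\partial+m)p^-(w)=0$, which follows from the mass-shell representation in Lemma~\ref{lemma:kern-p-minus} together with $\slashed p(\slashed p+m)=m(\slashed p+m)$ on $\cM_-$, gives $i\slashed\partial^x p^-(y-x+i\epsilon u)=m\,p^-(w)$, so the two $m$-contributions cancel and one is left with $D_x^A p^{A,\epsilon u}=e^{-i\lambda^A}[\slashed\partial^x\lambda^A-\slashed A(x)]\,p^-(w)$. Differentiating $\lambda^A$ explicitly, using the Taylor expansion $\slashed A(y)-\slashed A(x)=\gamma^\mu\partial_\nu A_\mu(x)z^\nu+O_A(|z|^2)[1_K(x)\vee 1_K(y)]$ with $z=y-x$, and invoking the antisymmetry of $F_{\mu\nu}$ gives $\slashed\partial^x\lambda^A-\slashed A(x)=\tfrac12\gamma^\nu F_{\mu\nu}(x)z^\mu+O_A(|z|^2)[1_K(x)\vee 1_K(y)]$. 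The analogous right-action computation using $(\partial_\mu p^-)\gamma^\mu=im\,p^-$, i.e.\ $p^-\overleftarrow{(-i\slashed\partial^y-m)}=0$, yields $p^{A,\epsilon u}\overleftarrow{D_y^A}=-e^{-i\lambda^A}p^-\bigl[\tfrac12\gamma^\nu F_{\mu\nu}(y)z^\mu+O_A(|z|^2)\bigr]$. Substituting both into $\mathcal{D}_t^A p^{A,\epsilon u}=v(x)\slashed n(x)D_x^A p^{A,\epsilon u}-p^{A,\epsilon u}\overleftarrow{D_y^A}v(y)\slashed n(y)$ produces \eqref{eq:claimA}, and the $r_3$ bound \eqref{eq:r1-estimate-in-lemma} follows from $\|p^-(w)\|\le Ce^{-\cD|\vec z|}/|\vec z|^3$ (Corollary~\ref{lem:upper-bounds}) together with $|z|\le O_{\Vmax}(|\vec z|)$ from \eqref{eq:rz-z}.

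To pass from (A) to (B), I would first replace $F_{\mu\nu}(y)v(y)\slashed n(y)\to F_{\mu\nu}(x)v(x)\slashed n(x)$ and $e^{-i\lambda^A}\to 1$ in the two principal terms of (A); the resulting Taylor errors are $O(|z|)$ and, multiplied by $z^\mu$ and $\|p^-\|$, are of order $e^{-\cD|\vec z|}/|\vec z|$ and absorbed into $r_{3A}$. The problem then reduces to simplifying $\tfrac12 F_{\mu\nu}(x)z^\mu[G\gamma^\nu p^-(w)+p^-(w)\gamma^\nu G]$ with $G=v(x)\slashed n(x)$. Using $p^-=(-i\slashed\partial+m)D/(2m)$ and the identity $\gamma^\rho\gamma^\nu\gamma^\sigma+\gamma^\sigma\gamma^\nu\gamma^\rho=2g^{\nu\sigma}\gamma^\rho+2g^{\nu\rho}\gamma^\sigma-2g^{\rho\sigma}\gamma^\nu$ contracted with $n_\rho$, this bracket evaluates to $\tfrac{v}{m}[-i\slashed n\,\partial^\nu D-in^\nu\slashed\partial D+i\gamma^\nu(n\cdot\partial)D+mn^\nu D]$. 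Contracting with $\tfrac12 F_{\mu\nu}z^\mu$ and using $F_{\mu\nu}n^\nu=E_\mu$, the $n^\nu$-terms isolate the desired principal term $-\tfrac{iv}{2m}z^\mu E_\mu(x)\slashed\partial D(w)$ together with a scalar remainder $\tfrac{v}{2}z^\mu E_\mu D$ of size $e^{-\cD|\vec z|}/|\vec z|$ that belongs to $r_{3A}$.

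The main obstacle is controlling the two remaining pieces. For $\tfrac{iv}{2m}F_{\mu\nu}(x)z^\mu\gamma^\nu(n\cdot\partial)D(w)$, I would exploit the near-tangentiality $n_\sigma(x)z^\sigma=O_{\Vmax}(|\vec z|^2)$ which follows from the parametrization~\eqref{eq:parametrize Sigma}: writing $\partial_\sigma D=-D'(r(w))w_\sigma/r(w)$ one obtains $|(n\cdot\partial)D|\le C(|\vec z|^2+\epsilon)e^{-\cD|\vec z|}/(|\vec z|^2+\epsilon^2)^2$, and a case split at $|\vec z|=\sqrt\epsilon$ shows that, after multiplication by the extra $|z|$ factor, the term is uniformly of order $e^{-\cD|\vec z|}/|\vec z|$, i.e.\ of $r_{3A}$-type. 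The most delicate term is $-\tfrac{iv}{2m}F_{\mu\nu}(x)z^\mu\slashed n(x)\partial^\nu D(w)$: writing $\partial^\nu D=-D'(r)w^\nu/r$ and using the antisymmetry cancellation $F_{\mu\nu}z^\mu z^\nu=0$ leaves only the imaginary part $F_{\mu\nu}z^\mu\partial^\nu D=-i\epsilon F_{\mu\nu}z^\mu u^\nu\,D'(r)/r$, of size $\epsilon|\vec z|\,e^{-\cD|\vec z|}/(|\vec z|^2+\epsilon^2)^2$. The weighted AM-GM estimate $(|\vec z|^2+\epsilon^2)^2\ge c\,|\vec z|^{7/2}\epsilon^{1/2}$ then gives the $r_{3B}$ bound $\sqrt\epsilon\,e^{-\cD|\vec z|}/|\vec z|^{5/2}$, which vanishes pointwise for $x\neq y$ as $\epsilon\downarrow 0$; since all other remainders have pointwise limits there, the limit $r_{3A}(x,y,0)$ exists.
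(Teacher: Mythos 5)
Your route to (A) matches the paper; the passage to (B) via the single triple-$\gamma$ contraction $\gamma^\rho\gamma^\nu\gamma^\sigma+\gamma^\sigma\gamma^\nu\gamma^\rho=2g^{\nu\sigma}\gamma^\rho+2g^{\nu\rho}\gamma^\sigma-2g^{\rho\sigma}\gamma^\nu$ is an elegant alternative to the paper's sequence of three anticommutations that produce the four terms $T_1,\dots,T_4$, and your derived bracket and principal term are correct. However, there is a genuine gap in the estimate for the $(n\cdot\partial)D$ piece. You claim that, after multiplication by the $|z|$ from $F_{\mu\nu}(x)z^\mu$, the quantity
\[
\frac{|\vec z|\,(|\vec z|^2+\epsilon)}{(|\vec z|^2+\epsilon^2)^2}
\]
is uniformly $O\bigl(1/|\vec z|\bigr)$, attributing the whole piece to $r_{3A}$. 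This fails: take $\epsilon=\delta\downarrow 0$ and $|\vec z|=\delta^{3/4}$; the quantity is of order $\delta^{-5/4}$, while $1/|\vec z|$ is only $\delta^{-3/4}$, so the ratio diverges like $\delta^{-1/2}$. No case split at $|\vec z|=\sqrt\epsilon$ rescues a uniform $1/|\vec z|$ bound in the region $\epsilon\lesssim|\vec z|\lesssim\sqrt\epsilon$.

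The correct treatment must split $n\cdot w=n\cdot z+i\epsilon\,n\cdot u$ \emph{before} estimating. The $n\cdot z$ contribution is $O(|\vec z|^2)$ and, combined with the kernel decay, yields a genuine $r_{3A}$-type bound $O(e^{-\cD|\vec z|}/|\vec z|)$; the $i\epsilon\,n\cdot u$ contribution has only $\sqrt\epsilon$-decay and must go to $r_{3B}$, exactly as in the paper's handling of the analogous term \eqref{eq:calc1-2} (split into $\remainr{r}{r:term2-boundA}$ and $\remainr{r}{r:term2-boundB}$, with the latter controlled by \eqref{eq:bound-uD}). Without this split, your $r_{3A}$ fails the uniform-in-$\epsilon$ bound the lemma requires, and the downstream Hilbert--Schmidt uniformity used in Corollary~\ref{cor:r7r8} would not follow. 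The remainder of your argument — the weighted AM--GM for the $F_{\mu\nu}z^\mu\slashed n\,\partial^\nu D$ piece, the scalar $mD$ remainder, and the pointwise limit of $r_{3A}$ — is sound once this split is made.
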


\begin{proof}
    We calculate for $x,y\in\Sigma_t$, $u\in\past$, and $\epsilon>0$:
    \begin{align}
        &D^A_x[e^{-i\lambda^A(x,y)}p^-(y-x+i\epsilon u)]
        \nonumber\\&
        =[\slashed\partial^x \lambda^A(x,y)-\slashed A(x)] 
        e^{-i\lambda^A(x,y)}p^-(y-x+i\epsilon u)
        +e^{-i\lambda^A(x,y)}(i\slashed\partial^x-m)p^-(y-x+i\epsilon u)
        \nonumber\\&
        =[\slashed\partial^x \lambda^A(x,y)-\slashed A(x)] p^{A,\epsilon
        u}(x,y),
        \quad
        \text{because}
        \quad
        (i\slashed\partial^x-m)p^-(y-x+i\epsilon u)=0.
    \end{align}
    Using the definition \eqref{eq:special-lambda} of $\lambda^A$, we get
    \begin{align}
        &\slashed\partial^x \lambda^A(x,y)-\slashed A(x)
        =\frac12 \gamma^\nu[A_\nu(y)-A_\nu(x)+(x^\mu-y^\mu)\partial_\nu^xA_\mu(x)]
        \nonumber\\&=
        \frac12 [\gamma^\nu F_{\mu\nu}(x)(y^\mu-x^\mu)+\remainl{r}{r:R1}(x,y)]
      \label{eq:slashed-partial-lambda-A}
    \end{align}
    with the Taylor rest term 
    \begin{align}
        &\remainr{r}{r:R1}(x,y)=\gamma^\nu[A_\nu(y)-A_\nu(x)-
        (y^\mu-x^\mu)\partial_\mu^xA_\nu(x)]=O_A(|x-y|^2)
        [1_K(x)\vee 1_K(y)]
        \cr&
        =
        O_A(|\vec z|^2)
        [1_K(x)\vee 1_K(y)]
        \text{ with }
        \vec z=\vec y-\vec x;
      \label{eq:taylor-rest-term}
    \end{align}
    cf.~formula \eqref{eq:rz-z} in the appendix, which compares $|z|$ with 
    $|\vec z|$.
    Recall that $K$ denotes a compact set containing the support of $A$. 
    Similarly, we find
    \begin{align}
        &[e^{-i\lambda^A(x,y)}p^-(y-x+i\epsilon u)]\overleftarrow{D^A_y}
        \nonumber\\&
        =
        e^{-i\lambda^A(x,y)}p^-(y-x+i\epsilon u)
        [-\slashed\partial^y \lambda^A(x,y)-\slashed A(y)] 
        +p^-(y-x+i\epsilon u)(-i\overleftarrow{\slashed\partial^y}-m)e^{-i\lambda^A(x,y)}
        \nonumber\\&
        =p^{A,\epsilon u}(x,y)[-\slashed\partial^y \lambda^A(x,y)-\slashed A(y)] .
    \end{align}
    Using the symmetry $\lambda^A(x,y)=-\lambda^A(y,x)$ and interchanging
    $x$ and $y$,
    equation \eqref{eq:slashed-partial-lambda-A} can be rewritten in the form
    \begin{align}
        &-\slashed\partial^y \lambda^A(x,y)-\slashed A(y)
        =
        \frac12 
        \left[
          -\gamma^\nu F_{\mu\nu}(y)(y^\mu-x^\mu)+\remainr{r}{r:R1}(y,x)
        \right].
    \end{align}
    Combining this with the definition \eqref{eq:def-Dk} of $\mathcal{D}_t^A$, 
    we find for $x,y\in\Sigma_t$, $z=y-x$
    \begin{align}
        &\mathcal{D}_t^Ap^{A,\epsilon u}(x,y)
        \nonumber\\&=
        \frac12 v(x)\slashed{n}(x)
        [\gamma^\nu F_{\mu\nu}(x)z^\mu+\remainr{r}{r:R1}(x,y)]p^{A,\epsilon u}(x,y)
        \nonumber\\&\quad+\frac12
        p^{A,\epsilon u}(x,y)
        [\gamma^\nu F_{\mu\nu}(y)z^\mu-\remainr{r}{r:R1}(y,x)]v(y)\slashed{n}(y)
        \nonumber\\&
        =\frac12 v(x)\slashed{n}(x)
        \gamma^\nu F_{\mu\nu}(x)z^\mu p^{A,\epsilon u}(x,y)
        +\frac12 p^{A,\epsilon u}(x,y)\gamma^\nu F_{\mu\nu}(y)z^\mu v(y)\slashed{n}(y)
        +\remainr{r}{r:R3}(x,y,\epsilon u)
        \label{eq:Dtp-form1}
    \end{align}
    with the error term 
    \begin{align}
        \remainr{r}{r:R3}(x,y,\epsilon u)&
        = \frac12 v(x)\slashed{n}(x)\remainr{r}{r:R1}(x,y)p^{A,\epsilon u}(x,y)
        -\frac12 p^{A,\epsilon u}(x,y)
        \remainr{r}{r:R1}(y,x)v(y)\slashed{n}(y)
        \cr&=
        O_{A,u,\boldsymbol{\Sigma}}
        \left(\frac{e^{-\cD |\vec z|}}{|\vec z|}\right)
        [1_K(x)\vee 1_K(y)],
        \label{eq:r1-estimate}
    \end{align}
    for $t\in T$, $x,y\in\Sigma_t$, $\epsilon>0$, $u\in\operatorname{Past}$.
    Here we used the bound \eqref{eq:bound-pminus-R3}
    in Lemma \ref{lem:upper-bounds} in the appendix for $p^-$, 
    the quadratic bound~\eqref{eq:taylor-rest-term} for 
    $\remainr{r}{r:R1}(x,y)$,
    and the fact that $|vn|$, being continuous,
    is bounded on compact sets.
    This proves 
    the claim given in \eqref{eq:claimA} with the error bound
    \eqref{eq:r1-estimate-in-lemma}. 
    
    It remains to
    prove the claim given in \eqref{eq:claimB} with the bounds
    \eqref{eq:claimB-error-bound1} and \eqref{eq:claimB-error-bound2}.
    Recall the definitions of $p^{A,\epsilon u}$ and $p^-$ given in
    \eqref{eq:def-pA} and \eqref{eq:pminus}, respectively.
    We have
    \begin{align}
        p^{A,\epsilon u}(x,y)=-\frac{i}{2m}\slashed\partial D(w)
        +\remainl{r}{r:rpA}(x,y,\epsilon u)
    \end{align}
    with the error term
    \begin{align}
        &\remainr{r}{r:rpA}(x,y,\epsilon u)=
        \frac{1}{2}e^{-i\lambda^A(x,y)}D(w)+
        (e^{-i\lambda^A(x,y)}-1)p^-(z+i\epsilon u)
        =
        O_{A,u,\boldsymbol{\Sigma}}
        \left(\frac{e^{-\cD |\vec z|}}{|\vec z|^2}\right)
    \end{align}
    using the bounds \eqref{eq:bound-D}, \eqref{eq:bound-pminus-R3}
    from the appendix and
    the Taylor bound
    \begin{align}
      |e^{-i\lambda^A(x,y)}-1|=O_A(|z|)\le O_{A,\boldsymbol{\Sigma}}(|\vec z|),
    \end{align}
    which follows 
    from $\lambda^A\in\mathcal{G}(A)$, cf.~Definition \ref{def:PLambda}
    and, once more, from the estimate \eqref{eq:rz-z} in the appendix.
    Hence we get from \eqref{eq:Dtp-form1}
    \begin{align}
        & \mathcal{D}_t^Ap^{A,\epsilon u}(x,y)
        -
        \remainr{r}{r:R3}(x,y,\epsilon u)
        \nonumber
        \\
        &=
        \frac12 v(x)\slashed{n}(x)
        \gamma^\nu F_{\mu\nu}(x)z^\mu p^{A,\epsilon u}(x,y)
        +\frac12 p^{A,\epsilon u}(x,y)\gamma^\nu F_{\mu\nu}(y)z^\mu v(y)\slashed{n}(y)
        \nonumber\\&=
        -\frac{i}{4m}
        v(x)\slashed{n}(x)
        \gamma^\nu F_{\mu\nu}(x)z^\mu \slashed\partial D(w)
        -\frac{i}{4m}\slashed\partial D(w)
        \gamma^\nu F_{\mu\nu}(y)z^\mu v(y)\slashed{n}(y)
        +\remainl{r}{r:error-Dp}(x,y,\epsilon u)
        \label{eq:drop-lambda-and-m}
      \end{align}
      with the error term
    \begin{align}
      \remainr{r}{r:error-Dp}
      =
      \frac12 v(x)\slashed{n}(x)
        \gamma^\nu F_{\mu\nu}(x)z^\mu \remainr{r}{r:rpA}
        +\frac12 \remainr{r}{r:rpA}\gamma^\nu F_{\mu\nu}(y)z^\mu v(y)\slashed{n}(y)
        =
      O_{A,u,\boldsymbol{\Sigma}}
        \left(\frac{e^{-\cD |\vec z|}}{|\vec z|}\right)
        [1_K(x)\vee 1_K(y)]
        .
        \label{eq:drop-lambda-and-m-error}
    \end{align}
    We employ estimate 
    \eqref{eq:bound-uD} for $\partial D$ from the appendix 
    and the fact $\operatorname{supp}F_{\mu\nu}\subseteq K$ to find
     \begin{align}
        &v(x)\slashed{n}(x) F_{\mu\nu}(x)\gamma^\nu z^\mu 
        \slashed\partial D(w)
        =v(x)\slashed{n}(x) F_{\mu\nu}(x)\gamma^\nu w^\mu 
        \slashed\partial D(w)
        +\remainl{r}{r:upartialD}(x,y,\epsilon u)
        \\
        &\slashed\partial D(w)
        \gamma^\nu F_{\mu\nu}(y)z^\mu v(y)\slashed{n}(y)
        =\slashed\partial D(w)
        \gamma^\nu F_{\mu\nu}(y)w^\mu v(y)\slashed{n}(y)
        +\remainl{r}{r:upartialD2}(x,y,\epsilon u)
     \end{align}
      with the error terms
      \begin{align}
        \remainr{r}{r:upartialD}&=
        -v(x)\slashed{n}(x) F_{\mu\nu}(x)\gamma^\nu i\epsilon u^\mu 
        \slashed\partial D(w)
        =
        O_{A,u,\boldsymbol{\Sigma}}
        \left(\sqrt{\epsilon}
          \frac{e^{-\cD |\vec z|}}{|\vec z|^{5/2}}\right)
        1_K(x)
        ,
        \label{eq:wtoz-order}
        \\
        \remainr{r}{r:upartialD2}
        &=-\slashed\partial D(w)
        \gamma^\nu F_{\mu\nu}(y)i\epsilon u^\mu v(y)\slashed{n}(y)
        =
        O_{A,u,\boldsymbol{\Sigma}}
        \left(\sqrt{\epsilon}
          \frac{e^{-\cD |\vec z|}}{|\vec z|^{5/2}}\right)
        1_K(y)
        .
        \label{eq:wtoz-order2}
       \end{align}
    Substituting this in \eqref{eq:drop-lambda-and-m}, we conclude
    \begin{align}
       \mathcal{D}_t^Ap^{A,\epsilon u}(x,y)
      =
      &
      -\frac{i}{4m}
      v(x)\slashed{n}(x)
      \gamma^\nu F_{\mu\nu}(x)w^\mu \slashed\partial D(w)
      -\frac{i}{4m}\slashed\partial D(w)
      \gamma^\nu F_{\mu\nu}(y)w^\mu v(y)\slashed{n}(y)
      \cr&
      +(\remainr{r}{r:R3}+\remainr{r}{r:error-Dp}+
      \remainl{r}{r:error-Dp-summary})(x,y,\epsilon u) 
      \label{eq:eq:Dp-summary}
    \end{align}
    with the additional error term
\begin{align}
  \remainr{r}{r:error-Dp-summary}=
  -\frac{i}{4m}(\remainr{r}{r:upartialD}+
  \remainr{r}{r:upartialD2})=        
  O_{A,u,\boldsymbol{\Sigma}}
  \left(\sqrt{\epsilon}
    \frac{e^{-\cD |\vec z|}}{|\vec z|^{5/2}}\right)
  [1_K(x)\vee 1_K(y)].
  \label{eq:bound-error-Dp-summary}
\end{align}
    The following ``Lorentz symmetry relation'' will be used
    several times in the calculations below. 
    \begin{align}
        \label{eq:Lorentz-symm-D}
        w_\nu\partial_\mu D(w)=w_\mu\partial_\nu D(w) \quad\text{ for }\quad
        w\in\operatorname{domain}(r).
    \end{align}
    Equation~\eqref{eq:Lorentz-symm-D} can be seen as follows. 
    Using $D=f\circ r$ with $f(\xi)=-m^3(2\pi^2)^{-1}K_1(m\xi)/(m\xi)$
    from \eqref{eq:def-D} and 
        $\partial_\mu r(w)=-\frac{w_\mu}{r(w)}$,
    we obtain
        $w_\nu\partial_\mu D(w)=-\frac{w_\nu w_\mu}{r(w)}f'(r(w))
        =w_\mu\partial_\nu D(w)$.

    Using the anticommutator relation $\{\gamma^\mu,\gamma^\nu\}=2g^{\mu\nu}$
    for the Dirac-matrices three times
    and the Lorentz symmetry relation \eqref{eq:Lorentz-symm-D},
    we calculate
     \begin{align}
        v(x) & \slashed{n}(x) F_{\mu\nu}(x)\gamma^\nu w^\mu 
        \slashed\partial D(w)
        = 
        [\slashed{n}(x)\gamma^\nu \slashed w]  
        v(x) F_{\mu\nu}(x)\partial^\mu D(w)
        \nonumber
        \\
        =&
        [2n^\nu(x) \slashed w
            -2\gamma^\nu n_\sigma(x) w^\sigma
            +2w^\nu \slashed{n}(x)
        -\slashed w\gamma^\nu \slashed{n}(x)]
        v(x) F_{\mu\nu}(x)\partial^\mu D(w)
        \nonumber
        \\
        =&
        2n^\nu(x) \slashed w
        v(x) F_{\mu\nu}(x)\partial^\mu D(w)
        \label{eq:calc1-1}
        \\
        &  -2\gamma^\nu n_\sigma(x) w^\sigma
        v(x) F_{\mu\nu}(x)\partial^\mu D(w)
        \label{eq:calc1-2}
        \\
        &+2w^\nu \slashed{n}(x)
        v(x) F_{\mu\nu}(x)\partial^\mu D(w)
        \label{eq:calc1-3}
        \\
        &-\slashed w\gamma^\nu \slashed{n}(x)
        v(x) F_{\mu\nu}(x)\partial^\mu D(w).
        \label{eq:calc1-4}
    \end{align}
    For the first term \eqref{eq:calc1-1}, using the Lorentz symmetry
    \eqref{eq:Lorentz-symm-D} again,
    we get
    \begin{align}
        \eqref{eq:calc1-1} & = 
        2n^\nu(x) \slashed w
        v(x) F_{\mu\nu}(x)\partial^\mu D(w)
        =
        2v(x) w^\mu E_{\mu}(x)\slashed \partial D(w)
        \nonumber
        \\
        & =
        2v(x) z^\mu E_{\mu}(x)\slashed \partial D(w)
        +
        \remainl{r}{r:vuEpartialD}(x,y,\epsilon u)
        \label{eq:term1}
      \end{align}
      with the error term
     \begin{align}
       \remainr{r}{r:vuEpartialD}=
       2v(x) i\epsilon u^\mu E_{\mu}(x)\slashed \partial D(w)
       =
       O_{A,u,\boldsymbol{\Sigma}}
        \left(
        \sqrt{\epsilon}
        \frac{e^{-\cD |\vec z|}}{|\vec z|^{5/2}}\right)
        1_K(x),
        \label{eq:term1-error}
    \end{align}
    where in the last step we have used estimate \eqref{eq:bound-uD} 
    once more. 
    For the second term \eqref{eq:calc1-2},
    we use $n_\sigma(x) z^\sigma=O_{\boldsymbol{\Sigma}}(|\vec z|^2)$,
    which holds because of $x,y\in\Sigma_t$ and $n(x)\perp T_x\Sigma_t$,
    to get
    \begin{align}
        \label{eq:term2}
        \eqref{eq:calc1-2} = 
        -2\gamma^\nu n_\sigma(x) w^\sigma
        v(x) F_{\mu\nu}(x)\partial^\mu D(w)
        =\remainl{r}{r:term2-boundA}(x,y,\epsilon u)+
        \remainl{r}{r:term2-boundB}(x,y,\epsilon u)
      \end{align}
      with the error terms
      \begin{align}
        \remainr{r}{r:term2-boundA}&=
        -2\gamma^\nu n_\sigma(x) z^\sigma
        v(x) F_{\mu\nu}(x)\partial^\mu D(w)=
        O_{A,u,\boldsymbol{\Sigma}}
        \left(\frac{e^{-\cD |\vec z|}}{|\vec z|}\right)1_K(x),
        \label{eq:term2-boundA}
        \\
        \remainr{r}{r:term2-boundB} &=
        -2\gamma^\nu n_\sigma(x) i\epsilon u^\sigma
        v(x) F_{\mu\nu}(x)\partial^\mu D(w)
        =O_{A,u,\boldsymbol{\Sigma}}
        \left(\sqrt{\epsilon}
        \frac{e^{-\cD |\vec z|}}{|\vec z|^{5/2}}\right)1_K(x).
        \label{eq:term2-boundB}
    \end{align}
    We have used the
    estimates \eqref{eq:bound-partialD} and, once more, \eqref{eq:bound-uD}.
    The contribution of the third term \eqref{eq:calc1-3} is zero, i.e.
    \begin{align}
        \label{eq:term3}
        \eqref{eq:calc1-3}
        =
        2w^\nu \slashed{n}(x)
        v(x) F_{\mu\nu}(x)\partial^\mu D(w)=0,
    \end{align}
    because of symmetry
    $w^\nu \partial^\mu D(w)= w^\mu \partial^\nu D(w)$, 
    cf.~\eqref{eq:Lorentz-symm-D},
    and antisymmetry
    $F_{\mu\nu}=-F_{\nu\mu}$.
    To express the fourth term  \eqref{eq:calc1-4}, we 
    use the Lorentz symmetry relation \eqref{eq:Lorentz-symm-D} again
    and replace $x$ by $y$
    up to the following error term:
    \begin{align}
      \remainl{r}{r:Lipschitz}(x,y)=
        F_{\mu\nu}(x) v(x)\slashed{n}(x)-F_{\mu\nu}(y) v(y)\slashed{n}(y)=
        O_{A,\boldsymbol{\Sigma}}(|\vec z|)[1_K(x)\vee 1_K(y)].
        \label{eq:Lipschitz}
    \end{align}
    We obtain for the fourth term  \eqref{eq:calc1-4}:
    \begin{align}
        \label{eq:term4}
        \eqref{eq:calc1-4}
        &=
        -\slashed w\partial^\mu D(w)\gamma^\nu \slashed{n}(x)
        v(x) F_{\mu\nu}(x)
        =
        -w^\mu \slashed \partial D(w)\gamma^\nu  F_{\mu\nu}(x)
        v(x)\slashed{n}(x)
        \nonumber
        \\
        &=
        -\slashed \partial D(w)\gamma^\nu  F_{\mu\nu}(y)w^\mu 
        v(y)\slashed{n}(y)+ 
        \remainl{r}{r:Lipschitz2}(x,y,\epsilon u)
      \end{align}
    with the error term
    \begin{align}
      \remainr{r}{r:Lipschitz2}&=
      w^\mu \slashed \partial D(w)\gamma^\nu  \remainr{r}{r:Lipschitz} =
      O_{A,u,\boldsymbol{\Sigma}}
      \left(\frac{e^{-\cD |\vec z|}}{|\vec z|}\right)[1_K(x)\vee 1_K(y)].
      \label{eq:Lipschitz2}
    \end{align}
    We have used estimate
    \eqref{eq:bound-wpartialDw} from the appendix and the 
    bound~\eqref{eq:Lipschitz}.
    The expressions 
    \eqref{eq:term1}, \eqref{eq:term2}, \eqref{eq:term3}
    and \eqref{eq:term4} of the four terms
    \eqref{eq:calc1-1}-\eqref{eq:calc1-4}
    give 
    \begin{align}
        & v(x) \slashed{n}(x) F_{\mu\nu}(x)\gamma^\nu w^\mu 
        \slashed\partial D(w)
        = 
        \eqref{eq:calc1-1}+
        \eqref{eq:calc1-2}+
        \eqref{eq:calc1-3}+
        \eqref{eq:calc1-4}
        \\
        =&
        [2v(x) z^\mu E_{\mu}(x)\slashed \partial D(w)
        + 
        \remainr{r}{r:vuEpartialD}]
        +
        [\remainr{r}{r:term2-boundA}+
        \remainr{r}{r:term2-boundB}]
        +0
        +[-\slashed \partial D(w)\gamma^\nu  F_{\mu\nu}(y)w^\mu 
        v(y)\slashed{n}(y)+ \remainr{r}{r:Lipschitz2}],
        \nonumber
     \end{align}
    which can be rewritten in the form
    \begin{align}
        &v(x) \slashed{n}(x) F_{\mu\nu}(x)\gamma^\nu w^\mu 
        \slashed\partial D(w)
        +\slashed \partial D(w)\gamma^\nu  F_{\mu\nu}(y)w^\mu 
        v(y)\slashed{n}(y)
        \nonumber
        \\
        &= 2v(x) z^\mu E_{\mu}(x)\slashed \partial D(w)
        + \remainl{r}{r:zinv-term}(x,y,\epsilon u)
        + \remainl{r}{r:eps-term}(x,y,\epsilon u)
    \end{align}
    with the error terms
    \begin{align}
      \remainr{r}{r:zinv-term}&=\remainr{r}{r:term2-boundA}
      +\remainr{r}{r:Lipschitz2}
      =O_{A,u,\boldsymbol{\Sigma}}
      \left(\frac{e^{-\cD |\vec z|}}{|\vec z|}\right)[1_K(x)\vee 1_K(y)],
      \\
      \remainr{r}{r:eps-term}&= 
      \remainr{r}{r:vuEpartialD}
      +\remainr{r}{r:term2-boundB}
      =
      O_{A,u,\boldsymbol{\Sigma}}
      \left(
        \sqrt{\epsilon}
        \frac{e^{-\cD |\vec z|}}{|\vec z|^{5/2}}\right)
      [1_K(x)\vee 1_K(y)].
    \end{align}
    We have used the estimates
    \eqref{eq:term2-boundA} and \eqref{eq:Lipschitz2}
    to bound $\remainr{r}{r:zinv-term}$ and the estimates
    \eqref{eq:term1-error} and  \eqref{eq:term2-boundB}
    to bound $\remainr{r}{r:eps-term}$.
    Substituting this result in 
    equation~\eqref{eq:eq:Dp-summary}
    together with the error bounds~\eqref{eq:r1-estimate}, 
    \eqref{eq:drop-lambda-and-m-error} and \eqref{eq:bound-error-Dp-summary},
    we infer
    \begin{align}
      & \mathcal{D}_t^Ap^{A,\epsilon u}(x,y)
      \nonumber
      \\
      =&  -\frac{i}{4m}
      v(x) \slashed{n}(x) F_{\mu\nu}(x)\gamma^\nu w^\mu\slashed\partial D(w)
      -\frac{i}{4m}\slashed\partial D(w)
      \gamma^\nu F_{\mu\nu}(y)w^\mu v(y)\slashed{n}(y)
      +\remainr{r}{r:R3}
      +\remainr{r}{r:error-Dp}
      +\remainr{r}{r:error-Dp-summary}
      \nonumber
      \\
      =& 
      -\frac{i}{2m}v(x) z^\mu E_{\mu}(x)\slashed \partial D(w) 
      +\remainr{r}{r:R3A}
      +\remainr{r}{r:R3B}
    \end{align}
    with the error terms
    \begin{align}
      \remainr{r}{r:R3A}(x,y,\epsilon u)&=\remainr{r}{r:R3}
      +\remainr{r}{r:error-Dp}
      -\frac{i}{4m}\remainr{r}{r:zinv-term}
      =O_{A,u,\boldsymbol{\Sigma}}
      \left(\frac{e^{-\cD |\vec z|}}{|\vec z|}\right)[1_K(x)\vee 1_K(y)],
      \label{eq:claimB-error-bound1-v2}
      \\
      \remainr{r}{r:R3B}(x,y,\epsilon u)&=
      \remainr{r}{r:error-Dp-summary}
      - \frac{i}{4m}\remainr{r}{r:eps-term}
      =O_{A,u,\boldsymbol{\Sigma}}\left(
        \sqrt{\epsilon}
        \frac{e^{-\cD |\vec z|}}{|\vec z|^{5/2}}\right)
        [1_K(x)\vee 1_K(y)].
        \label{eq:claimB-error-bound2-v2}
      \end{align}
    This proves the claim given in \eqref{eq:claimB}
    with the bounds \eqref{eq:claimB-error-bound1},
    \eqref{eq:claimB-error-bound2}.
    Recall that despite the uniformity in $\epsilon$ of the bound given in
    \eqref{eq:claimB-error-bound1-v2},
    $\remainr{r}{r:R3A}=\remainr{r}{r:R3A}(x,y,\epsilon u)$ depends on
    $\epsilon$. To ensure existence of the limit 
    $\lim_{\epsilon\downarrow 0}\remainr{r}{r:R3A}(x,y,\epsilon u)$
    for two different points $x,y\in\Sigma_t$ from the explicit form
    of $\remainr{r}{r:R3A}$,
    we observe that $z=y-x$ is space-like, and hence 
    $z\in\operatorname{domain}{r}$. As a consequence, the functions $D$ and
    $\partial_\mu D$ are continuous at $z$, cf.~Lemma~\ref{lemma:kern-p-minus},
    which implies the claim.
\end{proof}

In the following, we abbreviate $\partial_\mu=\partial/\partial w^\mu$.
Recall the notation $r(w)=\sqrt{-w_\mu w^\mu}$ from 
\eqref{eq:def-r}.

\begin{lemma}
    \label{le:dr2dD}
    For $w\in \operatorname{domain}(r)$ and $\mu=0,1,2,3$, one has
    \begin{align}
        \label{eq:dr2dD}
        \partial_\mu[r(w)^2\slashed\partial D(w)]
        =
        2w_\mu\slashed{\partial} D(w)
        -\gamma_\mu w^\nu\partial_\nu D(w)
        +\slashed{w}w_\mu m^2 D(w).
    \end{align}
\end{lemma}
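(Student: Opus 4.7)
The plan is to compute the left-hand side by the product rule and then simplify the resulting second-derivative term using two ingredients: the fact that $D$ depends on $w$ only through $r(w)^2$ (equivalently, the Lorentz symmetry \eqref{eq:Lorentz-symm-D} already recorded in the paper), and the Klein-Gordon equation $(\Box+m^2)D=0$ that $D$ inherits from its momentum representation \eqref{eq:def-D}.

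First I would apply Leibniz, noting $\partial_\mu r(w)^2=-2w_\mu$:
\begin{align*}
\partial_\mu[r^2\slashed\partial D]
=(\partial_\mu r^2)\,\slashed\partial D+r^2\gamma^\nu\partial_\mu\partial_\nu D
=-2w_\mu\slashed\partial D+r^2\gamma^\nu\partial_\mu\partial_\nu D,
\end{align*}
so the task is to reduce $r^2\gamma^\nu\partial_\mu\partial_\nu D$ to the remaining three terms on the right-hand side of \eqref{eq:dr2dD} plus the extra $+4w_\mu\slashed\partial D$ needed to flip the sign of the first term. Since $D=F(r^2)$ for a scalar function $F$, one has $\partial_\nu D=\phi\,w_\nu$ with $\phi:=-2F'(r^2)$, and differentiating once more,
\begin{align*}
\partial_\mu\partial_\nu D=\phi\,g_{\mu\nu}-2\phi'w_\mu w_\nu.
\end{align*}
Therefore $r^2\gamma^\nu\partial_\mu\partial_\nu D=r^2\phi\,\gamma_\mu-2r^2\phi'\,w_\mu\slashed w$. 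The first summand already matches the shape of the claim once I use $r^2\phi=-w^\nu w_\nu\phi=-w^\nu\partial_\nu D$, producing the term $-\gamma_\mu w^\nu\partial_\nu D$.

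The only nontrivial step is eliminating $\phi'$ via Klein-Gordon. Contracting $\partial_\mu\partial_\nu D$ with $g^{\mu\nu}$ gives $\Box D=4\phi+2r^2\phi'$, so $(\Box+m^2)D=0$ yields $2r^2\phi'=-4\phi-m^2 D$. Substituting this and using $\slashed w\,\phi=\gamma^\nu w_\nu\phi=\slashed\partial D$ converts $-2r^2\phi'\,w_\mu\slashed w$ into $4w_\mu\slashed\partial D+w_\mu\slashed w\,m^2 D$. Combining with the $-2w_\mu\slashed\partial D$ coming from the product rule gives exactly the three terms on the right-hand side of \eqref{eq:dr2dD}.

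I do not anticipate a genuine obstacle here: the computation is short and the only bookkeeping concern is to keep the index raising/lowering and the sign of $\partial_\mu r^2$ straight, and to apply Klein-Gordon with the correct sign. An alternative route would be to differentiate \eqref{eq:Lorentz-symm-D} once more and contract, but going through the radial ansatz $\partial_\nu D=\phi\,w_\nu$ and Klein-Gordon seems the most transparent, since it isolates precisely where the mass $m$ enters the identity.
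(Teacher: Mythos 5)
Your proof is correct. It uses exactly the same two ingredients as the paper's argument — the fact that $D$ depends on $w$ only through $r(w)^2$, and the Klein--Gordon equation $(\Box+m^2)D=0$ — but marshals them via a different computational route. The paper first rewrites $r^2\slashed\partial D=-w^\nu\slashed w\,\partial_\nu D$ using \eqref{eq:Lorentz-symm-D}, then differentiates and applies a ``reverse product rule'' to convert the second-derivative term into $\Box D$, never naming the scalar profile of $D$ explicitly. You instead apply Leibniz directly to $r^2\slashed\partial D$, introduce the radial ansatz $\partial_\nu D=\phi\,w_\nu$ with $\phi=-2F'(r^2)$, write out $\partial_\mu\partial_\nu D=\phi\,g_{\mu\nu}-2\phi'w_\mu w_\nu$ explicitly, and then eliminate $\phi'$ by tracing and invoking Klein--Gordon. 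Both are short; the paper's route is a bit slicker (no auxiliary functions $\phi,\phi'$), while yours is more transparent about where the mass term enters and why, since the trace $\Box D=4\phi+2r^2\phi'$ and the substitution $2r^2\phi'=-4\phi-m^2D$ make the mechanism explicit. One small point to tidy if you write this up fully: you should note (as the paper does) that the Klein--Gordon equation for $D$ on all of $\operatorname{domain}(r)$ follows from its momentum representation on $\R^4+i\operatorname{Past}$ plus analytic continuation, since \eqref{eq:def-D} only defines the integral on the smaller domain.
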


\begin{proof}
    The function $D$ fulfills the Klein-Gordon equation
    \begin{align}
        \label{eq:Klein-Gordon-D}
        (\Box+m^2)D(w)=0,\quad w\in\operatorname{domain}(r).
    \end{align}
    Indeed, for $w\in \R^4+i\operatorname{Past}$, this can be seen 
    from the definition \eqref{eq:def-D} of $D$ as follows:
    Because of the fast convergence of $e^{ipw}$ to $0$ as $|p|\to\infty$,
    $p\in\cM_-$, we can interchange the Klein-Gordon-operator with the integral
    in the following calculation:
    \begin{align}
        (\Box +m^2)D(w)
        &=(2\pi)^{-3}m^{-1}\int_{\cM_-}(\Box^w+m^2)e^{ipw}\,i_p(d^4p)
        \nonumber\\&=
        (2\pi)^{-3}m^{-1}\int_{\cM_-}(-p^2+m^2)e^{ipw}\,i_p(d^4p)=0.
    \end{align}
    By analytic continuation, the Klein-Gordon equation
    \eqref{eq:Klein-Gordon-D} follows for all $w\in \operatorname{domain}(r)$.
    Equation~\eqref{eq:dr2dD} is proven by the following calculation:
    \begin{eqnarray}
        \partial_\mu[r(w)^2\slashed\partial D(w)]
        &
        = 
        &-\partial_\mu[w^\nu w_\nu\slashed\partial D(w)]
        \nonumber
        \\
        &
        \overset{\eqref{eq:Lorentz-symm-D}}{=}
        &-\partial_\mu[w^\nu\slashed w\partial_\nu D(w)] 
        \nonumber
        \\
        &
        = 
        &-\slashed w\partial_\mu D(w)
        -w^\nu\gamma_\mu\partial_\nu D(w)
        -w^\nu\slashed{w}\partial_\mu\partial_\nu D(w)
        \nonumber
        \\
        &
        =
        &-\slashed w\partial_\mu D(w)
        -w^\nu\gamma_\mu\partial_\nu D(w)
        -\slashed{w}\partial_\nu(w^\nu\partial_\mu D(w))+
        \slashed{w}(\partial_\nu w^\nu) \partial_\mu D(w)
        \nonumber
        \\
        &
        =
        &3\slashed w\partial_\mu D(w)
        -w^\nu\gamma_\mu\partial_\nu D(w)
        -\slashed{w}\partial_\nu(w^\nu\partial_\mu D(w))
        \nonumber
        \\
        &
        \overset{\eqref{eq:Lorentz-symm-D}}{=}
        &3\slashed w\partial_\mu D(w)
        -w^\nu\gamma_\mu\partial_\nu D(w)
        -\slashed{w}\partial_\nu(w_\mu\partial^\nu D(w))
        \nonumber
        \\
        &
        =
        &2\slashed w\partial_\mu D(w)
        -w^\nu\gamma_\mu\partial_\nu D(w)
        -\slashed{w}w_\mu\Box D(w) 
        \nonumber
        \\
        &
        \overset{\eqref{eq:Lorentz-symm-D}, 
        \eqref{eq:Klein-Gordon-D}}{=}
        & 2w_\mu\slashed{\partial} D(w)
        -\gamma_\mu w^\nu\partial_\nu D(w)
        +\slashed{w}w_\mu m^2 D(w).
        \label{eq:rechnung}
    \end{eqnarray}
\end{proof}

Recall the definition of the helper object $s_\Sigma^{A,\epsilon u}(x,y)=
[(\slashed{n}\slashed{E})(x)][(r^2\slashed\partial D)(w)]/(8m)$ introduced in
Definition~\ref{def:S}. The properties of $s_\Sigma^{A,\epsilon u}(x,y)$
claimed in Lemma~\ref{lem:SA} follow analogously to the arguments
used in \eqref{eq:Pminus-lambda-4-0}--\eqref{eq:Pminus-lambda-6}, i.e.,
from the bound
\eqref{eq:bound-r2partialD} given in Corollary~\ref{lem:upper-bounds} in the
appendix, the compact support of $E$, boundedness of
$\partial(\slashed{n}_t\slashed{E}_t)/\partial t$, and the dominated convergence
theorem.
\begin{lemma}
    \label{le:Dt-P+S}
    For $t\in \R$, $x,y\in\Sigma_t$, $z=y-x$, $u\in\operatorname{Past}$, and 
    $\epsilon>0$ we have
    \begin{align}
        \label{eq:DtS}
        \mathcal{D}_t^A s_\Sigma^{A,\epsilon u}(x,y)
        &= \frac{i}{2m} v_t(x)z^\mu E_\mu(x)\slashed\partial D(w)
        +\remainl{r}{r:DS1}(x,y,\epsilon u)
        +\remainl{r}{r:DS2}(x,y,\epsilon u),
        \\
        \label{eq:DtPplusS}
        \mathcal{D}_t^A(p^{A,\epsilon u}_\Sigma+s_\Sigma^{A,\epsilon u})(x,y)
        &=
        \remainl{r}{r:DPplusS1}(x,y,\epsilon u)
        +\remainl{r}{r:DPplusS2}(x,y,\epsilon u)
    \end{align}
    with error terms that fulfill the bounds
    \begin{align}
        \remainr{r}{r:DS1}
        =O_{A,u,\boldsymbol{\Sigma}}\left(
        \frac{e^{-\constl{c:Dprime2}|\vec z|}}{|\vec z|}\right)
        1_K(x)
        ,
        \qquad 
        &\remainr{r}{r:DS2}
        =O_{A,u,\boldsymbol{\Sigma}}\left(\sqrt{\epsilon}
        \frac{e^{-\constr{c:Dprime2}|\vec z|}}{|\vec z|^{5/2}}\right)
        1_K(x)
        ,
        \label{eq:DtS-bounds}
        \\
        \remainr{r}{r:DPplusS1}
        =O_{A,u,\boldsymbol{\Sigma}}\left(
        \frac{e^{-\constr{c:Dprime2}|\vec z|}}{|\vec z|}\right)
        [1_K(x)\vee 1_K(y)],\qquad
        &\remainr{r}{r:DPplusS2}
        =O_{A,u,\boldsymbol{\Sigma}}\left(\sqrt{\epsilon}
        \frac{e^{-\constr{c:Dprime2}|\vec z|}}{|\vec z|^{5/2}}\right)
        [1_K(x)\vee 1_K(y)]
        \label{eq:DtPplusS-bounds}
    \end{align}
    with some positive constant 
    $\constr{c:Dprime2}(\boldsymbol{\Sigma})$.    
    Furthermore, for $x\neq y$ the following limit exists:
    \begin{align}
      \label{eq:pointwise-limit}
      \remainr{r}{r:DPplusS1}(x,y,0)
    :=
    \lim_{\epsilon\downarrow 0} \remainr{r}{r:DPplusS1}(x,y,\epsilon u)
    \end{align}
\end{lemma}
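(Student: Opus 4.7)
The plan is to prove \eqref{eq:DtS} by direct computation of $\mathcal{D}_t^A s_\Sigma^{A,\epsilon u}(x,y)$, and then derive \eqref{eq:DtPplusS} by adding the result to the representation~\eqref{eq:claimB} from Lemma~\ref{le:Dtp}, whereupon the leading singular terms cancel. First I would write $s_\Sigma^{A,\epsilon u}(x,y)=\frac{1}{8m}\slashed n(x)\slashed E(x)\,h(w)$ with $h(w)=r(w)^2\slashed\partial D(w)$. Using the definition~\eqref{eq:def-Dk}, $\mathcal{D}_t^A s$ splits into a left action $v(x)\slashed n(x)D^A_x s$ and a right action $-s\overleftarrow{D^A_y}v(y)\slashed n(y)$; of the three pieces of $D^A_x=i\slashed\partial^x-\slashed A(x)-m$, only $i\slashed\partial^x$ produces singular behaviour on $h(w)$, contributing $-i\gamma^\mu\partial_\mu h(w)$ (by the chain rule $\partial_\mu^x h(w)=-\partial_\mu h(w)$). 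Expanding $\partial_\mu h(w)$ via Lemma~\ref{le:dr2dD} gives three terms: a principal $2w_\mu\slashed\partial D(w)$, a $\gamma_\mu w^\nu\partial_\nu D(w)$ piece, and an $m^2\slashed w\,w_\mu D(w)$ piece. The right action is handled symmetrically.

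Extracting the leading singular contribution requires three Dirac-algebra identities: (a) $\slashed n(x)^2=1$ from normalisation of $n$; (b) the orthogonality $n_\mu E^\mu(x)=n_\mu F^{\mu\nu}(x) n_\nu=0$ by antisymmetry of $F_{\mu\nu}$, whence $\slashed n\slashed E=-\slashed E\slashed n$; and (c) the standard $\{\gamma^\mu,\gamma^\nu\}=2g^{\mu\nu}$. Applied to the principal piece $2w_\mu\slashed\partial D(w)$ in the combined left and right actions, these collapse the expression to the leading term $\frac{i}{2m}v(x)w^\mu E_\mu(x)\slashed\partial D(w)$. Replacing $w^\mu$ by $z^\mu$ introduces a discrepancy $i\epsilon u^\mu$, which via the appendix bound~\eqref{eq:bound-uD} on $u\cdot\partial D$ contributes to the $\sqrt\epsilon$-error $\remainr{r}{r:DS2}$; the replaced expression then matches the claimed leading term of \eqref{eq:DtS}. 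All remaining contributions—the two subleading pieces from Lemma~\ref{le:dr2dD}, the derivatives falling on the prefactor $\slashed n(x)\slashed E(x)$, the $-\slashed A(x)-m$ parts acting on the milder factor $h$, and the symmetric pieces from the right action—are absorbed into $\remainr{r}{r:DS1}$ and estimated via the appendix bounds on $D$, $\partial D$, and $r^2\slashed\partial D$, together with $\supp E\subseteq K$, the near-tangency identity $n(x)\cdot z=O_{\boldsymbol\Sigma}(|\vec z|^2)$ for $x,y\in\Sigma_t$, and a Lipschitz bound for $\slashed n\slashed E(x)-\slashed n\slashed E(y)$ analogous to \eqref{eq:Lipschitz}. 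The localisation to $1_K(x)$ comes from $E(x)=0$ off $K$.

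The second claim~\eqref{eq:DtPplusS} then follows by direct summation of~\eqref{eq:DtS} and~\eqref{eq:claimB}: the leading terms $+\frac{i}{2m}v(x)z^\mu E_\mu(x)\slashed\partial D(w)$ and $-\frac{i}{2m}v(x)z^\mu E_\mu(x)\slashed\partial D(w)$ cancel exactly, leaving $\remainr{r}{r:DPplusS1}=\remainr{r}{r:R3A}+\remainr{r}{r:DS1}$ and $\remainr{r}{r:DPplusS2}=\remainr{r}{r:R3B}+\remainr{r}{r:DS2}$, with the $[1_K(x)\vee 1_K(y)]$ localisation inherited from Lemma~\ref{le:Dtp}. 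The pointwise limit~\eqref{eq:pointwise-limit} exists because for $x\neq y$ in $\Sigma_t$ the separation $z=y-x$ is space-like, so $z\in\operatorname{domain}(r)$ and both $D$ and $\partial D$ are analytic at $z$; the analogous property noted in Lemma~\ref{le:Dtp} for $\remainr{r}{r:R3A}$ then transfers to $\remainr{r}{r:DPplusS1}$. The main obstacle is the algebraic bookkeeping in the second paragraph: since the cancellation in~\eqref{eq:DtPplusS} must be exact, the precise sign and coefficient of the leading term has to be tracked through nested anticommutations in expressions such as $\slashed n(x)\slashed w\,\slashed n(x)\slashed E(x)\slashed\partial D(w)$, inserting the orthogonality $\slashed n\slashed E=-\slashed E\slashed n$ at precisely the step where $w$-dependence and $E$-dependence meet.
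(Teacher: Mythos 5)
Your overall strategy matches the paper's: compute $\mathcal D_t^A s_\Sigma^{A,\epsilon u}$ from the definition~\eqref{eq:def-Dk}, peel off the non-derivative parts of the Dirac operators and the derivatives on the prefactor $\slashed n\slashed E$ as remainders, expand $\partial_\mu[r^2\slashed\partial D]$ via Lemma~\ref{le:dr2dD}, simplify with the Dirac algebra $\{\gamma^\mu,\gamma^\nu\}=2g^{\mu\nu}$, $\slashed n^2=1$, $\slashed n\slashed E=-\slashed E\slashed n$, and then cancel against~\eqref{eq:claimB} to get~\eqref{eq:DtPplusS}. The coefficient $\tfrac{i}{2m}$, the $w\to z$ shift generating the $\sqrt\epsilon$-error, the use of $n(x)\cdot z=O_{\boldsymbol\Sigma}(|\vec z|^2)$, the compact-support localisation, and the analyticity argument for the pointwise limit are all correctly identified.

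However, there is a genuine gap in the bookkeeping of the Lemma~\ref{le:dr2dD} expansion. You classify the $-\gamma_\mu w^\nu\partial_\nu D$ piece as one of the ``two subleading pieces'' and send it into $\remainr{r}{r:DS1}$, which has the bound $O(e^{-\constr{c:Dprime2}|\vec z|}/|\vec z|)$. But $w^\nu\partial_\nu D=O(e^{-\cD|\vec z|}/|\vec z|^2)$ by~\eqref{eq:bound-wpartialDw}, so that piece is one power too singular to be absorbed directly. Concretely, writing $T_1,T_3$ for the contributions of the $2w_\mu\slashed\partial D$ piece in the left and right actions, and $T_2,T_4$ for the contributions of the $-\gamma_\mu w^\nu\partial_\nu D$ piece, the algebra gives
\begin{align*}
T_1 &= -4vw^\nu n_\nu\slashed n\slashed E\slashed\partial D +4vw^\mu E_\mu\slashed\partial D -2v\slashed E\,w^\mu\partial_\mu D,\\
T_3 &= -2v\slashed E\,w^\mu\partial_\mu D,\qquad
T_2 = 0,\qquad
T_4 = +4v\slashed E\,w^\nu\partial_\nu D,
\end{align*}
so the $O(1/|\vec z|^2)$ contributions $-4v\slashed E\,w^\mu\partial_\mu D$ left over by the ``principal'' pieces $T_1+T_3$ are cancelled exactly by $T_4$, not bounded away. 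Your decomposition---principal piece $\mapsto$ leading term, subleading pieces $\mapsto\remainr{r}{r:DS1}$---would leave an uncancelled $O(1/|\vec z|^2)$ residual, so the claimed estimate~\eqref{eq:DtS-bounds} for $\remainr{r}{r:DS1}$ would fail, which would in turn destroy the $I_2$-estimates downstream in Corollary~\ref{cor:r7r8} and Theorem~\ref{thm:inf}. The fix is to carry the $-\gamma_\mu w^\nu\partial_\nu D$ piece exactly through the Dirac algebra (using $T_2=0$ and the cancellation $T_4=-$(residual of $T_1+T_3$)) before splitting off a remainder; only the remaining $-4vw^\nu n_\nu\slashed n\slashed E\slashed\partial D$ and the $m^2\slashed w w_\mu D$ piece then go into $\remainr{r}{r:DS1}$ and $\remainr{r}{r:DS2}$.
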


\begin{proof}
    In this proof, we abbreviate $w=y-x+i\epsilon u=z+iu\epsilon$. Moreover, we
    suppress the $w$ dependence of $r(w)$, $D(w)$, $\partial^w$ and again
    also the $t$-dependence of $v$, $n$, and of the remainder terms $r_{\ldots}$ 
    in the notation. 
    Using the definition of $\mathcal D_t^A$
    given in \eqref{eq:def-Dk} of Lemma~\ref{lem:time-derivative}, we get
    \begin{align}
        &
        8m\mathcal D_t^A s_{\Sigma_t}^{A,\epsilon u}(x,y)
        \nonumber
        \\
        &=
        v(x)\slashed n(x)D^A_x
        [\slashed n(x)\slashed E(x)
        r^2\slashed\partial D]
        -[\slashed n(x)\slashed E(x)r^2
        \slashed\partial D]
        \overleftarrow{D^A_y}\slashed n(y)v(y)
        \nonumber
        \\
        &=
        v(x)\slashed n(x)i\slashed\partial^x
        [\slashed n(x)\slashed E(x)
        r^2\slashed\partial D]
        -[\slashed n(x)\slashed E(x)r^2
        \slashed\partial D]
        \overleftarrow{\slashed\partial^y}(-i)\slashed n(y)v(y)
        +
        \remainl{r}{r:D1}(x,y,\epsilon u)
        \nonumber
        \\
        &=
        iv(x)\slashed n(x)\gamma^\mu
        \slashed n(x)\slashed E(x)
        \partial^x_\mu[r^2\slashed\partial D]
        +i\slashed n(x)\slashed E(x)
        \partial^y_\mu[r^2
        \slashed\partial D]\gamma^\mu
        \slashed n(y)v(y)
        +
        \remainl{r}{r:D2}(x,y,\epsilon u)
        \nonumber
        \\
        &=
        -i v(x)\slashed n(x)\gamma^\mu\slashed n(x)\slashed E(x)
        \partial_\mu[r^2\slashed\partial D]
        +i\slashed n(x)\slashed E(x)
        \partial_\mu[r^2\slashed\partial D]\gamma^\mu \slashed n(x)v(x)
        +\remainl{r}{r:s1}(x,y,\epsilon u),
    \end{align}
    where the remainder terms are defined and estimated as follows:
    \begin{enumerate}[(i)]
        \item Recalling the definitions~\eqref{eq:DA} and \eqref{eq:DA-left}
          of the Dirac operators $D_A$ and $\overleftarrow{D^A}$
            and the fact that $A$ is compactly supported, 
            the estimate \eqref{eq:bound-r2partialD} of
            Corollary~\ref{lem:upper-bounds} in the appendix ensures
            \begin{align}
                \remainr{r}{r:D1}
                & =
                v(x)\slashed n(x)(-m - \slashed A(x))
                [\slashed n(x)\slashed E(x)
                r^2\slashed\partial D]
                -
                [\slashed n(x)\slashed E(x)r^2
                \slashed\partial D]
                (-m-\slashed A(y))\slashed n(y)v(y)
                \nonumber
                \\
                &=
                O_{A,u,\boldsymbol{\Sigma}}\left(
                \frac{e^{-\cD |\vec z|}}{|\vec z|}
                \right)
                1_{K}(x)
            \end{align}
            for some compact set $K$ containing the support of $E$.
        \item Using once more that $E$ has compact support 
          and using the bound \eqref{eq:bound-r2partialD} again we have the
            analogous estimate
            \begin{align}
                \remainr{r}{r:D2}
                &=
                \remainr{r}{r:D1}
                +
                iv(x)\slashed n(x)\gamma^\mu
                \left(\partial^x_\mu[\slashed n(x)\slashed E(x)]\right)
                r^2\slashed\partial D
                +i\left(\partial^y_\mu[\slashed n(x)\slashed E(x)]\right)
                r^2\slashed\partial D\gamma^\mu
                \slashed n(y)v(y)
                \nonumber
                \\
                &=
                O_{A,u,\boldsymbol{\Sigma}}\left(
                \frac{e^{-\cD |\vec z|}}{|\vec z|}
                \right)
                1_{K}(x)
            \end{align}
        \item Using the signs coming from inner derivatives: $-\partial^x
            D(w)=\partial^y D(w)=\partial D(w)$ and the Taylor expansion
            \begin{align}
                \slashed n(y)v(y) = \slashed n(x)v(x)+
                \remainl{r}{r:Lipschitz-nv}(x,y) 
                \quad\text{ with }\quad
                \remainr{r}{r:Lipschitz-nv}=O_{\boldsymbol{\Sigma}}(|\vec z|)
            \end{align}
            for $x,y\in\Sigma_t$ with $x\in K$ we find
            with the help of bound \eqref{eq:bound-dr2D} in the appendix:
            \begin{align}
                \remainr{r}{r:s1}
                &=
                \remainr{r}{r:D2}
                +
                i\slashed n(x)\slashed E(x)
                \partial_\mu[r^2\slashed\partial D]\gamma^\mu 
                \remainr{r}{r:Lipschitz-nv}
                =
                O_{A,u,\boldsymbol{\Sigma}}\left(
                \frac{e^{-\cD |\vec z|}}{|\vec z|}
                \right)
                1_{K}(x).
                \label{eq:bound-r10}
            \end{align}

    \end{enumerate}
    In the following calculations, we drop the argument $x$;
    thus, $v$, $n$, and $E$ stand for $v(x)$, $n(x)$, and $E(x)$, respectively,
    but $r=r(w)$ and $D=D(w)$.
    Using Lemma \ref{le:dr2dD},
    we get
    \begin{align}
        &-i\left(8m\mathcal D_t^A s^{A,\epsilon u}_\Sigma(x,y)
        -\remainr{r}{r:s1}
        \right)
        =
        -v\slashed n\gamma^\mu\slashed n\slashed E
        \partial_\mu[r^2\slashed\partial D]
        +\slashed n\slashed E
        \partial_\mu[r^2\slashed\partial D]\gamma^\mu \slashed nv
        \nonumber
        \\\nonumber&
        =
        -v\slashed n\gamma^\mu\slashed n\slashed E
        [2w_\mu\slashed{\partial} D
        -\gamma_\mu w^\nu\partial_\nu D]
        +v \slashed n\slashed E
        [2w_\mu\slashed{\partial} D
        -\gamma_\mu w^\nu\partial_\nu D]
        \gamma^\mu \slashed n 
        +\remainl{r}{r:s2}(x,y,\epsilon u)
        \\&
        =
        T_1+T_2+T_3+T_4+\remainr{r}{r:s2}
        \label{eq:Dts}
    \end{align}
    with the four terms
    \begin{align}
        \begin{split}
            T_1&=-2v\slashed n\slashed w\slashed n\slashed E
            \slashed\partial D,\\
            T_3&=2v\slashed n\slashed E
            \slashed\partial D\slashed w\slashed n,
        \end{split}
        \qquad
        \begin{split}
            &T_2=v\slashed n\gamma^\mu\slashed n\slashed E
            \gamma_\mu w^\nu\partial_\nu D,\\
            &T_4=-v\slashed n\slashed E\gamma_\mu 
            \gamma^\mu \slashed n w^\nu\partial_\nu D,
        \end{split}
    \end{align}
    and the remainder term
    \begin{align}
        \remainr{r}{r:s2}
        =
        -v\slashed n\gamma^\mu\slashed n\slashed E
        \slashed{w}w_\mu m^2 D
        +
        v\slashed n\slashed E\slashed{w}w_\mu m^2 D
        \gamma^\mu \slashed n
        =
        O_{A,u,\boldsymbol{\Sigma}}\left(
        e^{-\cD |\vec z|}
        \right)
        1_{K}(x),
        \label{eq:bound-r11}
    \end{align}
    where the bound comes from \eqref{eq:bound-wwD} of
    Corollary~\ref{lem:upper-bounds} in the appendix 
    and from $\operatorname{supp} E\subseteq K$.
    We evaluate the four terms $T_j$ separately.
    Using the anticommutation rules $\{\gamma^\mu,\gamma^\nu\}=2g^{\mu\nu}$
    for the Dirac matrices and
    $\slashed n^2=1$, we get
    \begin{align}
        T_1&=-2v\slashed n[2w^\nu n_\nu-\slashed n\slashed w]\slashed E
        \slashed\partial D
        \nonumber\\&=-4v\slashed n w^\nu n_\nu\slashed E\slashed\partial D
        +2v[2 w^\mu E_\mu -\slashed E\slashed w]
        \slashed\partial D
        \nonumber\\&=
        -4v\slashed n w^\nu n_\nu\slashed E\slashed\partial D
        +4vw^\mu E_\mu\slashed\partial D 
        -2v\slashed E w^\mu\partial_\mu D,
        \label{eq:T1}
    \end{align}
    where in the last step we used the Lorentz symmetry \eqref{eq:Lorentz-symm-D}
    to compute
    \begin{align}
        \slashed w \slashed \partial D
        &=
        \gamma^\mu \gamma^\nu w_\mu \partial_\nu D
        =
        \frac{1}{2}\left(
        \gamma^\mu \gamma^\nu w_\mu \partial_\nu D
        +
        \gamma^\mu \gamma^\nu w_\nu \partial_\mu D
        \right)
        \nonumber
        \\
        &=
        \frac{1}{2}\left(
        \gamma^\mu \gamma^\nu 
        +
        \gamma^\nu \gamma^\mu 
        \right)
        w_\mu \partial_\nu D
        =
        w^\mu\partial_\mu D.
        \label{eq:anti-lorentz}
    \end{align}
    Using the anticommutation rules again, the fact 
    $\gamma^\mu\gamma_\mu=4$, the definition $E_\mu=F_{\mu\nu}n^\nu$
    given in \eqref{eq:electric-field}, and the antisymmetry
    $F_{\mu\nu}=-F_{\nu\mu}$, we get
    \begin{align}
        \gamma^\mu\slashed n\slashed E\gamma_\mu
        &=
        (2n^\mu-\slashed n\gamma^\mu)(2E_\mu-\gamma_\mu\slashed E)
        =
        4n^\mu E_\mu 
        -4\slashed n\slashed E
        +\slashed n \gamma^\mu \gamma_\mu \slashed E
        \nonumber
        \\
        &=4 n^\mu E_\mu=4 n^\mu F_{\mu\nu} n^\nu=0
    \end{align}
    and therefore $T_2=0$.
    Using the same argument that was used to derive \eqref{eq:anti-lorentz} we also find 
    $\slashed\partial D(w) \slashed w=w^\mu\partial_\mu D$, and hence,
    \begin{align}
        T_3=2v\slashed n\slashed E
        w^\mu\partial_\mu D\slashed n.
    \end{align}
    Finally, we have
    \begin{align}
        T_4&=-4v\slashed n\slashed E\slashed n w^\nu\partial_\nu D,
    \end{align}
    which yields
    \begin{align}
        T_3+T_4=-2v\slashed n\slashed E
        \slashed n w^\mu\partial_\mu D
        =2v\slashed n^2
        \slashed E w^\mu\partial_\mu D
        =2v\slashed E w^\mu\partial_\mu D
        .
    \end{align}
    We have used that $\slashed n$ and $\slashed E$ anticommute
    because of $n^\mu E_\mu=n^\mu F_{\mu\nu} n^\nu=0$.
    Together with the expression~\eqref{eq:T1} for $T_1$ and $T_2=0$, we
    conclude
    \begin{align}
      \label{eq:sum-of-T}
        T_1+T_2+T_3+T_4=
        4 vw^\mu E_\mu\slashed\partial D+\remainl{r}{r:s3}(x,y,\epsilon u). 
    \end{align}
    with the error terms
    \begin{align}
        \remainr{r}{r:s3}
        =
        -4v\slashed n w^\nu n_\nu\slashed E\slashed\partial D
        =
        \remainl{r}{r:iu-term}(x,y,\epsilon u)
        +
        \remainl{r}{r:z-term}(x,y,\epsilon u),
        \label{eq:two-error-bounds}
    \end{align}
    where using $w=z+i\epsilon u$
    \begin{align}
        \remainr{r}{r:iu-term}
        =
        -4v\slashed n i\epsilon u^\nu n_\nu\slashed E\slashed\partial D,
        \qquad
        \remainr{r}{r:z-term}
        =
        -4v\slashed n z^\nu n_\nu\slashed E\slashed\partial D.
    \end{align}
    Inequality \eqref{eq:bound-uD} from the appendix 
    and the fact $\operatorname{supp} E\subseteq K$ provide the bound
    \begin{align}
        &\remainr{r}{r:iu-term}
        =O_{A,u,\boldsymbol{\Sigma}}\left(\sqrt{\epsilon}
        \frac{e^{-\cD |\vec z|}}{|\vec z|^{5/2}}\right)
        1_K(x)
        .
        \label{eq:bound-r13}
    \end{align}
    For the next estimate, we observe
    $(\nabla t_{\Sigma_t}(\vec x)\cdot\vec z,\vec z)
    \in T_x\Sigma_t\perp n(x)$;
    recall the parametrization \eqref{eq:parametrize Sigma} 
    of $\Sigma_t$. We obtain the Taylor expansion
    \begin{align}
        z^\nu n_\nu=
        n_0(x)[t_{\Sigma_t}(\vec y)-t_{\Sigma_t}(\vec x)]-\vec n(x)\cdot\vec z
        =
        n_0(x)\nabla t_{\Sigma_t}(\vec x)\cdot\vec z -\vec n(x)\cdot\vec z
        +O_{\boldsymbol{\Sigma}}(|\vec z|^2)
        =O_{\boldsymbol{\Sigma}}(|\vec z|^2)
    \end{align}
    uniformly for $x$ in the compact set $K$.
    Using \eqref{eq:bound-partialD} from the appendix and 
    the support property of $E$ again, this implies
    \begin{align}
        \remainr{r}{r:z-term}=
        O_{A,u,\boldsymbol{\Sigma}}\left(
        \frac{e^{-\cD |\vec z|}}{|\vec z|}\right)
        1_K(x)
        .
        \label{eq:bound-r14}
    \end{align}
    Finally, we have from equation~\eqref{eq:sum-of-T}
    \begin{align}
        T_1+T_2+T_3+T_4-\remainr{r}{r:s3}=
        4 v w^\mu E_\mu\slashed\partial D
        =  4 v z^\mu E_\mu\slashed\partial D+\remainl{r}{r:iu-term2}(x,y,\epsilon u)
        \label{eq:bound-sum-T}
    \end{align}
    with the error term
    \begin{align}
        \remainr{r}{r:iu-term2}=4 v i\epsilon u^\mu E_\mu\slashed\partial D
        =
        O_{A,u,\boldsymbol{\Sigma}}\left(\sqrt{\epsilon}
        \frac{e^{-\cD |\vec z|}}{|\vec z|^{5/2}}\right)
        1_K(x),
        \label{eq:bound-r15}
    \end{align}
    where once again we have used the bound 
    \eqref{eq:bound-uD} from the appendix  
    and the fact $\operatorname{supp} E\subseteq K$.
    Let us summarize:
    We use the equations \eqref{eq:Dts}, 
    \eqref{eq:bound-sum-T}, and \eqref{eq:two-error-bounds}
    to get the claimed formula
    \begin{align}
     &\mathcal D_t^A s_{\Sigma_t}^{A,\epsilon u}(x,y)
        =\frac{i}{2m} v z^\mu E_\mu\slashed\partial D
        +\remainr{r}{r:DS1}
        +\remainr{r}{r:DS2}
       \tag{\ref{eq:DtS}}
     \end{align} 
     with the remainder terms 
     \begin{align}
       \remainr{r}{r:DS1}
       &:=\frac{\remainr{r}{r:s1}}{8m}
       +
       \frac{i}{8m}
       \left(
         \remainr{r}{r:s2}
         +\remainr{r}{r:z-term}
       \right)
       =
        O_{A,u,\boldsymbol{\Sigma}}
        \left(
          \frac{e^{-\cD |\vec z|}}{|\vec z|}
          +e^{-\cD |\vec z|}
        \right)
        1_K(x)
        =
        O_{A,u,\boldsymbol{\Sigma}}
        \left(
          \frac{e^{-\constr{c:Dprime2}|\vec z|}}{|\vec z|}
        \right)
        1_K(x)
       \\   
       \remainr{r}{r:DS2}
       &:=\frac{i}{8m}
       \left(
         \remainr{r}{r:iu-term}
         +\remainr{r}{r:iu-term2}
       \right)
        =O_{A,u,\boldsymbol{\Sigma}}
        \left(
          \sqrt{\epsilon}\frac{e^{-\cD |\vec z|}}{|\vec z|^{5/2}}
        \right)
        1_K(x)
        =
        O_{A,u,\boldsymbol{\Sigma}}
        \left(
          \sqrt{\epsilon}\frac{e^{-\constr{c:Dprime2}|\vec z|}}{|\vec z|^{5/2}}
        \right)
        1_K(x)
      \end{align}
     with any positive constant 
     $\constr{c:Dprime2}(\boldsymbol{\Sigma})<\cD (\boldsymbol{\Sigma})$.
     We have applied the error bounds
     \eqref{eq:bound-r10}, \eqref{eq:bound-r11}, and 
     \eqref{eq:bound-r14}
     for the first remainder term $\remainr{r}{r:DS1}$, and the bounds
     \eqref{eq:bound-r13} and  
     \eqref{eq:bound-r15} for the second remainder term
     $\remainr{r}{r:DS2}$.
     Finally, we have weakened the bounds slightly 
     to get a simpler notation.
     This shows the claimed error bounds 
     in \eqref{eq:DtS-bounds}.

    Combining this with Lemma \ref{le:Dtp}
    and setting 
    $\remainr{r}{r:DPplusS1}
    =\remainr{r}{r:R3A}+\remainr{r}{r:DS1}$,
    $\remainr{r}{r:DPplusS2}
    =\remainr{r}{r:R3B}+\remainr{r}{r:DS2}$,
    equation \eqref{eq:DtPplusS} together with the corresponding error bounds
    \eqref{eq:DtPplusS-bounds} 
    are immediate consequences.
    
    To ensure existence of the limit of 
    $\remainr{r}{r:DPplusS1}(x,y,\epsilon u)$
    as $\epsilon\downarrow 0$ for $x,y\in\Sigma_t$ with $x\neq y$,
    we use the existence of the limits
    $\lim_{\epsilon\downarrow 0}\remainr{r}{r:R3A}(x,y,\epsilon t)$ and 
    $\lim_{\epsilon\downarrow 0}\remainr{r}{r:DS1}(x,y,\epsilon u)$.
    The existence of the former limit was proven in Lemma~\ref{le:Dtp}, 
    and existence of the latter limit follows by the same argument, i.e., from
    the fact that
    the functions $D$ and
    $\partial_\mu D$ are continuous at $z$, and that 
    $\remainr{r}{r:DS1}$ is explicitly given in terms of $D$ and its derivative.
    This yields the claim.
\end{proof}

\begin{corollary}\label{cor:r7r8}
    The error terms $\remainr{r}{r:DPplusS1}(\cdot,\cdot,\epsilon u)$ and
    $\remainr{r}{r:DPplusS2}(\cdot,\cdot,\epsilon u)$
    in \eqref{eq:DtPplusS}
    give rise to bounded linear operators 
    $\remainr{R}{r:DPplusS1}^{\epsilon u}(t), \remainr{R}{r:DPplusS2}^{\epsilon u}(t):
    \cH_{\Sigma_t}\selfmaps$
    with matrix elements
    \begin{align}
        \sk{\phi,\remainr{R}{r:DPplusS1}^{\epsilon u}(t)\psi}
        =
        \int_{x\in\Sigma_t}
        \int_{y\in\Sigma_t}
        \overline{\phi(x)} \, i_\gamma(d^4x) \,
        \remainr{r}{r:DPplusS1}(x,y,\epsilon u)
        \, i_\gamma(d^4y) \, 
        \psi(y),
        \qquad
        \psi,\phi\in \cH_{\Sigma_t}
    \end{align}
    and similarly for $\remainr{r}{r:DPplusS2}(x,y,\epsilon u)$,
    $\remainr{R}{r:DPplusS2}^{\epsilon u}(t)$.
    They fulfill: 
    \begin{enumerate}[(i)]
    \item The operators $\remainr{R}{r:DPplusS1}^{\epsilon u}(t)$,
      $\epsilon\geq 0$, are
      Hilbert-Schmidt operators.
      There is a constant $\constl{c:DPplusS-1}(A,u,\mathbf{\Sigma})$ such
      that
      $\sup_{t\in T,\epsilon>0}
      \|\remainr{R}{r:DPplusS1}^{\epsilon u}(t)\|_{I_2(\cH_{\Sigma_t})}
      \leq
      \constr{c:DPplusS-1}$. Furthermore,
      \begin{align}            
        \lim_{\epsilon\downarrow 0} \|
        \remainr{R}{r:DPplusS1}^{\epsilon u}(t) -
        \remainr{R}{r:DPplusS1}^0(t) \|_{I_2(\cH_{\Sigma_t})}=0.
      \end{align}
    \item 
      $\sup_{t\in T}\|\remainr{R}{r:DPplusS2}^{\epsilon u}(t)\|_{\cH_{\Sigma_t}\selfmaps}
      \leq
      O_{A,u,\boldsymbol{\Sigma}}(\sqrt{\epsilon})$.
    \end{enumerate}
\end{corollary}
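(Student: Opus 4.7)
The plan is to reduce both claims to elementary integral estimates on the kernels by pulling back to $\R^3$ via the parametrization $\pi_{\Sigma_t}$ of \eqref{eq:parametrize Sigma}, in the spirit of the Hilbert--Schmidt computation at the end of the proof of Lemma~\ref{lem:Pminus-lambda}. The two three-dimensional integrability facts that drive everything are
\begin{align*}
\int_{\R^3}\frac{e^{-2c|\vec z|}}{|\vec z|^{2}}\,d^3\vec z<\infty
\qquad\text{and}\qquad
\int_{\R^3}\frac{e^{-c|\vec z|}}{|\vec z|^{5/2}}\,d^3\vec z<\infty
\end{align*}
for any $c>0$; the former because $|\vec z|^{-2}\cdot|\vec z|^2=1$ is integrable at the origin, the latter because $|\vec z|^{-5/2}\cdot|\vec z|^2=|\vec z|^{-1/2}$ is. Throughout, $K\subset\R^4$ denotes a fixed compact set containing $\supp A$, as in Lemma~\ref{le:Dt-P+S}.

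For part (i), the first step is to compute the squared Hilbert--Schmidt norm of $\remainr{R}{r:DPplusS1}^{\epsilon u}(t)$ exactly as at the end of the proof of Lemma~\ref{lem:Pminus-lambda}. After pulling back to $\R^3$, one uses $\|\Gamma(\vec x)\|\le\GammaMax$ from \eqref{eq:bound-Gamma}, the kernel bound \eqref{eq:DtPplusS-bounds} for $\remainr{r}{r:DPplusS1}$, and the elementary estimate $(1_K(x)\vee 1_K(y))^2\le 1_K(x)+1_K(y)$ to split the double integral into two symmetric pieces. In each piece the factor $1_K$ confines one of the two $\R^3$-variables to the projection of $K$ onto $\R^3$, which has finite Lebesgue measure, and the remaining $\vec z=\vec y-\vec x$ integral is then controlled by the first integrability fact above. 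Since all constants in \eqref{eq:DtPplusS-bounds} depend only on $A$, $u$, and $\boldsymbol{\Sigma}$, and $\GammaMax$ only on $\Vmax$, this yields the claimed $I_2$-norm bound uniformly in $t\in T$ and in $\epsilon\ge 0$. For the $I_2$-convergence as $\epsilon\downarrow 0$, I would then invoke the dominated convergence theorem on the kernels viewed as elements of $L^2(\R^3\times\R^3,\C^{4\times 4})$: the pointwise limit $\remainr{r}{r:DPplusS1}(x,y,\epsilon u)\to\remainr{r}{r:DPplusS1}(x,y,0)$ for $x\neq y$ is furnished by \eqref{eq:pointwise-limit}, the diagonal is Lebesgue-null, and the $\epsilon$-uniform bound just derived supplies a square-integrable dominator for the differences.

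For part (ii), a Hilbert--Schmidt bound is not available because squaring the bound on $\remainr{r}{r:DPplusS2}$ yields a factor $|\vec z|^{-5}$, which is not integrable at the origin in three dimensions. Instead I plan to apply Schur's test: using $(1_K(x)\vee 1_K(y))\le 1_K(x)+1_K(y)$ and the kernel estimate \eqref{eq:DtPplusS-bounds}, both the row integrals $\int_{\R^3}\|\remainr{r}{r:DPplusS2}(x,y,\epsilon u)\|\,d^3\vec y$ and the column integrals $\int_{\R^3}\|\remainr{r}{r:DPplusS2}(x,y,\epsilon u)\|\,d^3\vec x$ are of order $O_{A,u,\boldsymbol{\Sigma}}(\sqrt{\epsilon})$ uniformly in the remaining variable; in each of the two pieces, the $1_K$-factor either restricts the integration domain to a bounded region (handled by the second integrability fact above) or contributes an innocuous factor $1_K(\cdot)\le 1$, while the exponential decay in $|\vec z|$ keeps every estimate insensitive to where $K$ sits relative to the free variable. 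Schur's test then yields the claimed $O_{A,u,\boldsymbol{\Sigma}}(\sqrt{\epsilon})$ operator-norm bound, uniformly in $t\in T$. Once the sharp kernel bounds of Lemma~\ref{le:Dt-P+S} are in hand none of these steps is technically deep; the only point requiring genuine care is the symmetric splitting via $1_K(x)+1_K(y)$, which is needed because \eqref{eq:DtPplusS-bounds} does not force \emph{both} arguments into $K$.
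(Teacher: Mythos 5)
Your proposal is correct and follows essentially the same route as the paper: for part (i) the paper likewise bounds the squared Hilbert--Schmidt norm by pulling back to $\R^3$ and integrating the kernel estimate \eqref{eq:DtPplusS-bounds} against the singularity $|\vec z|^{-2}$, and then obtains the $I_2$-convergence from the pointwise limit \eqref{eq:pointwise-limit} together with $\epsilon$-uniform domination; for part (ii) the paper's change-of-variables plus Cauchy--Schwarz computation (referencing \eqref{eq:Pminus-lambda-4-0}--\eqref{eq:Pminus-lambda-6}) is just the convolution-kernel form of the Schur/Young bound you invoke, with the indicator factor simply dropped via $1_K\le 1$.
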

\begin{proof}
    (i) 
    For $\psi,\phi\in \cH_{\Sigma_t}$, using
    the bound \eqref{eq:DtPplusS-bounds} for $\remainr{r}{r:DPplusS1}$,
    we find uniformly for $\epsilon>0$ and
    $t\in T$ that
    \begin{align}
        \| \remainr{R}{r:DPplusS1}^{\epsilon u}(t) \|^2_{I_2(\cH_{\Sigma_t})}
        & =
        \int_{\vec x\in\R^3}
        \int_{\vec y\in\R^3}
        \operatorname{trace}
        \left[
          \gamma^0
          \remainr{r}{r:DPplusS1}(x,y,\epsilon u)^*
          \gamma^0 \Gamma(\vec x)
          \remainr{r}{r:DPplusS1}(x,y,\epsilon u)
          \Gamma(\vec y)
        \right]
        \, d^3\vec y d^3\vec x
        \\
        & \leq
        \constl{c:DPplusS}
        \int_{\vec x\in\R^3}
        \int_{\vec y\in\R^3}
        \left[
            \frac{e^{-\cD |\vec y-\vec x|}}{|\vec y-\vec x|}
        \right]^2
        (1_K(x) \vee 1_K(y))
        \, d^3\vec y d^3\vec x < \infty.
    \end{align}
    for some constant $\constr{c:DPplusS}(A,u,\boldsymbol{\Sigma})$.
    The limit $\remainr{R}{r:DPplusS1}^{\epsilon u}(t)\konv{\epsilon\downarrow
    0} \remainr{R}{r:DPplusS1}^0(t)$ in the $I_2(\cH_{\Sigma_t})$ norm is implied
    by the point-wise convergence \eqref{eq:pointwise-limit} 
    stated in Lemma~\ref{le:Dt-P+S} and the point-wise
    bound \eqref{eq:DtPplusS-bounds}, using dominated convergence.
    \newline

   (ii)
    For $\psi,\phi\in \cH_{\Sigma_t}$, using the bound
    in \eqref{eq:DtPplusS-bounds} for $\remainr{r}{r:DPplusS2}$ and
    the
    Cauchy-Schwarz inequality, we find analogously to the calculation
    \eqref{eq:Pminus-lambda-4-0}--\eqref{eq:Pminus-lambda-6}:
    \begin{align}
        |\sk{\phi,\remainr{R}{r:DPplusS2}^{\epsilon u}(t)\psi}|
        & \leq O_{A,u,\boldsymbol{\Sigma}}(\sqrt{\epsilon})
        \int_{\vec x\in\R^3}
        \int_{\vec y\in\R^3}
        |\phi(x)| \, |\Gamma(\vec x)| \, d^3\vec x \,
        \frac{e^{-\cD |\vec y-\vec x|}}{|\vec y-\vec x|^{5/2}}
        \, |\Gamma(\vec y)| \, d^3\vec y \, 
        |\psi(y)|
        \\
        & \leq 
        O_{A,u,\boldsymbol{\Sigma}}(\sqrt{\epsilon})
        \int_{\vec z\in\R^3}
        \frac{e^{-\cD |\vec z|}}{|\vec z|^{5/2}} \, d^3\vec z \,
        \|\phi\| \, \|\psi\|,
    \end{align}
    which is finite and uniform in $t$.

    The existence of the bounded linear operators 
    $\remainr{R}{r:DPplusS1}^{\epsilon u}(t), \remainr{R}{r:DPplusS2}^{\epsilon u}(t):
    \cH_{\Sigma_t}\selfmaps$ follows.
\end{proof}

Finally, we prove the Theorem~\ref{thm:inf} with the collected ingredients.
\begin{proof}[Proof of Theorem~\ref{thm:inf}]
    With justifications given below, we find that for $\phi,\psi\in\CA$
    \begin{align}
        &
        \sk{\phi|_{\Sigma_{t_1}},(P_{\Sigma_{t_1}}^A+S_{\Sigma_{t_1}}^A)\psi|_{\Sigma_{t_1}}} 
        -
        \sk{\phi|_{\Sigma_{t_0}},(P_{\Sigma_{t_0}}^A+S_{\Sigma_{t_0}}^A)\psi|_{\Sigma_{t_0}}}
        \label{eq:claim-Dps-ini}
        \\=&
        \lim_{\epsilon\downarrow 0}\left(\int_{x\in\Sigma_{t_1}}\int_{y\in\Sigma_{t_1}}-
        \int_{x\in\Sigma_{t_0}}\int_{y\in\Sigma_{t_0}}\right)
        \overline{\phi(x)}\,i_\gamma(d^4x)\,
        (p^{A,\epsilon u}+s_{\Sigma_t}^{A,\epsilon u})(x,y)
        i_\gamma(d^4y)\,\psi(y)
        \label{eq:claim-Dps-minus1}
        \\=&
        \lim_{\epsilon\downarrow 0}\int_{t_0}^{t_1}\int_{x\in\Sigma_t}\int_{y\in\Sigma_t}
        \overline{\phi(x)}\,i_\gamma(d^4x)\,
        \left[-i\mathcal{D}_t^A(p^{A,\epsilon u}+s_{\Sigma_t}^{A,\epsilon u})+
        \frac{\partial s_{\Sigma_t}^{A,\epsilon u}}{\partial t}\right](x,y)
        i_\gamma(d^4y)\,\psi(y)\,dt
        \label{eq:claim-Dps-0}\\=&
        \lim_{\epsilon\downarrow 0}\int_{t_0}^{t_1}\int_{x\in\Sigma_t}\int_{y\in\Sigma_t}
        \overline{\phi(x)}\,i_\gamma(d^4x)\,
        \left[
            -i\remainr{r}{r:DPplusS1}(x,y,\epsilon u)
            -i\remainr{r}{r:DPplusS2}(x,y,\epsilon u)+
        \frac{\partial s_{\Sigma_t}^{A,\epsilon u}}{\partial t}(x,y)
        \right]
        \cr&\cdot
        i_\gamma(d^4y)\,\psi(y)\,dt
        \label{eq:claim-Dps-1}\\
        =&\lim_{\epsilon\downarrow 0}
        \int_{t_0}^{t_1}
        \left[-i\sk{\phi|_{\Sigma_t},
            \remainr{R}{r:DPplusS1}^{\epsilon u}(t)
            \psi|_{\Sigma_t}}
          -i\sk{\phi|_{\Sigma_t},
            \remainr{R}{r:DPplusS2}^{\epsilon u}(t)
            \psi|_{\Sigma_t}} 
          +\sk{\phi|_{\Sigma_t},
            \dot S_{\Sigma_t}^{A,\epsilon u},\psi|_{\Sigma_t}}\right]\,dt
        \label{eq:claim-Dps-2}
    \end{align}
    In the first step from \eqref{eq:claim-Dps-ini} to
    \eqref{eq:claim-Dps-minus1} we expressed the matrix elements of the
    operators $P_\Sigma^{\lambda_A}$ and $S_\Sigma^A$ in terms of the respective
    integral kernels $p^{\epsilon u,\lambda_A}$ and $s^{\epsilon u,A}_\Sigma$
    given in
    Lemma~\ref{lem:Pminus-lambda} and part (i) of 
    Lemma~\ref{lem:SA}.  The step from
    \eqref{eq:claim-Dps-minus1} to \eqref{eq:claim-Dps-0} follows from
    Corollary~\ref{cor:time-derivative}.  The step from \eqref{eq:claim-Dps-0}
    to  \eqref{eq:claim-Dps-1} is a consequence of equation~\eqref{eq:DtPplusS}
    in Lemma~\ref{le:Dt-P+S}.  Finally, in the step from
    \eqref{eq:claim-Dps-1} to \eqref{eq:claim-Dps-2} we have used 
    that the integral kernels
    $\remainr{r}{r:DPplusS1}(\cdot,\cdot,\epsilon u)$,
    $\remainr{r}{r:DPplusS2}(\cdot,\cdot,\epsilon u)$,
    and $\partial s_{\Sigma_t}^{A,\epsilon u}/\partial t$
    give rise to bounded operators
    $\remainr{R}{r:DPplusS1}^{\epsilon u}(t)$, 
    $\remainr{R}{r:DPplusS2}^{\epsilon u}(t)$, 
    and $\dot S_{\Sigma_t}^{A,\epsilon u}$
    as ensured by Corollary~\ref{cor:r7r8} and part (ii) of 
    Lemma~\ref{lem:SA}.
    
    Claim (ii) of Corollary~\ref{cor:r7r8} implies that 
    $\remainr{R}{r:DPplusS2}^{\epsilon u}(t)$ converges to zero in operator norm
    as $\epsilon\downarrow 0$, uniformly
    in $t\in T$. Furthermore, claim (i) of Corollary~\ref{cor:r7r8}
    and part (ii) of 
    Lemma~\ref{lem:SA}
    guarantee that 
    $-i\remainr{R}{r:DPplusS1}^{\epsilon u}(t)+\dot S_{\Sigma_t}^{A,\epsilon u}$ 
    converges in the $I_2(\cH_{\Sigma_t})$ norm to a Hilbert-Schmidt operator 
    $R(t):=-i\remainr{R}{r:DPplusS1}^0(t)+\dot S_{\Sigma_t}^{A,0}$ 
    such that
    $\sup_{t\in T}\|R(t)\|_{I_2(\cH_{\Sigma_t})}<\infty$.
      Calculation \eqref{eq:claim-Dps-ini}--\eqref{eq:claim-Dps-2}
      can now be rewritten in the form of claim \eqref{eq:plambda-inf}:
      \begin{align}
        \sk{\phi|_{\Sigma_{t_1}},(P_{\Sigma_{t_1}}^A+S_{\Sigma_{t_1}}^A)\psi|_{\Sigma_{t_1}}} 
        -
        \sk{\phi|_{\Sigma_{t_0}},(P_{\Sigma_{t_0}}^A+S_{\Sigma_{t_0}}^A)\psi|_{\Sigma_{t_0}}}
        =\int_{t_0}^{t_1}\sk{\phi|_{\Sigma_t},R(t)\psi|_{\Sigma_t}}\, dt
      \end{align}
      at first for $\phi,\psi\in\CA$, but then extended by a density argument
      to $\phi,\psi\in\HA$.
    Since the operators $U_{A\Sigma}$ are unitary, we get the estimate
    \begin{align}
        & \left\|
        U_{A\Sigma_{t_1}} 
        ( 
        P_{\Sigma_{t_1}}^A+S_{\Sigma_{t_1}}^A
        )
        U_{\Sigma_{t_1}A} 
        -
        U_{A\Sigma_{t_0}} 
        ( 
        P_{\Sigma_{t_0}}^A+S_{\Sigma_{t_0}}^A
        )
        U_{\Sigma_{t_0}A} 
        \right\|_{I_2(\HA)}
        \\
        & \qquad \leq
        \int_{t_0}^{t_1}
        \|
        R(t)
        \|_{I_2(\cH_{\Sigma_t})}
        \,dt
        < \infty.
    \end{align}
    This proves the claim.
\end{proof}

\begin{proof}[Proof of Theorem~\ref{thm:evo-polclass}]
    As a consequence of Theorem~\ref{thm:inf} and Lemma~\ref{lem:SA} claim
    \eqref{eq:gen-key-prop} holds for the special case $\lambda=\lambda^A$. For
    general $\lambda\in\cG(A)$, Theorem~\ref{thm:equivalence-pol-classes}
    implies $P_\Sigma^A-P_\Sigma^\lambda\in I_2(\HSigma)$ which concludes the
    proof for the general case.
\end{proof}

\begin{appendix}
\section{Appendix}

In this appendix we provide auxiliary estimates for the covariant functions $D$,
its derivatives, and $p^-$ needed in the proofs of the main results.

\begin{lemma}[Upper bounds]
    \label{lem:upper-bounds}
    Let $u$ be a time-like four-vector.
    For all space-like $z\in\R^4$ with 
    $|z^0|\leq \Vmax |\vec z|$
    and $\epsilon\geq 0$ with $w=z+i\epsilon u\neq 0$ 
    we have the following bounds with the constant
    $\cD (\Vmax)=\frac{m}{2}\sqrt{1-\Vmax^2}$,
    reading $1/0$ as $+\infty$:
    \begin{align}
        \left| w^\mu w^\nu D(w) \right| 
        & 
        \le O_{u,\Vmax}\left( e^{-\cD |\vec z|}\right),
        \label{eq:bound-wwD}
        \\
        \left| D(w) \right| 
        & 
        \le O_{\Vmax}\left( \frac{e^{-\cD |\vec z|}}
        {|\vec z|^2}\right)
        ,
        \label{eq:bound-D}
        \\
        \left|r(w)^2 \partial_\mu D (w)\right| 
        & 
        \le  O_{u,\Vmax}\left( \frac{e^{-\cD |\vec z|}}
        {|\vec z|}\right),
        \label{eq:bound-r2partialD}
        \\
        \left| \partial_\mu D (w)\right|
        & \le O_{u,\Vmax}\left( \frac{e^{-\cD |\vec z|}}
        {|\vec z|^3\vee\epsilon^3}\right),
        \label{eq:bound-partialD}
    \end{align}
    \begin{align}
        \left| w^\nu \partial_\mu D (w)\right|
        & \le O_{u,\Vmax}\left( \frac{e^{-\cD |\vec z|}}
        {|\vec z|^2\vee\epsilon^2}\right),
        \label{eq:bound-wpartialDw}
        \\
        \left| \epsilon u^\mu \partial_\nu D (w)\right|
        & \le
        O_{u,\Vmax}\left( \frac{\sqrt{\epsilon} e^{-\cD |\vec z|}}
        {|\vec z|^{5/2}}\right),
        \label{eq:bound-uD}
        \\
        \left|\partial_\nu \left[ r(w)^2 \partial_\mu D (w) \right]\right| 
        & \leq
        O_{u,\Vmax}\left(\frac{e^{-\cD |\vec z|}}{|\vec z|^2}
        \right),
        \label{eq:bound-dr2D}
        \\
        \left\| p^-(w) \right\| 
        & \le 
        O_{u,\Vmax}\left( \frac{e^{-\cD |\vec z|}}
        {|\vec z|^3}\right).
        \label{eq:bound-pminus-R3}
    \end{align}
    For $\epsilon=0$ one may take, e.g., $u=(-1,0,0,0)$. In this case the
    $u$-dependence of the constants in
    \eqref{eq:bound-wwD}-\eqref{eq:bound-pminus-R3} drops out.
\end{lemma}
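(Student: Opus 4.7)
The strategy is to reduce every bound in the lemma to two building blocks: uniform control of $r(w)=\sqrt{-w_\mu w^\mu}$, in particular of $\re r(w)$, and standard pointwise estimates for $K_1$, $K_2$, $K_3$ (which, via Bessel identities such as $(d/dx)[K_1(x)/x]=-K_2(x)/x$, describe the derivatives of $D=f\circ r$ with $f(\xi)=-m^3K_1(m\xi)/(2\pi^2m\xi)$). Once both blocks are secured, each of \eqref{eq:bound-wwD}--\eqref{eq:bound-pminus-R3} is a short calculation via the chain rule together with $\partial_\mu r(w)=-w_\mu/r(w)$ and $|w_\mu|\le O_{u,\Vmax}(|\vec z|+\epsilon)$.

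For control of $r(w)$, I first compute
\begin{align*}
  -w_\mu w^\mu=-z_\mu z^\mu-2i\epsilon z_\mu u^\mu+\epsilon^2 u_\mu u^\mu.
\end{align*}
Since $z$ is space-like with $|z^0|\le\Vmax|\vec z|$ one has $-z_\mu z^\mu\ge(1-\Vmax^2)|\vec z|^2$, and since $u$ is time-like, $u_\mu u^\mu>0$; hence $\re(-w_\mu w^\mu)\ge(1-\Vmax^2)|\vec z|^2+\epsilon^2 u_\mu u^\mu$. The elementary identity $\re\sqrt{\zeta}=\sqrt{(|\zeta|+\re\zeta)/2}\ge\sqrt{\re\zeta}$, valid for $\re\zeta>0$, then yields simultaneously $\re(mr(w))\ge 2\cD|\vec z|$ and $\re(r(w))\ge O_u(\epsilon)$, hence also $\re r(w)\ge O_{u,\Vmax}(\sqrt{|\vec z|^2+\epsilon^2})$, together with the matching upper bound $|r(w)|\le O_{u,\Vmax}(|\vec z|+\epsilon)$.

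For the Bessel estimates, I start from the integral representation $K_1(\xi)=\xi\int_1^\infty e^{-\xi s}\sqrt{s^2-1}\,ds$ and the standard recursion relations for $K_2$, $K_3$, and interpolate between the small-argument asymptotics $K_n(x)\sim(n-1)!\,2^{n-1}x^{-n}$ and the large-argument asymptotics $K_n(x)\sim\sqrt{\pi/(2x)}\,e^{-x}$ to get the master estimate $|K_n(\xi)|\le C_n\,e^{-\re(\xi)/2}/\re(\xi)^n$ for $n=1,2,3$ and all $\re\xi>0$. The exponent $1/2$ is chosen deliberately so that $\re(mr(w))\ge 2\cD|\vec z|$ delivers exactly the decay factor $e^{-\cD|\vec z|}$ claimed in the lemma.

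Assembling the pieces: \eqref{eq:bound-D}, \eqref{eq:bound-r2partialD} and \eqref{eq:bound-wwD} follow from the master bound together with $|r(w)|\le O(|\vec z|+\epsilon)$ and, for \eqref{eq:bound-wwD}, the combined lower bound $\re r(w)\ge c\sqrt{|\vec z|^2+\epsilon^2}$; \eqref{eq:bound-partialD} and \eqref{eq:bound-wpartialDw} exploit both lower bounds $\re r(w)\ge c|\vec z|$ and $\re r(w)\ge c'\epsilon$ separately, which produces the maximum $|\vec z|^3\vee\epsilon^3$ in the denominator; \eqref{eq:bound-dr2D} uses Lemma~\ref{le:dr2dD} to rewrite the second derivative in terms of $D$, $\partial D$ and $w$-factors already controlled; \eqref{eq:bound-uD} follows from a case distinction $|\vec z|\le\epsilon$ vs.\ $|\vec z|\ge\epsilon$ combined with the algebraic inequality $\epsilon\,(|\vec z|^3\vee\epsilon^3)^{-1}\le\sqrt\epsilon\,|\vec z|^{-5/2}$; finally \eqref{eq:bound-pminus-R3} is immediate from $p^-(w)=(-i\slashed\partial+m)D(w)/(2m)$ together with \eqref{eq:bound-partialD}, the $\slashed\partial D$ term dictating the $|\vec z|^{-3}$ singularity. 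The main obstacle is uniformity: one must ensure that no constant blows up as $|\vec z|\to 0$ or $\epsilon\to 0$, and that a single exponential rate $\cD$ survives all manipulations. Committing once and for all to $\alpha=1/2$ in the master Bessel bound, rather than sacrificing a fresh sliver of exponential decay at each step to absorb polynomial factors, is the cleanest way to keep the bookkeeping honest.
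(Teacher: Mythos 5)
Your overall strategy — control of $\re r(w)$ and $|r(w)|$ in terms of $|\vec z|$ and $\epsilon$, combined with pointwise Bessel estimates, assembled term by term — is the same one the paper uses; the two building blocks you identify correspond exactly to the paper's Lemma~\ref{lem:w-and-z} together with Lemma~\ref{lem:K1-control}/Corollary~\ref{cor:propergators}, and each individual reduction you outline (the algebraic inequality behind \eqref{eq:bound-uD}, the $\slashed\partial D$ term dictating the $|\vec z|^{-3}$ singularity in \eqref{eq:bound-pminus-R3}, Lemma~\ref{le:dr2dD} for \eqref{eq:bound-dr2D}) matches the paper's computation.

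There is one genuine gap in rigor and one small structural difference worth flagging. The gap: ``interpolating between the small-argument and large-argument asymptotics'' does not establish a uniform bound for $K_n$ on the half-plane $\re\xi>0$; asymptotics at $0$ and $\infty$ constrain behavior only along rays, and for complex $\xi$ with $\re\xi$ small but $|\im\xi|$ large neither regime applies, so a pointwise proof is required. (The claimed master bound $|K_n(\xi)|\le C_n e^{-\re\xi/2}/(\re\xi)^n$ is in fact true — since $|\xi|\ge\re\xi$ it follows from a bound in terms of $|\xi|$ such as \eqref{eq:bound-K1} after absorbing $\sqrt{\re\xi}$-factors into the half-exponential — but that has to be proven, not just read off from the two asymptotic regimes.) The paper does this by writing $K_1(\xi)=\frac{e^{-\xi}}{\xi}\int_0^\infty e^{-t}\sqrt{t^2+2\xi t}\,dt$ and bounding the integrand crudely by $t+\sqrt{2|\xi|t}$, and similarly for $\partial_\xi[K_1(\xi)/\xi]$; this yields clean, globally valid bounds in $|\xi|$ that your argument would need as a substitute. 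The structural difference: you propose to reach derivatives of $D$ via the recursion to $K_2$ (and $K_3$, though since you invoke Lemma~\ref{le:dr2dD} for \eqref{eq:bound-dr2D} you only actually need $K_1$ and $K_2$), whereas the paper never touches $K_2$ or $K_3$: it bounds $\partial_\xi[K_1(\xi)/\xi]$ directly from the same shifted integral representation, and for \eqref{eq:bound-dr2D} replaces the missing second derivative by the differential equation $\xi f''+3f'-\xi f=0$ for $f=K_1(\xi)/\xi$, which is exactly the Klein–Gordon equation for $D$ in disguise. That route avoids the recursion $K_2=K_0+2K_1/\xi$, and with it the need to separately control the logarithmically singular $K_0$.
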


\begin{lemma}[Lower Bound]
    \label{lem:lower-bounds}
    For all space-like $z\in\R^4\setminus\{0\}$ 
    one has the lower bound
    \begin{align}
        \label{eq:lower-bound-pminus-R3}
        \left\| p^-(z) \right\| 
        &
        \geq
        \constl{c:n-pm-3}
        \frac{e^{-m|\vec z|}}{|\vec z|^3}
    \end{align}
    with  a positive numerical constant $\constr{c:n-pm-3}$.
\end{lemma}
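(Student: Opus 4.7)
The plan is to compute $p^-(z)$ explicitly for real space-like $z$ using the analytic continuation provided by Lemma~\ref{lemma:kern-p-minus}, reduce the operator-norm bound to a trace lower bound via the elementary inequality $\|M\|_{\mathrm{op}}^{2}\ge \tfrac14\,\mathrm{tr}(M^{*}M)$ for $4\times 4$ matrices (true since $\|M\|_{\mathrm{op}}^{2}=\sigma_{1}^{2}\ge \sum_{i=1}^{4}\sigma_{i}^{2}/4$), and then extract the claimed exponential decay from the asymptotics of the modified Bessel function $K_{2}$.

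For the first step I would start from $D(w)=-\frac{m^{2}K_{1}(mr(w))}{2\pi^{2}\,r(w)}$, cf.\ \eqref{eq:def-D}, and the Bessel-function identity $\frac{d}{d\xi}[K_{1}(\xi)/\xi]=-K_{2}(\xi)/\xi$; combined with $\partial_{\mu}r(w)=-w_{\mu}/r(w)$, the chain rule yields $\slashed\partial D(w)=-\frac{m^{3}K_{2}(mr(w))}{2\pi^{2}\,r(w)^{2}}\,\slashed w$, so that for real space-like $z$ with $r=r(z)>0$,
\[
 p^-(z)\;=\;\alpha\,\slashed z+\beta\,I,\qquad \alpha:=\frac{i\,m^{2}K_{2}(mr)}{4\pi^{2}r^{2}},\qquad \beta:=-\frac{m^{2}K_{1}(mr)}{4\pi^{2}r},
\]
with $\alpha$ purely imaginary and $\beta$ real. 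Using $(\slashed z)^{*}=\gamma^{0}\slashed z\gamma^{0}$, the tracelessness $\mathrm{tr}(\slashed z)=\mathrm{tr}((\slashed z)^{*})=0$, and the standard identity $\mathrm{tr}((\slashed z)^{*}\slashed z)=4((z^{0})^{2}+|\vec z|^{2})$, all cross terms drop out and one obtains
\[
 \mathrm{tr}(p^-(z)^{*}p^-(z))\;=\;4|\alpha|^{2}\bigl((z^{0})^{2}+|\vec z|^{2}\bigr)+4\beta^{2}\;\ge\;4r^{2}|\alpha|^{2}+4\beta^{2}\;\ge\;\frac{m^{4}K_{2}(mr)^{2}}{4\pi^{4}r^{2}},
\]
using $(z^{0})^{2}+|\vec z|^{2}\ge|\vec z|^{2}\ge r^{2}$, which holds because $|\vec z|^{2}=(z^{0})^{2}+r^{2}$ for space-like $z$. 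Hence $\|p^-(z)\|\ge \frac{m^{2}K_{2}(mr)}{4\pi^{2}r}$.

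For the concluding step I would observe that $\xi\mapsto\xi^{2}e^{\xi}K_{2}(\xi)$ is continuous and strictly positive on $(0,\infty)$, tends to $2$ as $\xi\downarrow 0$ (since $K_{2}(\xi)\sim 2/\xi^{2}$) and to $+\infty$ as $\xi\to\infty$ (since $K_{2}(\xi)\sim\sqrt{\pi/(2\xi)}\,e^{-\xi}$); therefore $K_{2}(\xi)\ge c_{0}\,e^{-\xi}/\xi^{2}$ for all $\xi>0$ with some positive numerical constant $c_{0}$. Applied at $\xi=mr$, this gives $\|p^-(z)\|\ge c_{0}\,e^{-mr}/(4\pi^{2}r^{3})$. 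Finally, for space-like $z$ one has $r\le|\vec z|$, and since $\xi\mapsto\xi^{-3}e^{-m\xi}$ is monotonically decreasing on $(0,\infty)$, replacing $r$ by $|\vec z|$ only weakens the bound, which yields the claim with $\constr{c:n-pm-3}=c_{0}/(4\pi^{2})$. No step is conceptually difficult; the only bookkeeping is the Bessel-function derivative identity and the short gamma-matrix trace computations, both of which are routine.
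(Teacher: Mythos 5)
Your proof is correct and follows essentially the same strategy as the paper: write $p^-(z)$ as a combination of $\slashed z$ and the identity, lower-bound the operator norm by the magnitude of the $\slashed z$-coefficient (the paper invokes linear independence of $\{\gamma^\mu,\mathbbm 1\}$ where you use the elementary trace inequality $\|M\|_{\mathrm{op}}^2\ge\tfrac14\operatorname{tr}(M^*M)$), and then extract the exponential lower bound from the Bessel function. The only cosmetic difference is that the paper derives the key inequality $-\tfrac{d}{d\xi}[K_1(\xi)/\xi]\ge 2e^{-\xi}/\xi^{3}$ explicitly from the integral representation (Lemma~\ref{lem:K1-control}, \eqref{eq:lower-bound-K1}), whereas you obtain the equivalent bound $K_2(\xi)\ge c_0 e^{-\xi}/\xi^2$ by a softer continuity-plus-asymptotics argument — both are valid and yield the same conclusion.
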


\ifx\arxiv\undefined
The proofs have been carried out in \cite{preprint}. However, they can also be
inferred from the asymptotic behavior of the modified Bessel function $K_1$
and its derivative given in \cite[Chapters 9.6 and 9.7]{abr65}.
\else
The proofs of these two lemmas are based on the control of the
modified Bessel function $K_1$ provided by the following lemma.

\begin{lemma}[Control of $K_1$]
    \label{lem:K1-control}
    The modified Bessel function $K_1:\R^++i\R\to\C$ defined
    in \eqref{eq:K1} has an analytic continuation 
    \begin{align}
        K_1:\C\setminus\R_0^-\to\C,
        \qquad
        K_1(\xi) 
        =
        \frac{e^{-\xi}}{\xi} \int_0^\infty e^{-t}
        \sqrt{t^2+2\xi t} \, dt
        .
        \label{eq:K1-1}
    \end{align}
    The following estimates hold for $\xi\in\C\setminus\R^-_0$: 
    \begin{align}
        \label{eq:bound-K1}
        |K_1(\xi)| 
        & \leq
        e^{-\re \xi}\left( \frac{1}{|\xi|} + \sqrt{\frac{\pi}{2|\xi|}} \right),
        \\
        \label{eq:bound-partial-K1}
        \left|\frac{\partial}{\partial \xi} \frac{K_1(\xi)}{\xi} \right|
        & \leq
        e^{-\re \xi}\left(\frac{2^{3/2}}{|\xi|^3} + 2 \sqrt{\pi}|\xi|^{-3/2}\right).
    \end{align}
    For $\xi>0$, we have the lower bounds
    \begin{align}
        K_1(\xi)\ge \frac{e^{-\xi}}{\xi},
        \qquad
        \label{eq:lower-bound-K1}
        -\frac{\partial}{\partial \xi} \frac{K_1(\xi)}{\xi}\ge 2\frac{e^{-\xi}}{\xi^3}.
    \end{align}
\end{lemma}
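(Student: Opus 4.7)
The plan is to first derive the alternative integral representation~\eqref{eq:K1-1} and use it to justify the analytic continuation. For $\re\xi>0$ one starts from $K_1(\xi)=\xi\int_1^\infty e^{-\xi s}\sqrt{s^2-1}\,ds$ and substitutes $s=1+t/\xi$, which gives $ds=dt/\xi$ and $s^2-1=(t^2+2\xi t)/\xi^2$, hence $K_1(\xi)=\frac{e^{-\xi}}{\xi}\int_0^\infty e^{-t}\sqrt{t^2+2\xi t}\,dt$. To see that this formula extends to $\C\setminus\R^-_0$, I would check that for every $t>0$ and $\xi\in\C\setminus\R^-_0$ the number $t^2+2\xi t=t(t+2\xi)$ avoids $\R^-_0$ (its imaginary part $2t\,\im\xi$ is nonzero unless $\xi$ is real, in which case $\xi>0$ forces the real part to be positive), so the principal branch of the square root is well-defined. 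The pointwise bound $|\sqrt{t^2+2\xi t}|=\sqrt{|t^2+2\xi t|}\leq\sqrt{t^2+2t|\xi|}\leq t+\sqrt{2t|\xi|}$ combined with the factor $e^{-t}$ gives absolute convergence uniformly on compact subsets of $\C\setminus\R^-_0$, so by Morera's theorem the right-hand side is holomorphic; it coincides with $K_1$ for $\re\xi>0$ by the explicit substitution, and the identity theorem then extends the equality to the full asserted domain.

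For the upper bound~\eqref{eq:bound-K1} I would apply the pointwise estimate $|\sqrt{t^2+2\xi t}|\leq t+\sqrt{2t|\xi|}$ inside~\eqref{eq:K1-1} together with $|e^{-\xi}|=e^{-\re\xi}$, using $\int_0^\infty te^{-t}\,dt=1$ and $\int_0^\infty\sqrt{t}\,e^{-t}\,dt=\Gamma(3/2)=\frac{\sqrt\pi}{2}$ to land exactly on the claimed expression. For~\eqref{eq:bound-partial-K1} I would start from $-\partial_\xi(K_1(\xi)/\xi)=\int_1^\infty s\,e^{-\xi s}\sqrt{s^2-1}\,ds$ (differentiation under the integral being justified by the same uniform decay), integrate by parts using $s\sqrt{s^2-1}\,ds=d[\frac13(s^2-1)^{3/2}]$ to obtain $\frac{\xi}{3}\int_1^\infty e^{-\xi s}(s^2-1)^{3/2}\,ds$, and apply the same substitution $s=1+t/\xi$ to rewrite this as $\frac{e^{-\xi}}{3\xi^3}\int_0^\infty e^{-t}(t^2+2\xi t)^{3/2}\,dt$. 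The convexity estimate $(t^2+2t|\xi|)^{3/2}\leq\sqrt{2}(t^3+(2t|\xi|)^{3/2})$ together with $\Gamma(4)=6$ and $\Gamma(5/2)=\frac{3\sqrt\pi}{4}$ then produces the stated upper bound.

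For the lower bounds at $\xi>0$ everything is real and positive, so one merely drops non-negative terms from the integrand: in~\eqref{eq:K1-1} the estimate $\sqrt{t^2+2\xi t}\geq t$ yields $K_1(\xi)\geq\frac{e^{-\xi}}{\xi}\int_0^\infty te^{-t}\,dt=\frac{e^{-\xi}}{\xi}$, while in the integration-by-parts representation $(t^2+2\xi t)^{3/2}\geq t^3$ gives $-\partial_\xi(K_1(\xi)/\xi)\geq\frac{e^{-\xi}}{3\xi^3}\Gamma(4)=\frac{2e^{-\xi}}{\xi^3}$. The only genuinely delicate point in the whole argument is the integration-by-parts step used for~\eqref{eq:bound-partial-K1}: a naive differentiation under the integral in~\eqref{eq:K1-1} followed by the substitution produces a mixture of four powers $|\xi|^{-2},|\xi|^{-5/2},|\xi|^{-3/2},|\xi|^{-3}$, whereas the stated bound retains only the two ``extreme'' powers $|\xi|^{-3}$ and $|\xi|^{-3/2}$; the IBP is exactly what eliminates the intermediate ones. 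Everything else reduces to elementary gamma-function computations.
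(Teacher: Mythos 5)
Your proposal is correct and follows essentially the same roadmap as the paper: derive \eqref{eq:K1-1} by the substitution $s=1+t/\xi$ for positive $\xi$, verify that $t^2+2\xi t$ stays in $\C\setminus\R^-_0$ so the principal branch is well-defined and the right-hand side is holomorphic, extend by the identity theorem, then estimate using $|\sqrt{t^2+2\xi t}|\leq t+\sqrt{2t|\xi|}$, the convexity inequality $(a+b)^{3/2}\leq\sqrt2(a^{3/2}+b^{3/2})$, and elementary Gamma-function values, with the lower bounds following by dropping nonnegative terms.

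The one genuine (and harmless) deviation is in the derivative bound. The paper differentiates $K_1(\xi)/\xi=\int_1^\infty e^{-\xi s}\sqrt{s^2-1}\,ds$, substitutes, and arrives at the kernel $(t+\xi)\sqrt{t^2+2\xi t}$, which it enlarges to $t^{1/2}(t+2|\xi|)^{3/2}$ by the ad hoc step $t+|\xi|\leq t+2|\xi|$. You instead integrate by parts once more before substituting, using $s\sqrt{s^2-1}\,ds=d\bigl[\tfrac13(s^2-1)^{3/2}\bigr]$, which directly produces the cleaner kernel $(t^2+2\xi t)^{3/2}$ and gives the constant $\sqrt\pi$ on the $|\xi|^{-3/2}$ term rather than $2\sqrt\pi$ — marginally sharper and, as you observe, structurally cleaner in that the intermediate powers never appear; the two representations are of course related by the same IBP applied under the $t$-integral instead. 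One small imprecision worth flagging: the substitution $s=1+t/\xi$ is a real reparametrization of $(1,\infty)$ only for $\xi>0$; for non-real $\xi$ (even with $\re\xi>0$) it rotates the contour, so the correct statement is that \eqref{eq:K1-1} is verified by direct substitution only for $\xi\in\R^+$ and then propagated by the identity theorem, as the paper does. Since you invoke the identity theorem anyway, this costs you nothing, but the phrase ``it coincides with $K_1$ for $\re\xi>0$ by the explicit substitution'' should be weakened to ``for $\xi>0$''.
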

 
\begin{proof}[Proof of Lemma~\ref{lem:K1-control}]
    \ifx\arxiv\defined
        The proof is given in \cite{preprint}.
    \else
    In the special case $\xi>0$,
    the representation \eqref{eq:K1-1}
    of $K_1(\xi)$ is found by substituting
    $s=t/\xi+1$ in the integral representation given
    in \eqref{eq:K1}. 
    Again for $\xi>0$, using $\sqrt{t^2+2\xi t}\ge t$,  
    formula \eqref{eq:K1-1} implies
    the first claim in \eqref{eq:lower-bound-K1} as follows:
    \begin{align}
        K_1(\xi) 
        \ge
        \frac{e^{-\xi}}{\xi} \int_0^\infty e^{-t}
        t \, dt=\frac{e^{-\xi}}{\xi} 
        \label{eq:K1-2}
    \end{align}
    Returning to general $\xi\in\C\setminus\R_0^-$, we have for 
    $t^2+2\xi t\in\C\setminus\R^-_0 = \operatorname{domain}(\sqrt\cdot)$
    for $t>0$. As a
    consequence, the dominated convergence theorem allows us to interchange derivatives
    $\partial_\xi$ and $\partial_{\overline \xi}$ with the integral over $t$
    in the right-hand side of \eqref{eq:K1-1}. Hence, the right-hand side in
    \eqref{eq:K1-1}
    is holomorphic in $\xi\in\C\setminus R^-_0$.
    In particular, using the identity theorem  for holomorphic functions,
    formula \eqref{eq:K1-1} yields the analytic continuation
    of $K_1$ from its original domain $\R^++i\R$ to the extended 
    domain $\C\setminus\R_0^-$. We know
    \begin{align}
        \left| \sqrt{t^2 + 2\xi t} \right|
        \leq 
        \sqrt{t^2 + 2|\xi|t}
        \leq t + \sqrt{2|\xi|t}
    \end{align}
    and therefore
    \begin{align}
        \left| \int_0^\infty e^{-t} \sqrt{t^2 + 2\xi t} \,dt \right|
        \leq
        \int_0^\infty e^{-t} \left( t + \sqrt{2|\xi|t} \right) dt
        =
        1
        +
        \sqrt{ \frac{\pi |\xi|}{2} }.
    \end{align}
    Substituting this bound in the representation \eqref{eq:K1-1} of $K_1$, 
    we conclude
    \begin{align}
        |K_1(\xi)| 
        \leq
        e^{-\re \xi}\left( \frac{1}{|\xi|} + \sqrt{\frac{\pi}{2|\xi|}} \right).
    \end{align}
    For the remaining bounds we consider at first again $\xi>0$ and find
    \begin{align}
        \partial_\xi \frac{K_1(\xi)}{\xi}
        =
        -\int_1^\infty e^{-\xi s} s \sqrt{s^2-1} ds
    \end{align}
    where we used the integral representation \eqref{eq:K1}.
    After the same  substitution $s=t/\xi+1$ as above  
    we get
    \begin{align}
        \partial_\xi \frac{K_1(\xi)}{\xi}
        & =
        -\frac{e^{-\xi}}{\xi^3}\int_0^\infty e^{-t} (t+\xi) \sqrt{t^2+2\xi t} \, dt
    \end{align}
    Although this formula was derived for $\xi\in \R^+$, it holds
    for all $\xi\in\C\setminus\R_0^-$ by the identity theorem for holomorphic
    functions.
    In the special case $\xi>0$, using $(t+\xi) \sqrt{t^2+2\xi t}\ge t^2$,
    we obtain the second lower bound in \eqref{eq:lower-bound-K1}
    from
    \begin{align}
        -\partial_\xi \frac{K_1(\xi)}{\xi}
        \ge \frac{e^{-\xi}}{\xi^3}\int_0^\infty e^{-t} t^2 \, dt
        = 2\frac{e^{-\xi}}{\xi^3}.
        \label{eq:partial-K1-divided-xi}
    \end{align}
    Returning to the general the case $\xi\in\C\setminus\R_0^-$, 
    we get, similarly as above, the bound
    \begin{align}
        &\left| \int_0^\infty e^{-t} (t+\xi) \sqrt{t^2+2\xi t} \, dt \right|
        \leq
        \int_0^\infty e^{-t} t^{1/2} (t+2|\xi|)^{3/2} \, dt
        \cr
        & \leq \sqrt 2
        \int_0^\infty e^{-t} t^{1/2} \left(t^{3/2}+(2|\xi|)^{3/2} \right) \, dt
        =
        2^{3/2} + 2 \sqrt{\pi}|\xi|^{3/2},
        \label{eq:jensen-applied}
    \end{align}
    where we used 
    $(a+b)^{3/2}
    \leq
    \sqrt{2} (a^{3/2} + b^{3/2})$ for
    $a,b\ge 0$, which follows from convexity.
    Substituting the bound \eqref{eq:jensen-applied} in 
    \eqref{eq:partial-K1-divided-xi}, claim \eqref{eq:bound-partial-K1}
    follows.
    \fi
\end{proof}

To translate the control of $K_1(\xi)$ into control of $D(w)$, $\partial D(w)$,
and $p^-(w)$ in terms of $z$ with $w=z+i\epsilon u$, we employ the following
auxiliary inequalities about $r(w)=\sqrt{-w_\mu w^\mu}$ defined in \eqref{eq:def-r}.
\begin{lemma}\label{lem:w-and-z}
    For all space-like $z\in\R^4$, time-like $u\in\R^4$, and $\epsilon>0$, one
    has $-w_\mu w^\mu\in\operatorname{domain}(\sqrt{\cdot})=\C\setminus\R_0^-$
    for $w=z+i\epsilon u$, and the following inequalities hold true:
    \begin{align}
        \label{eq:ineq-r1}
        & 
        r(z) \vee \epsilon \sqrt{u_\mu u^\mu} 
        \le 
        \sqrt{-z_\mu z^\mu+\epsilon^2 u_\mu
        u^\mu}\le \re r(w)\le |r(w)|, 
        \\
        \label{eq:ineq-r2}
        &1\leq \frac{|w|}{|r(w)|}\le \frac{|u|}{\sqrt{u_\mu
        u^\mu}}\frac{|z|}{r(z)}.
    \end{align}
    Moreover, for a Cauchy surface $\Sigma$ fulfilling 
    \eqref{eq:bound-grad_tSigma} and $0\neq z=y-x$ with
    $x,y\in\Sigma$ one has
    \begin{align}
        \label{eq:rz-z}
        \sqrt{1-\Vmax^2}|\vec z|
         \leq r(z)
        \leq |\vec z| 
        \leq |z|\leq \sqrt{1+\Vmax^2}|\vec z|.
    \end{align}
\end{lemma}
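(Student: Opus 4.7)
The plan is to prove the three displays one after the other, since each is a self-contained algebraic statement about $r(w)=\sqrt{-w_\mu w^\mu}$ and nothing beyond bookkeeping with the signature of the Minkowski form is needed.

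For \eqref{eq:ineq-r1}, I would first expand
\[
-w_\mu w^\mu = (-z_\mu z^\mu + \epsilon^2 u_\mu u^\mu) - 2i\epsilon\, z_\mu u^\mu.
\]
Since $z$ is space-like and $u$ is time-like, the real part is strictly positive, which is enough to place $-w_\mu w^\mu$ in $\C\setminus\R_0^-=\operatorname{domain}(\sqrt{\,\cdot\,})$ and hence to make $r(w)$ well-defined with $\re r(w)>0$. Writing $r(w)=a+ib$ with $a>0$, the identity $a^2-b^2=\re(r(w)^2)=-z_\mu z^\mu+\epsilon^2 u_\mu u^\mu$ together with $b^2\ge 0$ gives $a\ge\sqrt{-z_\mu z^\mu+\epsilon^2 u_\mu u^\mu}$. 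The remaining two inequalities in \eqref{eq:ineq-r1} are immediate: $\sqrt{A+B}\ge\sqrt A\vee\sqrt B$ for $A,B\ge 0$ delivers the leftmost bound with $A=r(z)^2$, $B=\epsilon^2 u_\mu u^\mu$, and $\re r(w)\le|r(w)|$ is trivial.

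For \eqref{eq:ineq-r2}, the lower bound $1\le|w|/|r(w)|$ follows from the triangle inequality applied componentwise:
\[
|r(w)|^2 = |w_\mu w^\mu| = \left|(w^0)^2-\sum_{i=1}^3 (w^i)^2\right| \le \sum_{\mu=0}^3 |w^\mu|^2 = |w|^2.
\]
For the upper bound, I would combine $|w|^2=|z|^2+\epsilon^2|u|^2$ with $|r(w)|^2\ge\re(-w_\mu w^\mu)=r(z)^2+\epsilon^2 u_\mu u^\mu$ (established in step~1) to reduce the claim to
\[
\frac{|z|^2+\epsilon^2|u|^2}{r(z)^2+\epsilon^2 u_\mu u^\mu}
\le
\frac{|z|^2}{r(z)^2}\cdot\frac{|u|^2}{u_\mu u^\mu}.
\]
Setting $A=|z|^2/r(z)^2$ and $B=|u|^2/u_\mu u^\mu$, both of which are $\ge 1$ (the first because $z$ is space-like, so $|z|^2=(z^0)^2+|\vec z|^2\ge|\vec z|^2-(z^0)^2=r(z)^2$; the second because $u$ is time-like, so $|u|^2\ge u_\mu u^\mu$), the inequality rearranges to $A r(z)^2(1-B)+\epsilon^2 B u_\mu u^\mu (1-A)\le 0$, which is manifest from $A,B\ge 1$.

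For \eqref{eq:rz-z}, I would use the graph parametrization $x=(t_\Sigma(\vec x),\vec x)$ and the Lipschitz bound \eqref{eq:bound-grad_tSigma} to get $|z^0|=|t_\Sigma(\vec y)-t_\Sigma(\vec x)|\le\Vmax|\vec z|$ by the mean value theorem. Then $r(z)^2=|\vec z|^2-(z^0)^2$ lies in $[(1-\Vmax^2)|\vec z|^2,|\vec z|^2]$ and $|z|^2=|\vec z|^2+(z^0)^2$ lies in $[|\vec z|^2,(1+\Vmax^2)|\vec z|^2]$, and taking square roots yields the full chain. No step here is a serious obstacle; the only mildly delicate point is the reduction in part \eqref{eq:ineq-r2} which requires keeping $r(z)^2$ and $\epsilon^2 u_\mu u^\mu$ separate rather than merging them into a single lower bound for $|r(w)|^2$.
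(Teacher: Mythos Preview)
Your proof is correct and follows essentially the same route as the paper's: both arguments hinge on $\re(r(w)^2)=-z_\mu z^\mu+\epsilon^2 u_\mu u^\mu$ together with $\re(r(w)^2)\le(\re r(w))^2\le|r(w)|^2$ for \eqref{eq:ineq-r1}, the elementary bounds $r(z)^2\le|z|^2$ and $u_\mu u^\mu\le|u|^2$ for the upper estimate in \eqref{eq:ineq-r2}, and the Lipschitz bound $|z^0|\le\Vmax|\vec z|$ for \eqref{eq:rz-z}. The only cosmetic difference is that the paper multiplies through directly in the upper bound of \eqref{eq:ineq-r2} rather than introducing your auxiliary ratios $A$ and $B$, but the underlying inequality is identical.
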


\begin{proof}
    \ifx\arxiv\defined
        The proof is given in \cite{preprint}.
    \else
    The first claim $-w_\mu w^\mu\notin\R_0^-$ follows because $z_\mu
    z^\mu<0<\epsilon^2 u_\mu u^\mu$ implies 
    $\re(w_\mu w^\mu)=z_\mu z^\mu-\epsilon^2 u_\mu u^\mu<0$.
    The second claim \eqref{eq:ineq-r1} follows from 
    $\epsilon^2 u_\mu u^\mu \leq \epsilon^2 u_\mu
    u^\mu - z_\mu z^\mu$,
    \begin{align}
        0\le r(z)^2=-z_\mu z^\mu \le \epsilon^2 u_\mu u^\mu-z_\mu z^\mu 
        = \re(r(w)^2)\le (\re r(w))^2\le |r(w)|^2,
    \end{align}
    and $\re r(w)\ge 0$.  Using $0<-z_\nu z^\nu\le |z|^2$ and
    $0<u_\mu u^\mu\le|u|^2$ one has
    \begin{align}
         r(z)^2 \, u_\mu u^\mu \, |w|^2
        &=-z_\nu z^\nu \, u_\mu u^\mu \, (\epsilon^2 |u|^2+|z|^2) 
        \nonumber\\
        &\le
        |u|^2|z|^2(\epsilon^2 u_\nu u^\nu-z_\nu z^\nu)
        =
        |u|^2|z|^2\re(r(w)^2)
        \le
        |u|^2|z|^2|r(w)|^2,
    \end{align}
    which proves the upper bound in \eqref{eq:ineq-r2}.
    The lower bound in \eqref{eq:ineq-r2} follows from
    $|r(w)|^2=|r(w)^2|\leq |w|^2$.
    The claim \eqref{eq:rz-z} follows from the uniform Lipschitz
    continuity stated in 
    inequalities \eqref{eq:Vmax} and
    \eqref{eq:bound-grad_tSigma}.
    \fi
\end{proof}

Mainly we will use the content of Lemma~\ref{lem:K1-control} in the form of the
following corollary of it.

\begin{corollary}\label{cor:propergators}
    For all $w\in\C^4$ with $-w_\mu w^\mu \in \C\setminus\R^-_0$ and all
    $\mu=0,1,2,3$ it holds
    \begin{align}
        \label{eq:bound-D-v1}
        \left| D(w) \right| 
        &\leq 
        e^{-m\re r(w)}   
        O\left( \frac{1}{|mr(w)|^2} +\frac{1}{|mr(w)|^{3/2}}
        \right)
        ,
        \\
        \label{eq:bound-partial-D}
        \left| \frac{\partial D}{\partial w^\mu}(w) \right| 
        &\leq 
        e^{-m\re r(w)}
        O \left( \frac{|w^\mu|}{|mr(w)|^4} +\frac{|w^\mu|}{|mr(w)|^{5/2}}\right)
        .
    \end{align}
    Recall that in our units $\hbar=1=c$, $m r(w)$ is dimensionless.
\end{corollary}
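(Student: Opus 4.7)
The plan is to derive both inequalities by direct substitution into the closed-form expression $D(w)=-\frac{m^3}{2\pi^2}\frac{K_1(mr(w))}{mr(w)}$ established in Lemma~\ref{lemma:kern-p-minus}, combined with the quantitative control of the modified Bessel function $K_1$ provided by Lemma~\ref{lem:K1-control}. The first step is to verify that $mr(w)$ lies in the domain $\C\setminus\R_0^-$ where Lemma~\ref{lem:K1-control} applies: by the principal-value convention, if $-w_\mu w^\mu=\rho e^{2i\varphi}$ with $\rho>0$ and $|\varphi|<\pi/2$, then $r(w)=\sqrt{\rho}\,e^{i\varphi}$ has strictly positive real part, so $mr(w)\in\C\setminus\R_0^-$.

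For the first bound~\eqref{eq:bound-D-v1}, I will substitute $\xi=mr(w)$ directly into~\eqref{eq:bound-K1}, obtaining
\begin{align*}
|D(w)|=\frac{m^3}{2\pi^2}\,\frac{|K_1(mr(w))|}{|mr(w)|}
\leq \frac{m^3}{2\pi^2}\,\frac{e^{-m\re r(w)}}{|mr(w)|}\left(\frac{1}{|mr(w)|}+\sqrt{\frac{\pi}{2|mr(w)|}}\right),
\end{align*}
which is exactly the asserted expression once $m$ is absorbed into the $O(\cdot)$ constant (per the convention that $m$ counts as a numerical constant).

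For the second bound~\eqref{eq:bound-partial-D}, set $F(\xi):=K_1(\xi)/\xi$ so that $D(w)=-\frac{m^3}{2\pi^2}F(mr(w))$. Differentiating $r(w)^2=-w_\nu w^\nu$ gives $\partial r(w)/\partial w^\mu=-w_\mu/r(w)$ (valid on $\operatorname{domain}(r)$ by the identity theorem, since it holds where $r$ is manifestly smooth), whence the chain rule yields
\begin{align*}
\frac{\partial D}{\partial w^\mu}(w)=\frac{m^3}{2\pi^2}\,F'(mr(w))\,\frac{m\,w_\mu}{r(w)}.
\end{align*}
Applying bound~\eqref{eq:bound-partial-K1} with $\xi=mr(w)$, using $|w_\mu|=|w^\mu|$ (the Minkowski metric is $\pm 1$-diagonal), and rewriting $m/|r(w)|=m^2/|mr(w)|$ produces one extra factor of $|mr(w)|^{-1}$, converting the exponents $3$ and $3/2$ in~\eqref{eq:bound-partial-K1} into the claimed exponents $4$ and $5/2$.

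No serious obstacle is expected; the argument is a routine chain-rule computation whose only care points are the domain check for $mr(w)$ and the bookkeeping that combines the prefactors $m^3/(2\pi^2)$ and $m/r(w)$ with the dimensionless $|mr(w)|$ appearing in Lemma~\ref{lem:K1-control}. Since the conventions treat $m$ as a numerical constant, all $m$-dependent prefactors disappear into the implicit constants of the $O(\cdot)$ notation.
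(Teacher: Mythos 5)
Your proposal is correct and is essentially the same as the paper's proof, which likewise obtains~\eqref{eq:bound-D-v1} directly from~\eqref{eq:def-D} and~\eqref{eq:bound-K1}, and~\eqref{eq:bound-partial-D} from~\eqref{eq:bound-partial-K1} together with $\partial r(w)/\partial w^\mu=-w_\mu/r(w)$. Your explicit verification that $mr(w)$ lies in the domain of $K_1$ and your bookkeeping of the factor $m/|r(w)|=m^2/|mr(w)|$ fill in exactly the steps the paper leaves implicit.
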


\begin{proof}
    Bound \eqref{eq:bound-D-v1} is an immediate consequence of the definition of
    $D$ in \eqref{eq:def-D} and bound \eqref{eq:bound-K1}.
    Similarly, bound \eqref{eq:bound-partial-D} follows from
    the bound \eqref{eq:bound-partial-K1} and the fact
    $\partial r(w) / \partial w^\mu = - w_\mu/r(w)$.
\end{proof}

Finally, we prove the main lemmas of this appendix.

\begin{proof}[Proof of Lemma~\ref{lem:upper-bounds}]
It suffices to consider the case $\vec z\neq 0$; note that most of the
claims are trivial in the case $\vec z=0$. Furthermore, it suffices to regard
only the case $\epsilon>0$. The remaining bounds for $\epsilon=0$ 
or $\vec z=0$ can be
inferred from continuity.
Recall that
    $\cD (\Vmax)=\frac{m}{2}\sqrt{1-\Vmax^2}$.
    The basic estimates needed are
    \begin{align}
        \label{eq:basic-exp}
        m\re r(w)&\ge 2\cD  |\vec z|\vee \constl{c:eps}\epsilon\ge 
        2\cD  |\vec z| 
        \quad\text{with } \constr{c:eps}(u):=m\sqrt{u_\mu u^\mu},
        \\
        \label{eq:basic-z}
        |r(w)|^{-1} &\leq (1-\Vmax^2)^{-1/2} |\vec z|^{-1}=
        O_{\Vmax}(|\vec z|^{-1}),\\
        \label{eq:basic-epsilon}
        |r(w)|^{-1} &\leq \epsilon^{-1}(u_\mu u^\mu)^{-1/2}=O_u(\epsilon^{-1}),\\
        \frac{|w|}{|r(w)|} &\leq \frac{|u|}{\sqrt{u_\mu u^\mu}}
        \sqrt{\frac{1+\Vmax^2}{1-\Vmax^2}} =O_{u,\Vmax}(1),
        \label{eq:basic-frac}\\
        |w|&\le |z|+|u|\epsilon=O_{u,\Vmax}(|\vec z|\vee \epsilon),
        \label{eq:basic-w-upper}
        \\
        |\vec z|^ae^{-2\cD  |\vec z|}&\le O_{a,b}(|\vec z|^be^{-\cD  |\vec z|}),
        \label{eq:basic-comp}
        \\
        (|\vec z|\vee\epsilon)^a e^{-(2\cD  |\vec z|\vee\constr{c:eps}\epsilon)}&\le 
        O_{a,b,u,\Vmax}((|\vec z|\vee\epsilon)^be^{-\cD  |\vec z|})
        \label{eq:basic-comp-eps}
        \quad\mbox{ for fixed } a\ge b.
      \end{align}
    The first three inequalities \eqref{eq:basic-exp}, \eqref{eq:basic-z},
    and \eqref{eq:basic-epsilon} are consequences of 
    \eqref{eq:ineq-r1} and \eqref{eq:rz-z},
    the fourth one \eqref{eq:basic-frac} of
    \eqref{eq:ineq-r2} and \eqref{eq:rz-z}.      
    Inequality \eqref{eq:basic-w-upper}
    is a consequence of $w=z+i\epsilon u$ and \eqref{eq:rz-z}.
     Finally, the last two inequalities
     \eqref{eq:basic-comp} and \eqref{eq:basic-comp-eps} follow
     from $\sup_{x>0}x^{a-b}e^{-x}<\infty$.

    \noindent
    \emph{Bound \eqref{eq:bound-wwD}:} Using \eqref{eq:bound-D-v1} and then
    \eqref{eq:basic-exp}, 
    \eqref{eq:basic-frac}, \eqref{eq:basic-w-upper}, 
    and \eqref{eq:basic-comp-eps}, we find
    \begin{align}
      \label{eq:proof-bound-wwD}
        |w^\mu w^\nu D(w)| &\leq 
        e^{-m\re r(w)}O\left( \frac{|w|^2}{|mr(w)|^2} +\frac{|w|^{3/2}}{|mr(w)|^{3/2}}|w|^{1/2} 
        \right)
        \cr&\le
        e^{-(2\cD  |\vec z|\vee \constr{c:eps}\epsilon)}O_{u,\Vmax}\left(1 +(|\vec z|\vee\epsilon)^{1/2}
        \right)
        \le O_{u,\Vmax}\left( e^{-\cD |\vec z|}\right).
    \end{align}
    \emph{Bound \eqref{eq:bound-D}:}
    Similarly, we infer from 
    \eqref{eq:bound-D-v1}, \eqref{eq:basic-exp}, \eqref{eq:basic-z}, 
    and \eqref{eq:basic-comp} that
    \begin{align}
      \label{eq:proof-bound-D}
        |D(w)| \leq  
        e^{-2\cD |\vec z|}O_{\Vmax}\left( |\vec z|^{-2} +|\vec z|^{-3/2}\right)
        \le O_{\Vmax}\left( \frac{e^{-\cD |\vec z|}}
          {|\vec z|^2}\right).
    \end{align}
    \emph{Bound \eqref{eq:bound-r2partialD}:} Using bound 
    \eqref{eq:bound-partial-D} and then 
    \eqref{eq:basic-exp}, \eqref{eq:basic-z}, \eqref{eq:basic-frac}, 
     \eqref{eq:basic-w-upper}, and \eqref{eq:basic-comp-eps} gives
    \begin{align}
        \left| r(w)^2 \frac{\partial D(w)}{\partial w^\mu}\right|
        &\leq 
        e^{-m\re r(w)}
        O \left( \frac{|w^\mu|}{|mr(w)|}|mr(w)|^{-1} +
          \frac{|w^\mu|^{1/2}}{|mr(w)|^{1/2}}|w^\mu|^{1/2}\right)
        \cr&\le 
        e^{-(2\cD|\vec z|\vee\constr{c:eps}\epsilon)}
        O_{u,\Vmax} \left( |\vec z|^{-1} +(|\vec z|\vee\epsilon)^{1/2}\right)
        \le  O_{u,\Vmax}\left( \frac{e^{-\cD |\vec z|}}
          {|\vec z|}\right).
    \end{align}
    \emph{Bound \eqref{eq:bound-partialD}:}
    Using \eqref{eq:bound-partial-D} and then
     \eqref{eq:basic-exp}, \eqref{eq:basic-z}, \eqref{eq:basic-epsilon}, 
     \eqref{eq:basic-frac}, and \eqref{eq:basic-comp-eps}, we obtain
     \begin{align}
       &\left| \frac{\partial D}{\partial w^\mu}(w) \right| 
        \leq 
        e^{-m\re r(w)}
        O \left( \frac{|w^\mu|}{|mr(w)|^4} +\frac{|w^\mu|}{|mr(w)|^{5/2}}\right)
        \cr&\le 
        e^{-(2\cD|\vec z|\vee\constr{c:eps}\epsilon)}
        O_{u,\Vmax}
        \left( (|\vec z|\vee \epsilon )^{-3} +
          (|\vec z|\vee \epsilon )^{-3/2}\right)
        \le O_{u,\Vmax}\left( \frac{e^{-\cD |\vec z|}}
          {|\vec z|^3\vee\epsilon^3}\right)
        .
        \end{align}
    \emph{Bound \eqref{eq:bound-wpartialDw}:}
    The claim follows immediately from the bounds
    \eqref{eq:bound-partialD} and \eqref{eq:basic-w-upper}:
     \begin{align}
       \label{eq:proof-bound-wpartialDw}
       \left| w^\nu \frac{\partial D}{\partial w^\mu}(w) \right| 
        &\leq O_{u,\Vmax}(|\vec z|\vee \epsilon)
        O_{u,\Vmax}\left( \frac{e^{-\cD |\vec z|}}
          {|\vec z|^3\vee\epsilon^3}\right)
        = O_{u,\Vmax}\left( \frac{e^{-\cD |\vec z|}}
          {|\vec z|^2\vee\epsilon^2}\right)
        .
      \end{align}
    \emph{Bound \eqref{eq:bound-uD}:}
    From \eqref{eq:bound-partialD} we get
     \begin{align}
       \left| \epsilon u^\mu\frac{\partial D}{\partial w^\mu}(w) \right| 
        &\leq 
         O_{u,\Vmax}\left( \frac{\epsilon e^{-\cD |\vec z|}}
          {|\vec z|^3\vee\epsilon^3}\right)
        \le
        O_{u,\Vmax}\left( \frac{\sqrt{\epsilon} e^{-\cD |\vec z|}}
          {|\vec z|^{5/2}}\right)    
        .
        \end{align}
    \emph{Bound \eqref{eq:bound-dr2D}:}
    For $f(\xi):= K_1(\xi)/\xi$, we have the differential equation
    \begin{align}
      \label{eq:modified-modified-Bessel-diffeq}
      \xi f''(\xi)+3f'(\xi)-\xi f(x)=0
      \quad\text{ for } \xi\in\C\setminus\R^-_0.
    \end{align}
    This follows from the modified Bessel differential equation
    \begin{align}
      \xi^2 K_1''(\xi)+\xi K_1'(\xi)-(\xi^2+1) K_1(\xi)=0
    \end{align}
    or alternatively for $\re \xi>0$ from the integral representation
    \eqref{eq:K1} of $K_1$ via
    \begin{align}
      \xi f''(\xi)+3f'(\xi)-\xi f(x)
      &=\int_1^\infty e^{-\xi s}(\xi s^2-3s-\xi)\sqrt{s^2-1}\,ds
      \cr&=\int_1^\infty \frac{d}{ds}[-e^{-\xi s}(s^2-1)^{3/2}]\,ds =0
    \end{align}
    and then in the whole domain 
    $\xi\in\C\setminus\R^-_0$ via analytic continuation.
    Using $D(w)=-\constl{c:vorfaktor-D}f(m r(w))$ with the constant
    $\constr{c:vorfaktor-D}=m^3/(2\pi^2)$, cf.~\eqref{eq:def-D},
    and $\partial r(w)/\partial w^\mu=-w_\mu/r(w)$,
    we have
    \begin{align}
      \frac{\partial}{\partial w^\mu}D(w)=\constr{c:vorfaktor-D}m\frac{w_\mu}{r(w)}f'(m r(w))
    \end{align}
    and hence, using the differential 
    equation~\eqref{eq:modified-modified-Bessel-diffeq}
    in the second equality and $r(w)^2=-w^\kappa w_\kappa$ in the third equality:
    \begin{align}
      \frac{\partial}{\partial w^\nu}
      \left(r(w)^2\frac{\partial}{\partial w^\mu}D(w)\right)
      &=\constr{c:vorfaktor-D}m\left[\left(g_{\mu\nu}r(w)-\frac{w_\mu w_\nu}{r(w)}\right) f'(mr(w))
        -m w_\mu w_\nu f''(mr(w))
        \right]
        \nonumber\\
        &=\constr{c:vorfaktor-D}m
        \left[\left(g_{\mu\nu}r(w)+2\frac{w_\mu w_\nu}{r(w)}\right) f'(mr(w))
        -m w_\mu w_\nu f(mr(w))\right]
        \nonumber\\
        &=-g_{\mu\nu}w^\kappa\frac{\partial}{\partial w^\kappa}D(w)
          +2 w_\nu\frac{\partial}{\partial w^\mu}D(w)+m^2w_\mu w_\nu D(w)
          .
    \end{align}
Using the bound \eqref{eq:bound-wpartialDw}, the next to last expression 
in the bound \eqref{eq:proof-bound-wwD}, and then
\eqref{eq:basic-comp-eps}, this implies
the claimed bound \eqref{eq:bound-dr2D} as follows:
\begin{align}
  &\left|\frac{\partial}{\partial w^\nu}
  \left(r(w)^2\frac{\partial}{\partial w^\mu}D(w)\right)\right|\le 
  \cr&
  O_{u,\Vmax}\left( \frac{e^{-\cD |\vec z|}}
        {|\vec z|^2\vee\epsilon^2}\right)+
e^{-(2\cD  |\vec z|\vee\constr{c:eps}\epsilon)}O_{u,\Vmax}\left(1 +(|\vec z|\vee\epsilon)^{1/2}\right)
  \le        
  O_{u,\Vmax}\left(\frac{e^{-\cD |\vec z|}}{|\vec z|^2}
  \right).
\end{align}
\emph{Bound \eqref{eq:bound-pminus-R3}:}
The claimed bound follows from the definition \eqref{eq:pminus}
of $p^-$, using the just proven bound 
\eqref{eq:bound-partialD}, the last but one expression in the
bound \eqref{eq:proof-bound-D}, and
\eqref{eq:basic-comp}:
\begin{align}
    &\left\| p^-(w) \right\| 
    \leq
    O\left(m^{-1}\sum_{\mu=0}^4 \left|\frac{\partial D(w)}{\partial w^\mu}\right|+
        |D(w)|\right)
    \cr&
    \le  O_{u,\Vmax}\left( \frac{e^{-\cD |\vec z|}}
    {(|\vec z|\vee\epsilon)^3}\right)+
    e^{-2\cD |\vec z|}O_{\Vmax}\left( |\vec z|^{-2} +|\vec z|^{-3/2}\right)
    \le 
    O_{u,\Vmax}\left( \frac{e^{-\cD |\vec z|}}
    {|\vec z|^3}\right).
\end{align}
\end{proof}

\begin{proof}[Proof of Lemma~\ref{lem:lower-bounds}]
    Recall the definition of $p_-$ and $D$ in \eqref{eq:pminus} and
    \eqref{eq:def-D}, and the linear independence of the five matrices
    $\gamma^0,\gamma^1,\gamma^2,\gamma^3,1\in\C^{4\times 4}$.
    Using
    $f(\xi):= K_1(\xi)/\xi$ as in the previous proof,
    the bound in
    \eqref{eq:lower-bound-K1}
    involving $f'$,
    and the fact that $0 < r(z)\leq |\vec z|\le |z|$, cf. \eqref{eq:rz-z},
    we obtain the following for $z\in \R^4\setminus\{0\}$  
    with appropriate numerical constants:
    \begin{align}
        \| p_-(z) \|
        &\geq
        \constl{c:n-pm-1}
        \left(|(m)^{-1}\partial D(z)| + |D(z)|\right)
        \geq
        \constl{c:n-pm-2}
        \left|\frac{z}{r(z)} f'(mr(z))\right|
        \\
        &\geq
        \constr{c:n-pm-2}
        \left|f'(mr(z))\right|
        \geq 
        \constr{c:n-pm-2}
        \frac{e^{-mr(z)}}{(mr(z))^3}
        \geq 
        \constr{c:n-pm-3}
        \frac{e^{-m|\vec z|}}{|\vec z|^3}.
    \end{align}
\end{proof}
\fi
\end{appendix}

\paragraph{Acknowledgment} This work was partially funded by the Elite Network
of Bavaria through the Junior Research Group “Interaction between Light and
Matter”.


\ifx\arxiv\undefined

\else

\fi

\end{document}